\documentclass[10pt,journal,onecolumn,a4paper]{IEEEtran}
\usepackage{amsfonts,amssymb,amsthm,amstext}
\usepackage{cite,color,dsfont,enumerate,hyperref,makeidx,multicol,verbatim,xfrac,xr,url}
\usepackage[cmex10]{amsmath}
\usepackage{pict2e}
\usepackage[mathscr]{euscript}
\usepackage[geometry]{ifsym}
\hypersetup{
pdftitle		={A Simple Derivation of the Refined Sphere Packing Bound Under Certain Symmetry Hypotheses},
pdfauthor		={Bar{\i}s Nakiboglu},
pdfsubject		={Refined SPB},
pdfkeywords		={-}, 
}
\hypersetup{hidelinks}
\title{A Simple Derivation of the Refined Sphere Packing Bound Under Certain Symmetry Hypotheses}
\author{Bar\i\c{s} Nakibo\u{g}lu\\ \small{\href{mailto:bnakib@metu.edu.tr}{bnakib@metu.edu.tr}}
\thanks{This work is supported 
by  The Science Academy, Turkey, under The
Science Academy's Young Scientist Awards Program (BAGEP)
and by The Scientific and Technological Research Council 
of Turkey (T\"{U}B\.{I}TAK) under Grant 119E053.
}\thanks{
This paper was presented in part at 
the 2019 IEEE International Symposium on Information Theory   
\cite{nakiboglu19-ISIT}.}
\vspace{-0.28cm}
}
\interdisplaylinepenalty=2500 
\externaldocument[A-]{main-A}
\externaldocument[B-]{main-B}
\externaldocument[C-]{main-C}
\externaldocument[D-]{main-D}
\newcommand{\set} [1]			{{\mathscr{{#1}}}}
\newcommand{\alg}[1]			{{\mathcal{{#1}}}}
\newcommand{\rndv}[1]      {{\mathsf{{#1}}}}

\newcommand{\msr}[1]       {{\it    {{#1}}}}
\newcommand{\cnst}[1]      {{\mathit{{#1}}}}

%%%%%%%%%%%%%%%%%%%%%%%%%%%%%%%%%%%%%%%%%%%%%%%%%%%%%%%%%%%%%%%%%%%%%%%%%%%%%%%%%%%%%%%%%%%%%%%%%%%%%%%%%%%%%%%%%%%%%%%%
%\newcommand{\naturals}[0]	{{\mathbb{N}}}

\newcommand{\integers}[1]	{{\mathbb{Z}}_{^{{#1}}}}

\newcommand{\reals}[1]		{{\mathbb{R}}_{^{{#1}}}}
%\newcommand{\reals}[1]		{{\Re}_{^{{#1}}}}
%%%%%%%%%%%%%%%%%%%%%%%%%%%%%%%%%%%%%%%%%%%%%%%%%%%%%%%%%%%%%%%%%%%%%%%%%%%%%%%%%%%%%%%%%%%%%%%%%%%%%%%%%%%%%%%%%%%%%%%%
\newcommand{\bigo} [1]     {{\cnst{O}\left({{#1}}\right)}}
\newcommand{\smallo}[1]    {{\cnst{o}\left({{#1}}\right)}}
\newcommand{\bigtheta}[1]  {{\cnst{\Theta}\left({{#1}}\right)}}

%%%%%%%%%%%%%%%%%%%%%%%%%%%%%%%%%%%%%%%%%%%%%%%%%%%%%%%%%%%%%%%%%%%%%%%%%%%%%%%%%%%%%%%%%%%%%%%%%%%%%%%%%%%%%%%%%%%%%%%%

\renewcommand{\vec}[1]     {\overrightarrow{{#1}}}

\newcommand{\dif}[1]       {{\mathrm{d}{#1}}}  
%%%%%%%%%%%%%%%%%%%%%%%%%%%%%%%%%%%%%%%%%%%%%%%%%%%%%%%%%%%%%%%%%%%%%%%%%%%%%%%%%%%%%%%%%%%%%%%%%%%%%%%%%%%%%%%%%%%%%%%%
%%%%%%%%%%%%%%%%%%%%%%%%%%%%%%%%%%%%%%%%%%%%%%%%%%%%%%%%%%%%%%%%%%%%%%%%%%%%%%%%%%%%%%%%%%%%%%%%%%%%%%%%%%%%%%%%%%%%%%%%
\newcommand{\der}[2]        {\tfrac{\dif{#1}}{\dif{#2}}}  
\newcommand{\pder}[2]       {\tfrac{\partial{#1}}{\partial{#2}}}

\newcommand{\DEF}[0]			{{\!\!~\triangleq\!~}}  
\newcommand{\mtimes}[0]			{{\circledast}}

\newcommand{\AC}[0]            {{\prec}}  
  
\newcommand{\NAC}[0]           {{\nprec}}  
\newcommand{\abs}[1]           {{\left\lvert{{#1}}\right\lvert}}

\newcommand{\lon}[1]           {{{\left\lVert{{#1}}\right\lVert}}} 

%\newcommand{\norm}[2]          {{{\left\lVert{{#1}}\right\lVert}_{{#2}}}}
%%%%%%%%%%%%%%%%%%%%%%%%%%%%%%%%%%%%%%%%%%%%%%%%%%%%%%%%%%%%%%%%%%%%%%%%%%%%%%%%%%%%%%%%%%%%%%%%%%%%%%%%%%%%%%%%%%%%%%%%
\newcommand{\IND}[1]           {{\mathds{1}_{\{#1\}}}}    
\newcommand{\ind}[0]           {{\imath}}
\newcommand{\jnd}[0]           {{\jmath}}
\newcommand{\knd}[0]           {{\kappa}}
\newcommand{\tin}[0]           {{\cnst{t}}}
\newcommand{\blx}[0]           {{\cnst{n}}}

%%%%%%%%%%%%%%%%%%%%%%%%%%%%%%%%%%%%%%%%%%%%%%%%%%%%%%%%%%%%%%%%%%%%%%%%%%%%%%%%%%%%%%%%%%%%%%%%%%%%%%%%%%%%%%%%%%%%%%%%
\newcommand{\PXS}[2]         {{\bf P}_{{#1}}\!\left[{#2}\right]}
\newcommand{\EXS}[2]         {{\bf E}_{{#1}}\!\left[{#2}\right]}
\newcommand{\VXS}[2]         {{\bf V}_{{#1}}\!\left[{#2}\right]}

\newcommand{\PX}[1]          {\PXS{\!}{{#1}}}                      %Probability
\newcommand{\EX}[1]          {\EXS{\!}{{#1}}}                      %Expectation
                      %Variation 

  %Conditional Probability
  %Conditional Expectation
  %Conditional Expectation
%%%%%%%%%%%%%%%%%%%%%%%%%%%%%%%%%%%%%%%%%%%%%%%%%%%%%%%%%%%%%%%%%%%%%%%%%%%%%%%%%%%%%%%%%%%%%%%%%%%%%%%%%%%%%%%%%%%%%%%% 
%\newcommand{\Lp}[2]          {{{\alg{L}}}^{{#1}}({#2})}   
%\newcommand{\Lon}[1]         {{\Lp{1}{#1}}}
\newcommand{\fX}[0]          {{\cnst{f}}}

\newcommand{\gX}[0]          {{\cnst{g}}}   
\newcommand{\GX}[0]          {{\cnst{G}}}   
\newcommand{\hX}[0]          {{\cnst{h}}}

\newcommand{\SGEX}[0]        {{\cnst{E}_{\cnst{L}}}}

%%%%%%%%%%%%%%%%%%%%%%%%%%%%%%%%%%%%%%%%%%%%%%%%%%%%%%%%%%%%%%%%%%%%%%%%%%%%%%%%%%%%%%%%%%%%%%%%%%%%%%%%%%%%%%%%%%%%%%%%%%%%%%%%%%%%%%%%%%%%%%%%%%%%%%%%%%%%%%%%%%%%%%%%%%%%%%%%%%%%%%%%%%%%%%%%%%%%%%%%%%%%%%%%

\newcommand{\RD}[3]				
{{\cnst{D}}_{{#1}}            \!\left(\left.            \! {#2}\right\Vert {#3}                  \right)}
\newcommand{\CRD}[4]			
{{\cnst{D}}_{{#1}}            \!\left(\left.\!\left.    \! {#2}\right\Vert {#3} \right\vert{{#4}}\right)}

\newcommand{\RMI}[3]	{{\cnst{I}}_{{#1}}        \!\left(        \! {#2};  \!{#3}                \!\right)}

\newcommand{\CX}[1]				{{\cnst{C}}_{{#1}}}
\newcommand{\RC}[2]				{{\cnst{C}}_{{#1},{#2}}}

\newcommand{\CRC}[3]			{{\cnst{C}}_{{#1},{#2},{#3}}}

\newcommand{\htdelta}[0] {{\cnst{\varDelta}}}
\newcommand{\composition}[0]	{{\cnst{\varUpsilon}}}
\newcommand{\costf}[0]			{{\cnst{\rho}}}
\newcommand{\costc}[0]			{{\cnst{\varrho}}}

\newcommand{\rfm}[0]			{{{\msr{\nu}}}}

\newcommand{\cset}[0]			{{\set{A}}}

\newcommand{\dspe}[1]			{{\cnst{E}_{sp\!}'}       \left({#1}\right)}

\newcommand{\spe}[1]			{{\cnst{E}_{sp\!}}       \left({#1}\right)}

\newcommand{\rate}[0]			{{\cnst{R}}}
%%%%%%%%%%%%%%%%%%%%%%%%%%%%%%%%%%%%%%%%%%%%%%%%%%%%%%%%%%%%%%%%%%%%%%%%%%%%%%%%%%%%%%%%%%%%%%%%%%%%%%%%%%%%%%%%%%%
%%%%%%%%%%%%%%%%%%%%%%%%%%%%%%%%%%%%%%%%%%%%%%%%%%%%%%%%%%%%%%%%%%%%%%%%%%%%%%%%%%%%%%%%%%%%%%%%%%%%%%%%%%%%%%%%%%%
\newcommand{\GCD}[1]			{{{{\cnst{\Phi}}}\left({{#1}}\right)}}

\newcommand{\cln}[1]          {{{\xi}_{{#1}}}}
\newcommand{\cla}[2]          {{{\xi}_{{#1}}^{{#2}}}}
%%%%%%%%%%%%%%%%%%%%%%%%%%%%%%%%%%%%%%%%%%%%%%%%%%%%%%%%%%%%%%%%%%%%%%%%%%%%%%%%%%%%%%%%%%%%%%%%%%%%%%%%%%%%%%%%%%%%%%
%%%%%%%%%%%%%%%%%%%%%%%%%%%%%%%%%%%%%%%%%%%%%%%%%%%%%%%%%%%%%%%%%%%%%%%%%%%%%%%%%%%%%%%%%%%%%%%%%%%%%%%%%%%%%%%%%%%%%%

\newcommand{\rno}[0]          {{\cnst{\alpha}}}
\newcommand{\rnt}[0]          {{\cnst{\eta}}}
\newcommand{\rns}[0]          {\rno^{\!\ast}}
%%%%%%%%%%%%%%%%%%%%%%%%%%%%%%%%%%%%%%%%%%%%%%%%%%%%%%%%%%%%%%%%%%%%%%%%%%%%%%%%%%%%%%%%%%%%%%%%%%%%%%%%%%%%%%%%%%%%%%%% 
%%%%%%%%%%%%%%%%%%%%%%%%%%%%%%%%%%%%%%%%%%%%%%%%%%%%%%%%%%%%%%%%%%%%%%%%%%%%%%%%%%%%%%%%%%%%%%%%%%%%%%%%%%%%%%%%%%%%%%%%
                  
\newcommand{\Pe}[0]            {{\it P_{{{\bf e}}}}}      
                  
\newcommand{\Pem}[1]           {{\it P_{{{\bf e}}}^{{#1}}}}         
 
\newcommand{\enc}[0]           {{\varPsi}} 
\newcommand{\dec}[0]           {{\varTheta}}    

%%%%%%%%%%%%%%%%%%%%%%%%%%%%%%%%%%%%%%%%%%%%%%%%%%%%%%%%%%%%%%%%%%%%%%%%%%%%%%%%%%%%%%%%%%%%%%%%%%%%%%%%%%%%%%%%%%%%%%%%
\newcommand{\brl}[0]           {{\alg{B}}}
\newcommand{\rborel}[1]        {{\brl}({#1})}

%%%%%%%%%%%%%%%%%%%%%%%%%%%%%%%%%%%%%%%%%%%%%%%%%%%%%%%%%%%%%%%%%%%%%%%%%%%%%%%%%%%%%%%%%%%%%%%%%%%%%%%%%%%%%%%%%%%%%%%% 

\newcommand{\oev}[0]           {{\set{E}}}

\newcommand{\GausDen}[1]		{{{{\cnst{\varphi}}}_{{#1}}}}
%%%%%%%%%%%%%%%%%%%%%%%%%%%%%%%%%%%%%%%%%%%%%%%%%%%%%%%%%%%%%%%%%%%%%%%%%%%%%%%%%%%%%%%%%%%%%%%%%%%%%%%%%%%%%%%%%%%%%%%%

\newcommand{\pmea}[1]          {{{\alg{P}}({#1})}}

\newcommand{\pdis}[1]          {{{\set{P}}({#1})}}

%%%%%%%%%%%%%%%%%%%%%%%%%%%%%%%%%%%%%%%%%%%%%%%%%%%%%%%%%%%%%%%%%%%%%%%%%%%%%%%%%%%%%%%%%%%%%%%%%%%%%%%%%%%%%%%%%%%%%%%%
\newcommand{\dinp}[0]          {{\cnst{x}}}

\newcommand{\inpS}[0]          {{\set{X}}}

%%%%%%%%%%%%%%%%%%%%%%%%%%%%%%%%%%%%%%%%%%%%%%%%%%%%%%%%%%%%%%%%%%%%%%%%%%%%%%%%%%%%%%%%%%%%%%%%%%%%%%%%%%%%%%%%%%%%%%%%
\newcommand{\dout}[0]          {{\cnst{y}}}
\newcommand{\out}[0]           {{\rndv{Y}}}
\newcommand{\outS}[0]          {{\set{Y}}}
\newcommand{\outA}[0]          {{\alg{Y}}}

%%%%%%%%%%%%%%%%%%%%%%%%%%%%%%%%%%%%%%%%%%%%%%%%%%%%%%%%%%%%%%%%%%%%%%%%%%%%%%%%%%%%%%%%%%%%%%%%%%%%%%%%%%%%%%%%%%%%%%%%
\newcommand{\dsta}[0]          {{\cnst{z}}}
\newcommand{\sta}[0]           {{\rndv{Z}}}

%%%%%%%%%%%%%%%%%%%%%%%%%%%%%%%%%%%%%%%%%%%%%%%%%%%%%%%%%%%%%%%%%%%%%%%%%%%%%%%%%%%%%%%%%%%%%%%%%%%%%%%%%%%%%%%%%%%%%%%%
\newcommand{\dmes}[0]          {{\cnst{m}}}

\newcommand{\mesS}[0]          {{\set{M}}}

%%%%%%%%%%%%%%%%%%%%%%%%%%%%%%%%%%%%%%%%%%%%%%%%%%%%%%%%%%%%%%%%%%%%%%%%%%%%%%%%%%%%%%%%%%%%%%%%%%%%%%%%%%%%%%%%%%%%%%%%

\newcommand{\estS}[0]          {{\widehat{{\set{M}}}}}

%%%%%%%%%%%%%%%%%%%%%%%%%%%%%%%%%%%%%%%%%%%%%%%%%%%%%%%%%%%%%%%%%%%%%%%%%%%%%%%%%%%%%%%%%%%%%%%%%%%%%%%%%%%%%%%%%%%%%%%%%
%%%%%%%%%%%%%%%%%%%%%%%%%%%%%%%%%%%%%%%%%%%%%%%%%%%%%%%%%%%%%%%%%%%%%%%%%%%%%%%%%%%%%%%%%%%%%%%%%%%%%%%%%%%%%%%%%%%%%%%%
%%%%%%%%%%%%%%%%%%%%%%%%%%%%%%%%%%%%%%%%%%%%%%%%%%%%%%%%%%%%%%%%%%%%%%%%%%%%%%%%%%%%%%%%%%%%%%%%%%%%%%%%%%%%%%%%%%%%%%%%
%%%%%%%%%%%%%%%%%%%%%%%%%%%%%%%%%%%%%%%%%%%%%%%%%%%%%%%%%%%%%%%%%%%%%%%%%%%%%%%%%%%%%%%%%%%%%%%%%%%%%%%%%%%%%%%%%%%%%%%%
%%%%%%%%%%%%%%%%%%%%%%%%%%%%%%%%%%%%%%%%%%%%%%%%%%%%%%%%%%%%%%%%%%%%%%%%%%%%%%%%%%%%%%%%%%%%%%%%%%%%%%%%%%%%%%%%%%%%%%%%

\newcommand{\mean}[0]        {{{\msr{\mu}}}}    
\newcommand{\mmn}[1]         {{{\mean}_{{#1}}}}

\newcommand{\mA}[0]				{{\msr{a}}}    
\newcommand{\amn}[1]			{{{\mA}_{{#1}}}}

\newcommand{\mB}[0]				{{\msr{b}}}    
\newcommand{\bmn}[1]			{{{\mB}_{{#1}}}}

\newcommand{\mP}[0]				{{\msr{p}}}

\newcommand{\mQ}[0]				{{\msr{q}}}    
\newcommand{\qmn}[1]			{{{\mQ}_{{#1}}}}

\newcommand{\Qm}[0]				{{{\cnst{Q}}}}

\newcommand{\mS}[0]				{{\msr{s}}}

\newcommand{\mU}[0]				{{\msr{u}}}

\newcommand{\Umn}[1]			{{{\cnst{U}}_{{#1}}}}

\newcommand{\mV}[0]				{{\msr{v}}}    
\newcommand{\vmn}[1]			{{{\mV}_{{#1}}}}

\newcommand{\mW}[0]				{{\msr{w}}}    
\newcommand{\wmn}[1]			{{{\mW}_{{#1}}}}
\newcommand{\wma}[2]			{{{\mW}_{{#1}}^{{#2}}}}
\newcommand{\Wm}[0]				{{{\cnst{W}}}}
\newcommand{\Wmn}[1]			{{{\cnst{W}}_{{#1}}}}
\newcommand{\Wma}[2]			{{{\cnst{W}}_{{#1}}^{{#2}}}}

%%%%%%%%%%%%%%%%%%%%%%%%%%%%%%%%%%%%%%%%%%%%%%%%%%%%%%%%%%%%%%%%%%%%%%%%%%%%%%%%%%%%%%%%%%%%%%%%%%%%%%%%%%%%%%%%%%%%%%%%%%%%%

%%%%%%%%%%%%%%%%%%%%%%%%%%%%%%%%%%%%%%%%%%%%%%%%%%%%%%%%%%%%%%%%%%%%%%%%%%%%%%%%%%%%%%%%%%%%%%%%%%%%%%%%%%%%%%%%%%%%%%%%%%%%%

\newcommand{\altug}[0]								{Altu\u{g}~}
\newcommand{\csiszar}[0]							{Csisz\'{a}r~}

\newcommand{\renyi}[0]								{R\'{e}nyi~}

%\newcommand{\esseen}[0]								{Ess\'{e}en~}
%%%%%%%%%%%%%%%%%%%%%%%%%%%%%%%%%%%%%%%%%%%%%%%%%%%%%%%%%%%%%%%%%%%%%%%%%%%%%%%%%%%%%%%%%%%%%%%%%%%%%%%%%%%%%%%%%%%%%%%%%%%%%%%%%%%%%%%%%%%%%%%%%%%%%%%%%%%%%%%%%%%%%%
\makeatletter
\DeclareRobustCommand{\bigplus}{%
	\mathop{\vphantom{\sum}\mathpalette\@bigplus\relax}\slimits@
}
\newcommand{\@bigplus}[2]{\vcenter{\hbox{\make@bigplus{#1}}}}
\newcommand{\make@bigplus}[1]{%
	\sbox\z@{$\m@th#1\sum$}%
	\setlength{\unitlength}{\wd\z@}%
	\begin{picture}(1.4,1.4)
	%\roundcap
	\linethickness{.17ex}
	\Line(.7,.14)(.7,1.26)
	\Line(.14,.7)(1.26,.7)
	\end{picture}%
}
\DeclareRobustCommand{\bigtimes}{%
	\mathop{\vphantom{\sum}\mathpalette\@bigtimes\relax}\slimits@
}
\newcommand{\@bigtimes}[2]{\vcenter{\hbox{\make@bigtimes{#1}}}}
\newcommand{\make@bigtimes}[1]{%
	\sbox\z@{$\m@th#1\sum$}%
	\setlength{\unitlength}{\wd\z@}%
	\begin{picture}(1,1)
	%\roundcap
	\linethickness{.17ex}
	\Line(.1,.1)(.9,.9)
	\Line(.1,.9)(.9,.1)
	\end{picture}%
}
\makeatother
\theoremstyle{plain}
\newtheorem{theorem}{Theorem}[section] 
\newtheorem{lemma}[theorem]{Lemma} 
\theoremstyle{definition}
\newtheorem{definition}[theorem]{Definition} 
\newtheorem{remark}[theorem]{Remark}

\numberwithin{equation}{section}

\begin{document}
\pagestyle{plain}
\pagenumbering{arabic}
\maketitle 
\thispagestyle{empty}
%!TEX root=../main-F.tex
\begin{abstract}
	A judicious application of the Berry-Esseen theorem via suitable 
	Augustin information measures is demonstrated to be sufficient for deriving 
	the sphere packing bound with a prefactor that is 
	$\mathit{\Omega}\left(n^{-0.5(1-E_{sp}'(R))}\right)$ 
	for all codes on certain families of channels 
	---including the Gaussian channels and the non-stationary \renyi symmetric channels---
	and for the constant composition codes on stationary memoryless channels.  
	The resulting non-asymptotic bounds have definite approximation error terms.
	As a preliminary result that might be of interest on its own, the trade-off 
	between type I and type II error probabilities in the hypothesis testing 
	problem with (possibly non-stationary) independent samples is determined 
	up to some multiplicative constants, assuming that the probabilities of 
	both types of error are decaying exponentially with the number of samples, 
	using the Berry-Esseen theorem.	
%\keywords{Hypothesis testing, Berry-Esseen theorem, sphere packing bound, 
%Augustin information measures, constant composition codes, Renyi symmetry, Gaussian channels}
\end{abstract}

%	A judicious application of the Berry-Esseen theorem via suitable 
%Augustin information measures is demonstrated to be sufficient for deriving 
%the sphere packing bound with a prefactor that is 
%$\mathit{\Omega}\left(n^{-0.5(1-E_{sp}'(R))}\right)$ 
%for all codes on certain families of channels 
%---including the Gaussian channels and the non-stationary Renyi symmetric channels---
%and for the constant composition codes on stationary memoryless channels.  
%The resulting non-asymptotic bounds have definite approximation error terms.
%As a preliminary result that might be of interest on its own, the trade-off 
%between type I and type II error probabilities in the hypothesis testing 
%problem with (possibly non-stationary) independent samples is determined 
%up to some multiplicative constants, assuming that the probabilities of 
%both types of error are decaying exponentially with the number of samples, 
%using the Berry-Esseen theorem.	

%!TEX root=../main-F.tex
\section{Introduction}\label{sec:introduction}
The decay of the optimal error probability 
with the block length for rates below the channel capacity 
has been studied since the early days of the information theory. 
For certain channels and for certain values of the rate, 
sharp bounds were found early on.
Elias in \cite{elias55B} for the binary symmetric channel, 
Shannon\footnote{The equivalence of the bounds in \cite{shannon59} to \eqref{eq:nonsingular} is proved in
	Appendix \ref{sec:ShannonsExpressions}.} 
in \cite{shannon59} for the additive Gaussian noise channel, and 
Dobrushin in \cite{dobrushin62B} 
for the strongly symmetric channels
---see the original publication in Russian to avoid typos present in the translation---
proved that
\begin{align}
\label{eq:nonsingular}
\Pem{(\blx)}
&=\bigtheta{\blx^{-\frac{1-\dspe{\rate}}{2}} e^{-\blx \spe{\rate}}}
&
&~
&&\forall\rate\in[\rate_{crit},\CX{}),
\end{align}
%provided that the rate \(\rate\) is between the critical rate of the channel 
%and the capacity of the channel  
%where \(\amn{\blx}=\bigtheta{\bmn{\blx}}\) iff 
%\(K_{1}<\abs{\frac{\amn{\blx}}{\bmn{\blx}}}<K_2\)
%for some positive constants \(K_1\) and \(K_{2}\) for \(\blx\) large enough,
where\footnote{In this section, we suppress the dependence of 
	the sphere packing exponent to the channel in our notation and denote it by 
	\(\spe{\rate}\), rather than \(\spe{\rate,\!\Wm}\).} 
\(\amn{\blx}\!=\!\bigtheta{\bmn{\blx}}\) iff 
\(0<
\liminf\limits_{\blx\to\infty}\abs{\frac{\amn{\blx}}{\bmn{\blx}}}
\leq 
\limsup\limits_{\blx\to\infty}\abs{\frac{\amn{\blx}}{\bmn{\blx}}}
<\infty\),
\(\spe{\cdot}\) is the sphere packing exponent of the channel,
\(\dspe{\cdot}\) is its derivative with respect to the rate, 
\(\rate_{crit}\) is the rate at which the slope
of the sphere packing exponent curve is minus one, i.e.,\(\dspe{\rate_{crit}}=-1\),
and \(\CX{}\) is the capacity of the 
channel.
On the other hand, Elias proved in \cite{elias55B} 
for the binary erasure channels that  
\begin{align}
\label{eq:singular}
\Pem{(\blx)}
&=\bigtheta{\blx^{-\frac{1}{2}} e^{-\blx \spe{\rate}}}
&
&~
&
&~
&&\forall\rate\in[\rate_{crit},\CX{}).
\end{align}

Neither \eqref{eq:nonsingular}, nor \eqref{eq:singular}, holds for rates 
below the critical rate.
If, however, we replace the equality sign with the greater than or equal 
to sign, then both \eqref{eq:nonsingular} and  \eqref{eq:singular} hold 
for all rates below the channel capacity.
These lower bounds are customarily called sphere packing bounds (SPBs) 
because of the techniques used in their original derivations. 

Derivations of the SPB in \cite{elias55B,shannon59,dobrushin62B} relied 
on the geometric structure of the output space of the channel 
and parameters that can be defined only for some models.
The resulting bounds were expressed in terms of these parameters, as well.
Thus it was not even clear that SPBs in\cite{elias55B,shannon59,dobrushin62B} 
can be interpreted as specific instances of a general bound.
The evidence for such an interpretation came not from a breakthrough about 
the lower bounds on the error probability but 
from a breakthrough about the upper bounds.  
Gallager's seminal work \cite{gallager65} unified and generalized
the upper bounds on the error probability 
---at least in terms of the exponent--- 
in all the previous studies. 
It is only with Gallager's formulation in \cite{gallager65} 
that one can express the bounds in \cite{elias55B,shannon59,dobrushin62B} 
as \eqref{eq:nonsingular} and \eqref{eq:singular}.

The first complete proof of the SPB 
for arbitrary discrete stationary product 
channels\footnote{These channels are customarily called discrete 
	memoryless channels, i.e., DMCs. We call them DSPCs in order to underline 
	the stationarity of these channels and the absence of any constraints on 
	their input sets. 
	In principle, such constraints might exist and stationarity might be absent 
	in a discrete channel that is memoryless.}
(DSPCs) was presented in \cite{shannonGB67A}.
According to \cite[Thm. 2]{shannonGB67A} 
\begin{align}
%\label{eq:SGB}
\notag
\Pem{(\blx)}
&\geq e^{-\blx\left[\spe{\rate-\bigo{\blx^{-\sfrac{1}{2}}}}+\bigo{\blx^{-\sfrac{1}{2}}}\right]}
&
&~
&&\forall\rate\in(0,\CX{}),
\end{align}
where \(\amn{\blx}=\bigo{\bmn{\blx}}\) iff 
there exists a \(K\in\reals{+}\) such that \(\abs{\amn{\blx}}\leq K \bmn{\blx}\)
for all \(\blx\) large enough. 
In the following two years, the SPB was proved 
first for stationary product channels with finite input sets in \cite{haroutunian68}
and then for (possibly) non-stationary product channels in \cite{augustin69}.
Since then, the SPB has been proven for various channel models in 
\cite{ebert66,richters67,gallager,omura75,csiszarkorner,wyner88-b,burnashevK99,arikan02,lapidoth93,augustin78,nakiboglu19B,nakiboglu19E,nakiboglu17,nakiboglu18D,dalai13A,dalaiW17},
including some quantum information theoretic ones.
It is, however, worth noting that a general proof that holds 
for both Gaussian channels 
---considered in \cite{shannon59,ebert66,richters67}---
and for arbitrary DSPCs 
---considered in \cite{shannonGB67A}---
was absent until recently, see \cite{nakiboglu17} and \cite{nakiboglu18D}.
These later works on the SPB,
i.e., \cite{shannonGB67A,haroutunian68,augustin69,
	ebert66,richters67,gallager,omura75,csiszarkorner,wyner88-b,
	burnashevK99,arikan02,lapidoth93,augustin78,nakiboglu19B,nakiboglu19E,nakiboglu17,nakiboglu18D,
	dalai13A,dalaiW17},
were primarily interested in establishing the right exponential decay rate;
thus, they were content with prefactors of the form 
\(e^{-\smallo{{\blx}}}\),
where
\(\amn{\blx}=\smallo{\bmn{\blx}}\) iff 
for all \(\epsilon\!>\!0\) the inequality \(\abs{\amn{\blx}}\leq \epsilon \bmn{\blx}\)
holds for all \(\blx\) large enough. 
Some authors did obtain prefactors of the form 
\(e^{-\bigo{\sqrt{\blx}}}\) or 
\(e^{-\bigo{\ln \blx}}\), but obtaining the best possible
---if not tight--- 
prefactor was not an actual concern.

The quest for deriving SPBs with tight prefactors 
was put on the map again by \altug and Wagner in \cite{altugW11}
and \cite{altugW14A}.
According to \cite[Thm. 1]{altugW11} for any DSPC
with a Gallager symmetric\footnote{The condition 
	for Gallager symmetry is described
	in \cite[p. 94]{gallager}.
	The binary symmetric channel,
	the binary erasure channel,
	and channels considered in \cite{dobrushin62B} 
	are  symmetric according to this definition.} 
probability transition matrix \(\Wm\) with positive entries  
and rate \(\rate\) in \((0,\CX{})\), there exists an \(A\in\reals{+}\)
such that for any \(\epsilon>0\)
\begin{align}
%\label{eq:altugwagner-nonsingular}
\notag
\Pem{(\blx)}
&\geq A \blx^{-\frac{1-(1+\epsilon)\dspe{\rate}}{2}}e^{-\blx\spe{\rate}} 
&
&~
&&\forall \blx\geq \blx_{0}
\end{align}
for some \(\blx_{0}\) determined by \(\Wm\), \(\rate\), and \(\epsilon\).
The corresponding  result was established for the constant composition codes 
on arbitrary DSPCs in \cite[Thm. 1]{altugW14A}.
These results are generalized to classical-quantum 
channels in \cite[Thms. 8 and 14]{chengHT19},
with a slight improvement, allowing \(\epsilon=0\) 
for the symmetric channels. 

The primary tool for the derivations in \cite{altugW11,altugW14A,chengHT19}
is the Berry-Esseen theorem, albeit through certain auxiliary results
inspired by a theorem of Bahadur and Rao \cite{bahadurR60},
i.e., 
\cite[(74)]{altugW11}, \cite[Proposition 5]{altugW14A},
and \cite[Thm. 17]{chengHT19}.
Our main aim in this paper is to demonstrate that
the analysis can be simplified and the results can 
be strengthened and generalized through 
a more judicious application of 
the Berry-Esseen theorem via 
suitable Augustin information measures.

\cite{elias55B} and \cite{shannon59}
not only established \eqref{eq:nonsingular} and \eqref{eq:singular}
but also obtained closed-form expressions for the
upper and lower bounds implicit in \eqref{eq:nonsingular} and \eqref{eq:singular}.
Dobrushin went one step further and calculated the
exact asymptotic behavior of the SPB and the random coding bound 
by analyzing asymptotic behavior for
 the lattice and non-lattice cases separately
for random variables used to derive the SPB
and the random coding bound, see \cite[(1.32), (1.33), (1.34)]{dobrushin62B}.
Recently, the saddle point approximation is used to 
derive the SPB with the same asymptotic 
prefactor \cite[Corollary 2]{vazquezFKL18},
under weaker symmetry 
hypothesis,\footnote{The binary input Gaussian channel 
	and the binary erasure channel satisfy the symmetry hypothesis of  
	\cite{vazquezFKL18}, but not that of \cite{dobrushin62B}.} 
albeit by assuming a common support for all output distributions
of the channel and a non-lattice structure for the random variables 
involved.\footnote{Neither of these assumptions was needed while deriving
	this result in \cite{dobrushin62B}.} 
The main drawback of the analysis in \cite{vazquezFKL18} 
is the technical conditions that need to be confirmed for applying
the saddle point approximation via \cite[Proposition 2.3.1]{jensen}.
\begin{remark}
The proof of \cite[Corollary 2]{vazquezFKL18} holds only 
for channels whose Augustin center does not change with the order,
i.e., for \(\Wm\)'s for which \(\exists\mQ\) such that
\(\qmn{\rno,\Wm}\!=\!\mQ\) for all \(\rno\!\in\!(0,1)\).
Note that for the channels that violate this 
additional hypothesis,  the \(\bigo{\blx^{-\sfrac{1}{2}}}\) approximation 
error terms in \cite[(30)]{vazquezFKL18} are \(\rho\) dependent
because of the implicit \(\Qm\) dependence of  the
\(\bigo{\blx^{-\sfrac{1}{2}}}\) approximation 
error terms in \cite[(12) and (13)]{vazquezFKL18}.
In order to recover a result similar to \cite[Corollary 2]{vazquezFKL18} 
for a channel whose Augustin center changes with the order,
one needs a saddle point approximation 
that holds for a parametric family of i.i.d. sequences of random variables,
such as \cite[Proposition 1]{lanchoODKV19A},
rather than \cite[Lemma 2]{vazquezFKL18}, which holds for 
a single i.i.d sequence of random variables. 
\end{remark}

Let us finish this section with an overview of the paper.
In Section \ref{sec:model-and-notation}, we describe our model and notation.
In Section \ref{sec:HypothesisTesting}, first, we recall the connection between 
the hypothesis testing problem and the tilting, 
then we derive our primary technical tool using the Berry-Esseen theorem.
In Section \ref{sec:preliminary}, we review Augustin's information measures 
and the sphere packing exponent.
In Section \ref{sec:RSPB}, we state and prove refined SPBs for various models
using  Lemma \ref{lem:HTBE} and 
the observations recalled in Section \ref{sec:preliminary}.  
We conclude our presentation with a brief discussion
of the results, recent developments, and future work in Section \ref{sec:conclusion}.

\section{Model and Notation}\label{sec:model-and-notation}
For any set \(\inpS\), we denote the set of all probability mass functions that 
are non-zero only on finitely many elements of \(\inpS\) by \(\pdis{\inpS}\).
For any measurable space \((\outS,\outA)\), we denote the set of all probability 
measures on it by \(\pmea{\outA}\).
We denote the expected value of a measurable function \(\fX\) 
under the probability measure \(\mean\) by \(\EXS{\mean}{\fX}\).
Similarly, we denote the variance of \(\fX\) under \(\mean\),
i.e.,\(\EXS{\mean}{(\fX-\EXS{\mean}{\fX})^{2}}\),
by \(\VXS{\mean}{\fX}\). 

For sets \(\inpS_{1},\ldots,\inpS_{\blx}\) we denote
their Cartesian product by \(\inpS_{1}^{\blx}\)
and for \(\sigma\)-algebras \(\outA_{1},\ldots,\outA_{\blx}\)
we denote their product by \(\outA_{1}^{\blx}\).
We use the symbol \(\otimes\) to denote the product of measures.

A \emph{channel} \(\Wm\) is a function from \emph{the input set} \(\inpS\) to  the set of all probability 
measures on \emph{the output space} \((\outS,\outA)\):
\begin{align}
\notag
%\label{eq:def:channel}
\Wm:\inpS \to \pmea{\outA}.
\end{align}
A channel \(\Wm\) is called a \emph{discrete channel} if both \(\inpS\) and \(\outA\) are finite sets.
The product of \(\Wmn{\tin}:\inpS_{\tin}\to\pmea{\outA_{\tin}}\) for \(\tin\in\{1,\ldots,\blx\}\)
is a channel of the form \(\Wmn{[1,\blx]}:\inpS_{1}^{\blx}\to\pmea{\outA_{1}^{\blx}}\) 
satisfying 
\begin{align}
\notag
%\label{eq:def:product}
\Wmn{[1,\blx]}(\dinp_{1}^{\blx})
&=\bigotimes\nolimits_{\tin=1}^{\blx}\Wmn{\tin}(\dinp_{\tin})
&
&\forall \dinp_{1}^{\blx}\in\inpS_{1}^{\blx}.
\end{align}
Any channel obtained by curtailing the input set of a length \(\blx\)
product channel is called a length \(\blx\) \emph{memoryless channel}.
A product channel \(\Wmn{[1,\blx]}\) is \emph{stationary} iff \(\Wmn{\tin}=\Wm\)
for all \(\tin\)'s for some \(\Wm\).
On a stationary channel, we denote the composition 
(i.e.,the empirical distribution, the type) 
of each \(\dinp_{1}^{\blx}\) by \(\composition(\dinp_{1}^{\blx})\);
thus \(\composition(\dinp_{1}^{\blx})\in\pdis{\inpS}\).

The pair \((\enc,\dec)\) is an \((M,L)\) \emph{channel code} on \(\Wmn{[1,\blx]}\) iff
\begin{itemize}
	\item The \emph{encoding function} \(\enc\) is a function from the message set \(\mesS\DEF\{1,2,\ldots,M\}\) to the input
	set \(\inpS_{1}^{\blx}\).
	\item The \emph{decoding function} \(\dec\) is a \(\outA_{1}^{\blx}\)-measurable function from 
	the output set \(\outS_{1}^{\blx}\) to the set \(\estS\DEF\{\set{L}:\set{L}\subset\mesS \mbox{~and~}\abs{\set{L}}\leq L\}\).
\end{itemize}
Given an \((M,L)\) channel code \((\enc,\dec)\) on \(\Wmn{[1,\blx]}\),
\emph{the conditional error probability} \(\Pem{\dmes}\) for \(\dmes \in \mesS\) 
and \emph{the average error probability} \(\Pem{}\)
are defined as
\begin{align}
\notag
\Pem{\dmes}
&\DEF \EXS{\Wmn{[1,\blx]}(\enc(\dmes))}{\IND{\dmes\notin\dec(\out_{1}^{\blx})}},
\\
\notag
%&
\Pem{} 
&\DEF\tfrac{1}{M} \sum\limits_{\dmes\in \mesS} \Pem{\dmes}.
\end{align}
An encoding function \(\enc\) 
---hence the corresponding code---
on a stationary product channel, satisfies 
an empirical distribution constraint \(\cset\subset\pdis{\inpS}\) 
iff the composition of all of its codewords are in \(\cset\), 
i.e., iff \(\composition(\enc(\dmes))\in\cset\) for all \(\dmes\in\mesS\).
A code is called 
a constant composition code iff all of its codewords  have 
the same composition, i.e., there exists a \(\mP\) in \(\pdis{\inpS}\)
satisfying \(\composition(\enc(\dmes))=\mP\) for all \(\dmes\in\mesS\).

\section{Hypothesis Testing Problem and Berry-Esseen Theorem}\label{sec:HypothesisTesting}
Our primary aim in this section is to characterize ---up to a multiplicative constant--- 
the asymptotic behavior of type I error probability with the number of samples 
for a hypothesis testing problem between product measures, when type II error 
probability is decaying exponentially.
We use the Berry-Esseen theorem via the concepts of \renyi divergence and 
tilted probability measure to do that. 
Let us,  first, recall the definitions of the \renyi divergence and the tilted 
probability measure.

\begin{definition}\label{def:divergence}
For any \(\rno\in\reals{+}\) and \(\mW,\mQ\in\pmea{\outA}\),
\emph{the order \(\rno\) \renyi divergence between \(\mW\) and \(\mQ\)} is
\begin{align}
\notag
%\label{eq:def:divergence}
	\RD{\rno}{\mW}{\mQ}
	&\DEF \begin{cases}
	\tfrac{1}{\rno-1}\ln \int (\der{\mW}{\rfm})^{\rno} (\der{\mQ}{\rfm})^{1-\rno} \rfm(\dif{\dout})
	&\rno\neq 1\\
	\int  \der{\mW}{\rfm}\left[ \ln\der{\mW}{\rfm} -\ln\der{\mQ}{\rfm}\right] \rfm(\dif{\dout})
	&\rno=1
	\end{cases},
	\end{align}
	where \(\rfm\) is any measure satisfying \(\mW\AC\rfm\) and \(\mQ\AC\rfm\).
\end{definition}
The order one \renyi divergence is the Kullback-Leibler divergence.
For other orders, the \renyi divergence can be characterized in terms of the 
Kullback-Leibler divergence too: 
\begin{align}
\label{eq:varitional}
(1-\rno)\RD{\rno}{\mW}{\mQ}
&=\inf\nolimits_{\mV\in\pmea{\outA}} \rno \RD{1}{\mV}{\mW}+(1-\rno)\RD{1}{\mV}{\mQ}
\end{align}
with the convention that \(\rno \RD{1}{\mV}{\mW}+(1-\rno)\RD{1}{\mV}{\mQ}=\infty\) if
it would be otherwise undefined, see \cite[Thm. 30]{ervenH14}.
The characterization given in \eqref{eq:varitional} is related to another key concept 
of our analysis: the tilted probability measure. 
\begin{definition}\label{def:tiltedprobabilitymeasure}
	For any \(\rno\in\reals{+}\) and \(\mW,\mQ\in\pmea{\outA}\) satisfying 
	\(\RD{\rno}{\mW}{\mQ}<\infty\), 
	\emph{the order \(\rno\) tilted probability measure} \(\wma{\rno}{\mQ}\) is 
	\begin{align}
\notag
%	\label{eq:def:tiltedprobabilitymeasure}
	\der{\wma{\rno}{\mQ}}{\rfm}
	&\DEF e^{(1-\rno)\RD{\rno}{\mW}{\mQ}}(\der{\mW}{\rfm})^{\rno} (\der{\mQ}{\rfm})^{1-\rno}.
	\end{align}
\end{definition}
If either \(\rno\) is in \((0,1)\) or \(\RD{1}{\wma{\rno}{\mQ}}{\mW}\) is finite,
then the tilted probability measure is the unique probability measure achieving 
the infimum  in \eqref{eq:varitional}
by \cite[Thm. 30]{ervenH14}, i.e.,
\begin{align}
\label{eq:varitional-tilted}
(1-\rno)\RD{\rno}{\mW}{\mQ}
&=\rno \RD{1}{\wma{\rno}{\mQ}}{\mW}+(1-\rno)\RD{1}{\wma{\rno}{\mQ}}{\mQ}.
\end{align}
Furthermore, under the same hypothesis the identities 
\begin{align}
\label{eq:CLQ}
\ln\der{\wma{\rno}{\mQ}}{\mQ}-\RD{1}{\wma{\rno}{\mQ}}{\mQ}
&=\rno\left(\ln\der{\wmn{ac}}{\mQ}-\EXS{\wma{\rno}{\mQ}}{\ln\der{\wmn{ac}}{\mQ}}\right), 
\\
\label{eq:CLW}
\ln\der{\wma{\rno}{\mQ}}{\mW}-\RD{1}{\wma{\rno}{\mQ}}{\mW}
&=(\rno-1)\left(\ln\der{\wmn{ac}}{\mQ}-\EXS{\wma{\rno}{\mQ}}{\ln\der{\wmn{ac}}{\mQ}}\right),
\end{align}
hold \(\wma{\rno}{\mQ}\)-a.s.,
where \(\wmn{ac}\) is the component of \(\mW\) that is absolutely continuous in \(\mQ\).

Let us proceed with recalling the Berry-Esseen theorem.
\begin{lemma}[\cite{berry41,esseen42,shevtsova10}]\label{lem:berryesseen}
	Let \(\{\cln{\tin}\}_{\tin\in \integers{+}}\) be independent  zero-mean
	random variables satisfying
	\(\sum\nolimits_{\tin=1}^{\blx} \EX{\cln{\tin}^{2}}<\infty\).
	Then there exists an absolute constant \(\omega\leq 0.5600\) satisfying 
	\begin{align}
	\notag
	\abs{\PX{\sum\nolimits_{\tin=1}^{\blx}\cln{\tin}< \tau}-\GCD{\tfrac{\tau}{\sqrt{\amn{2} \blx}}}}
	&\leq \omega\tfrac{\amn{3}}{\amn{2}\sqrt{\amn{2}\blx}}
	&
	&\forall \tau \in \reals{},
	\end{align}
	where \(\amn{\knd}=\frac{1}{\blx}\sum\nolimits_{\tin=1}^{\blx} \EX{\abs{\cln{\tin}}^{\knd}}\)
	and \(\GCD{\mS} =\tfrac{1}{\sqrt{2\pi}} \int_{-\infty}^{\mS} e^{-\sfrac{\dsta^2}{2}} \dif{\dsta}.\)
\end{lemma}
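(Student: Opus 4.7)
My proof proposal treats this as the classical Berry--Esseen theorem for independent but not necessarily identically distributed summands, so I will follow the Fourier-analytic route that ultimately leads to Shevtsova's sharp constant \(\omega \leq 0.5600\). The plan is to reduce the statement to a uniform bound on the difference between two cumulative distribution functions, then to convert that into a bound on their characteristic functions, and finally to estimate the characteristic function difference by a truncated Taylor expansion controlled by the third absolute moments \(\amn{3}\).

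First I would introduce the normalized sum \(\srv_{\blx} \DEF \sum_{\tin=1}^{\blx}\cln{\tin}/\sqrt{\amn{2}\blx}\) and denote its CDF by \(\FX_{\blx}\), so that the claim becomes \(\sup_{\tau\in\reals{}}\abs{\FX_{\blx}(\tau)-\GCD{\tau}} \leq \omega \amn{3}/(\amn{2}\sqrt{\amn{2}\blx})\). The main lever is Esseen's smoothing inequality, which for any \(\tlx>0\) yields
\begin{align}
\notag
\sup_{\tau\in\reals{}}\abs{\FX_{\blx}(\tau)-\GCD{\tau}}
&\leq \tfrac{1}{\pi}\int_{-\tlx}^{\tlx}\tfrac{\abs{\rnf_{\blx}(t)-e^{-t^{2}/2}}}{\abs{t}}\dif{t}+\tfrac{c}{\tlx},
\end{align}
where \(\rnf_{\blx}\) is the characteristic function of \(\srv_{\blx}\) and \(c\) is an absolute constant coming from the smooth density \(\GD{\cdot}\). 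Because the \(\cln{\tin}\) are independent with mean zero, \(\rnf_{\blx}(t)=\prod_{\tin=1}^{\blx}\rnf_{\cln{\tin}}(t/\sqrt{\amn{2}\blx})\), so the analysis of the integrand decouples into bounds on the individual characteristic functions.

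Next I would bound each factor using the standard third-order Taylor estimate \(\abs{\rnf_{\cln{\tin}}(s)-1+s^{2}\EX{\cln{\tin}^{2}}/2}\leq \tfrac{\abs{s}^{3}}{6}\EX{\abs{\cln{\tin}}^{3}}\) together with the elementary inequality \(\abs{\prod u_{\tin}-\prod v_{\tin}}\leq \sum \abs{u_{\tin}-v_{\tin}}\) for complex numbers of modulus at most one. After taking \(\ln\rnf_{\cln{\tin}}\) rather than \(\rnf_{\cln{\tin}}\) itself (to turn the product into a sum and capture the exponent \(-t^{2}/2\) of the Gaussian), the cumulative error can be collected into a single integrable tail in \(\abs{t}\) weighted by the averaged third moment \(\amn{3}\). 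A judicious choice of the truncation \(\tlx\) proportional to \(\amn{2}\sqrt{\amn{2}\blx}/\amn{3}\) then balances the two summands in Esseen's inequality and produces a bound of the required form.

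The main obstacle is, of course, not the existence of some constant \(\omega\) but the sharpness \(\omega\leq 0.5600\). Obtaining this sharp numerical value is not extracted from the crude estimates above; rather it requires Shevtsova's refined treatment, which replaces the Taylor bound by a more delicate lemma distinguishing the regimes \(\abs{t}\leq a\) and \(\abs{t}>a\) for a carefully tuned threshold, and optimizes the resulting convex combinations. Since the statement is attributed directly to \cite{berry41,esseen42,shevtsova10}, I would simply invoke Shevtsova's computation for the final constant and use the Fourier argument above only to justify the qualitative inequality; no further work specific to the present paper is needed.
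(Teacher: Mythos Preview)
The paper does not prove this lemma at all; it is stated as a citation of the classical Berry--Esseen theorem with Shevtsova's constant, attributed to \cite{berry41,esseen42,shevtsova10}, and is used as a black box in the proof of Lemma~\ref{lem:HTBE}. Your proposal is therefore consistent with the paper's treatment: you correctly identify that the qualitative bound follows from the standard Fourier-analytic argument via Esseen's smoothing inequality, and that the sharp constant \(0.5600\) must be imported from Shevtsova's work rather than derived from scratch. Since the paper itself offers no proof beyond the citation, there is nothing further to compare.
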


Lemma \ref{lem:HTBE}, in the following, characterizes the trade-off between 
type I and type II error probabilities for a hypothesis testing problem with 
independent samples, assuming that both error probabilities are decaying 
---at least--- exponentially with the number of samples.
Lemma \ref{lem:HTBE}, which is derived using the Berry-Esseen theorem,
can be interpreted as a refinement of \cite[Thm. 5]{shannonGB67A},
which is derived using Chebyshev's inequality.

\begin{lemma}\label{lem:HTBE}
For any \(\rno\in(0,1)\), \(\blx\in\integers{+}\), 
\(\wmn{\tin},\qmn{\tin}\in\pmea{\outA_{\tin}}\), 
let \(\wmn{\tin,ac}\) be the component of \(\wmn{\tin}\) that is absolutely continuous in \(\qmn{\tin}\)
and let \(\amn{2}\), \(\amn{3}\), and \(\htdelta\) be
\begin{align}
%\label{eq:lem:HTBE-A2}
\notag
\amn{2}
&\DEF\tfrac{1}{\blx}\sum\nolimits_{\tin=1}^{\blx} 
\EXS{\wma{\rno}{\mQ}}{\left(\ln\der{\wmn{\tin,ac}}{\qmn{\tin}}-\EXS{\wma{\rno}{\mQ}}{\ln\der{\wmn{\tin,ac}}{\qmn{\tin}}}\right)^{2}},
\\
%\label{eq:lem:HTBE-A3}
\notag
\amn{3}
&\DEF\tfrac{1}{\blx}\sum\nolimits_{\tin=1}^{\blx} 
\EXS{\wma{\rno}{\mQ}}{\abs{\ln\der{\wmn{\tin,ac}}{\qmn{\tin}}-\EXS{\wma{\rno}{\mQ}}{\ln\der{\wmn{\tin,ac}}{\qmn{\tin}}}}^{3}},
\\
%\label{eq:lem:HTBE-htdelta}
\notag
\htdelta
&\DEF \exp\left(
2\sqrt{2\pi e}\left(0.56\tfrac{\amn{3}}{\amn{2}}+\sqrt{\amn{2}}\right)
\right),
\end{align}
where \(\mW=\otimes_{\tin=1}^{\blx} \wmn{\tin}\) and \(\mQ=\otimes_{\tin=1}^{\blx} \qmn{\tin}\).
Then for any \(\oev\in\outA_{1}^{\blx}\) 
and \(\beta\geq\blx^{-\sfrac{1}{2}}e^{-\rno \sqrt{\amn{2}\blx}}\),
satisfying \(\mQ(\oev)\leq\beta e^{-\RD{1}{\wma{\rno}{\mQ}}{\mQ}}\),
we have
\begin{align}
\label{eq:lem:HTBE-converse}
\mW({\outS_{1}^{\blx}}\setminus\oev)
&\geq \htdelta^{\rno-1} \beta^{\frac{\rno-1}{\rno}}
\blx^{-\frac{1}{2\rno}}e^{-\RD{1}{\wma{\rno}{\mQ}}{\mW}}	
\end{align}
provided that \(\beta\leq\htdelta^{-\rno}\blx^{-\sfrac{1}{2}}e^{\rno \sqrt{\amn{2}\blx}}\).
Furthermore, for any \(\rno\in(0,1)\) and \(\beta\in\reals{+}\),
there exists an event \(\oev\in\outA_{1}^{\blx} \) such that
\begin{align}
\label{eq:lem:HTBE-achievability-q}
\mQ(\oev)
&\leq\beta e^{-\RD{1}{\wma{\rno}{\mQ}}{\mQ}},	
\\
\label{eq:lem:HTBE-achievability-w}
\hspace{-.3cm}
\mW({\outS_{1}^{\blx}}\setminus\oev)
&\leq 
\left[
\tfrac{1\vee\sqrt{8\pi \amn{2}}}{4\pi \amn{2}}
\ln\htdelta\right]^{\frac{1}{\rno}}
\tfrac{(\rno\beta)^{\frac{\rno-1}{\rno}}}{1-\rno}
\blx^{-\frac{1}{2\rno}} e^{-\RD{1}{\wma{\rno}{\mQ}}{\mW}}.
\end{align}
\end{lemma}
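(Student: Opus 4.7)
The plan is to analyze the centered log-likelihood sum
\(S\DEF\sum_{\tin=1}^{\blx}\Llr_\tin\) with
\(\Llr_\tin\DEF\ln\der{\wmn{\tin,ac}}{\qmn{\tin}}-\EXS{\wma{\rno}{\mQ}}{\ln\der{\wmn{\tin,ac}}{\qmn{\tin}}}\).
By the product structure of \(\mW\) and \(\mQ\) together with \eqref{eq:CLQ}, \eqref{eq:CLW} applied coordinatewise, one obtains \(\wma{\rno}{\mQ}\)-almost surely
\(\der{\mQ}{\wma{\rno}{\mQ}}=e^{-\RD{1}{\wma{\rno}{\mQ}}{\mQ}-\rno S}\) and
\(\der{\wmn{ac}}{\wma{\rno}{\mQ}}=e^{-\RD{1}{\wma{\rno}{\mQ}}{\mW}+(1-\rno)S}\), where \(\wmn{ac}\DEF\bigotimes_{\tin=1}^{\blx}\wmn{\tin,ac}\).
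Under \(\wma{\rno}{\mQ}\) the \(\Llr_\tin\) are independent and zero-mean, and \(\amn{2},\amn{3}\) from the lemma statement are precisely the per-coordinate average variance and average absolute third moment of \(\Llr_\tin\). Hence Lemma \ref{lem:berryesseen} controls \(\wma{\rno}{\mQ}(S\leq\tau)\) by \(\GCD{\tau/\sqrt{\amn{2}\blx}}\) with additive error at most \(0.56\,\amn{3}/(\amn{2}\sqrt{\amn{2}\blx})\).

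For the converse I fix a threshold \(\tau\) and a small width \(\delta\) (both to be chosen from \(\beta\)) and work with the slab \(\set{B}\DEF\{S\in[\tau-\delta,\tau]\}\). On \(\set{B}\) the bound \(\der{\wma{\rno}{\mQ}}{\mQ}\leq e^{\RD{1}{\wma{\rno}{\mQ}}{\mQ}+\rno\tau}\) combined with the hypothesis on \(\mQ(\oev)\) gives \(\wma{\rno}{\mQ}(\oev\cap\set{B})\leq\beta e^{\rno\tau}\), while \(\der{\wmn{ac}}{\wma{\rno}{\mQ}}\geq e^{-\RD{1}{\wma{\rno}{\mQ}}{\mW}+(1-\rno)(\tau-\delta)}\) on \(\set{B}\) yields
\[\mW(\outS_1^{\blx}\setminus\oev)\geq \wmn{ac}(\set{B}\setminus\oev)\geq e^{-\RD{1}{\wma{\rno}{\mQ}}{\mW}+(1-\rno)(\tau-\delta)}\bigl[\wma{\rno}{\mQ}(\set{B})-\beta e^{\rno\tau}\bigr].\]
Berry-Esseen lower-bounds \(\wma{\rno}{\mQ}(\set{B})\) by \(\GCD{\tau/\sqrt{\amn{2}\blx}}-\GCD{(\tau-\delta)/\sqrt{\amn{2}\blx}}\) minus twice the error term. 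I pick \(\tau\) so that \(\beta e^{\rno\tau}\) is at most, say, half of this lower bound and \(\delta\) so that the Berry-Esseen error is absorbed; the two-sided range on \(\beta\) ensures \(\abs{\tau}\leq\sqrt{\amn{2}\blx}\), keeping the Gaussian density at \(\tau/\sqrt{\amn{2}\blx}\) bounded away from zero, and the constant \(\htdelta\) is engineered to collect the residual Berry-Esseen error into the final prefactor.

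For achievability I take the Neyman-Pearson region \(\oev\DEF\{S>\tau\}\) and use the Laplace-type identity
\[\EXS{\wma{\rno}{\mQ}}{e^{-\rno S}\IND{S>\tau}}=e^{-\rno\tau}\,\rno\int_{0}^{\infty}e^{-\rno u}\,\wma{\rno}{\mQ}(\tau<S\leq\tau+u)\,\dif{u}\]
(and its analogue for the \(\mW\)-side with \(\rno\) replaced by \(-(1-\rno)\)) to express both \(\mQ(\oev)\) and \(\mW(\outS_1^{\blx}\setminus\oev)\) as integrals of \(\wma{\rno}{\mQ}\)-probabilities of thin slabs. Applying Berry-Esseen slab-wise and evaluating the resulting exponential integrals in \(u\) yields closed-form upper bounds in terms of \(\rno,\amn{2},\amn{3},\tau,\blx\). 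Choosing \(\tau\) so that \(\mQ(\oev)\leq\beta e^{-\RD{1}{\wma{\rno}{\mQ}}{\mQ}}\) at this refined (Bahadur-Rao) level forces \(e^{\rno\tau}\) to be of order \(1/(\beta\sqrt{\blx})\), and substituting back into the bound on \(\mW(\outS_1^{\blx}\setminus\oev)\) produces the claimed factor \((\rno\beta)^{(\rno-1)/\rno}\blx^{-1/(2\rno)}e^{-\RD{1}{\wma{\rno}{\mQ}}{\mW}}\).

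The main obstacle is not conceptual but one of bookkeeping: selecting \(\tau\), \(\delta\), and the slab construction so that change-of-measure factors, the Berry-Esseen error, and the Gaussian density at \(\tau/\sqrt{\amn{2}\blx}\) combine into the stated prefactors with the explicit constant \(\htdelta\). In particular, the apparent mismatch between the naive Berry-Esseen scaling \(\blx^{-1/2}\) and the claimed \(\blx^{-1/(2\rno)}\) is reconciled only after one notes that imposing the constraint on \(\mQ(\oev)\) at the refined (rather than crude) level forces \(\tau\) to grow like \(\tfrac{1}{\rno}(\ln(1/\beta)-\tfrac{1}{2}\ln\blx)\), whose \((1-\rno)\)-tilt supplies the extra \(\blx^{-(1-\rno)/(2\rno)}\) factor on top of the \(\blx^{-1/2}\) coming from the Gaussian density.
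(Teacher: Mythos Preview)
Your proposal is correct and follows essentially the same route as the paper: both parts hinge on the change-of-measure identities \eqref{eq:CLQ}--\eqref{eq:CLW}, a slab of the centered log-likelihood sum, and the Berry--Esseen theorem, with the specific choices \(\tau=-\tfrac{\ln(\beta\sqrt{\blx})}{\rno}\) and \(\delta=\ln\htdelta\) in the converse exactly matching the paper's \(\tau_0,\tau_1\). The only departure is in the achievability part, where the paper partitions \(\{S\geq\gamma\}\) into \emph{unit-width} slabs \(\{\gamma+\knd\leq S<\gamma+\knd+1\}\), bounds each slab's \(\wma{\rno}{\mQ}\)-mass uniformly by Berry--Esseen, and sums the resulting geometric series, whereas you use the continuous Laplace representation; both yield the same \(\blx^{-1/(2\rno)}\) scaling, but the paper's discrete slabs are what produce the precise constant \(\bigl[\tfrac{1\vee\sqrt{8\pi\amn{2}}}{4\pi\amn{2}}\ln\htdelta\bigr]^{1/\rno}\tfrac{1}{1-\rno}\). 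One detail to watch: for the achievability bound on \(\mW(\outS_{1}^{\blx}\setminus\oev)\) you must include the \(\mQ\)-singular support of \(\mW\) in \(\oev\) (equivalently, set \(S=+\infty\) there), otherwise the singular mass of \(\mW\) leaks into the complement and spoils \eqref{eq:lem:HTBE-achievability-w}.
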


\begin{proof}[Proof of Lemma \ref{lem:HTBE}]
Let the random variables \(\cln{\tin}\) and \(\cln{}\) and the event \(\set{B}\) be
\begin{align}
\notag
\cln{\tin}
&\DEF\ln\der{\wmn{\tin,ac}}{\qmn{\tin}},
&
\cln{}
&\DEF \sum\nolimits_{\tin=1}^{\blx} \cln{\tin},
&
\set{B}
&\DEF \left\{\dout_{1}^{\blx}:\tau_0\leq \cln{}-\EXS{\wma{\rno}{\mQ}}{\cln{}}\leq \tau_1 \right\}. 
\end{align}
Thus \(\cln{}=\ln\der{\wmn{ac}}{\mQ}\) holds
\(\mQ\)-a.s. by the definitions of \(\cln{\tin}\) and \(\cln{}\).
Hence \eqref{eq:CLQ}, \eqref{eq:CLW},  and the definition of \(\set{B}\)
imply that
\begin{align}
	\notag
	\set{B}
	&=\left\{\dout_{1}^{\blx}:\rno\tau_0\leq
	\ln\der{\wma{\rno}{\mQ}}{\mQ}-
	\RD{1}{\wma{\rno}{\mQ}}{\mQ}
	\leq  \rno \tau_{1} 
	\right\},
	\\
	\notag
	&=\left\{\dout_{1}^{\blx}:
	(1-\rno)\tau_{0}\leq 
	\RD{1}{\wma{\rno}{\mQ}}{\mW}-\ln\der{\wma{\rno}{\mQ}}{\mW}
	\leq(1-\rno)\tau_{1} 
	\right\}.
	\end{align}
	Thus for any \(\oev\in \outA_{1}^{\blx}\), we have
	\begin{align}
	\label{eq:HTBE-1}
	\wma{\rno}{\mQ}(\oev \cap \set{B})
	&\leq \mQ(\oev) e^{\RD{1}{\wma{\rno}{\mQ}}{\mQ}+\rno\tau_{1}},
	\\
	\label{eq:HTBE-2}
	\mW({\outS_{1}^{\blx}}\setminus\oev)
	&\geq \wma{\rno}{\mQ}(\set{B} \setminus\oev )
	e^{-\RD{1}{\wma{\rno}{\mQ}}{\mW}+(1-\rno)\tau_{0}}.
	\end{align}
\begin{comment}
Then for any \(\oev\) satisfying \(\mQ(\oev)\leq \beta  e^{-\RD{1}{\wma{\rno}{\mQ}}{\mQ}}\) we have
	\begin{align}
	\mW({\outS_{1}^{\blx}}\setminus\oev)
	&\geq\left[\wma{\rno}{\mQ}(\set{B})-
	\beta e^{\rno\tau_{1}} \right]
	e^{-\RD{1}{\wma{\rno}{\mQ}}{\mW}+(1-\rno)\tau_{0}}.
	\end{align}
%\end{comment}

On the other hand, \(\cln{\tin}\)'s are jointly independent 
under the tilted probability measure \(\wma{\rno}{\mQ}\).
Thus the Berry-Esseen theorem, given in Lemma \ref{lem:berryesseen}, implies
	\begin{align}
	\notag
	\wma{\rno}{\mQ}(\set{B})
	&\geq \GCD{\tfrac{\tau_1}{\sqrt{\amn{2}\blx}}}-\GCD{\tfrac{\tau_0}{\sqrt{\amn{2}\blx}}}
	-2\tfrac{0.56}{\sqrt{\blx}} 
	\tfrac{\amn{3}}{\amn{2}\sqrt{\amn{2}}}
	\\
	\notag
	&=\tfrac{1}{\sqrt{2\pi}} \int_{\frac{\tau_{0}}{\sqrt{\amn{2}\blx}}}^{\frac{\tau_{1}}{\sqrt{\amn{2}\blx}}} e^{-\sfrac{\dsta^2}{2}} \dif{\dsta}
	-2\tfrac{0.56}{\sqrt{\blx}} \tfrac{\amn{3}}{\amn{2}\sqrt{\amn{2}}}
	\\
	\notag
	&\geq \tfrac{1}{\sqrt{2\pi}}e^{-\frac{(\abs{\tau_{0}}\vee \abs{\tau_{1}})^{2}}{2\blx\amn{2}}} 
	\tfrac{\tau_{1}-\tau_{0}}{\sqrt{\amn{2}\blx}}
	-2\tfrac{0.56}{\sqrt{\blx}} \tfrac{\amn{3}}{\amn{2}\sqrt{\amn{2}}}.
	\end{align}
	If we set 
	\(\tau_{0}=-\tfrac{2\ln\beta+\ln\blx}{2\rno}-\ln \htdelta\)
	and \(\tau_{1}=-\tfrac{2\ln\beta+\ln\blx}{2\rno}\), 
	then  \(-\sqrt{\amn{2}\blx}\leq \tau_{0}\leq \tau_{1}\leq \sqrt{\amn{2}\blx}\)
	by the hypothesis
	and
	\begin{align}
	\notag
	\wma{\rno}{\mQ}(\set{B})
	&\geq \tfrac{2}{\sqrt{\blx}}.
	\end{align}
	Furthermore, 
	\(\wma{\rno}{\mQ}(\oev \cap \set{B})\leq\tfrac{1}{\sqrt{\blx}}\)
	as a result of \eqref{eq:HTBE-1},
	\(e^{\rno\tau_{1}}=\tfrac{1}{\beta\sqrt{\blx}}\),  and 
	the hypothesis \(\mQ(\oev)\leq \beta e^{-\RD{1}{\wma{\rno}{\mQ}}{\mQ}}\).
	Thus \(\wma{\rno}{\mQ}(\set{B} \setminus\oev)\geq\tfrac{1}{\sqrt{\blx}}\).
	Then using  \eqref{eq:HTBE-2} and 
	\(e^{(1-\rno)\tau_{0}}=\beta^{\frac{\rno-1}{\rno}}\blx^{\frac{\rno-1}{2\rno}}\htdelta^{\rno-1}\)
	we get
	\begin{align}
	\notag
	\mW({\outS_{1}^{\blx}}\setminus\oev)
	&\geq \tfrac{1}{\sqrt{\blx}}
	e^{-\RD{1}{\wma{\rno}{\mQ}}{\mW}+(1-\rno)\tau_{0}}
	\\
	\notag
	&=\htdelta^{\rno-1}
	\beta^{\frac{\rno-1}{\rno}}
	\blx^{-\frac{1}{2\rno}}
	e^{-\RD{1}{\wma{\rno}{\mQ}}{\mW}}.
	\end{align}
\begin{remark}
	While deriving bounds similar to \eqref{eq:lem:HTBE-converse},
	the constants \(\tau_{0}\) and \(\tau_{1}\) are usually 
	assumed to satisfy \(\tau_{0}=-\tau_{1}\), see for example 
	\cite[Thm. 5]{shannonGB67A} or \cite[Thm. 11]{chengHT19}. 
	Such a choice, however, does not lead to tight bounds in our case. 
\end{remark}

To establish the existence of an event satisfying both 
\eqref{eq:lem:HTBE-achievability-q} and \eqref{eq:lem:HTBE-achievability-w}, 
let us consider the event \(\oev\) given in the following
\begin{align}
\label{eq:HTBE-C1}
\hspace{-.6cm}\oev
&\DEF\left\{\dout_{1}^{\blx}: \cln{}\geq\EXS{\wma{\rno}{\mQ}}{\cln{}}+\gamma \right\}
\cup \left\{\dout_{1}^{\blx}: \der{\mQ}{\rfm}=0 
~\&~ \der{\mW}{\rfm}>0\right\},
\end{align}
where \(\gamma\) is a real number to be determined later and
\(\rfm\) is any measure satisfying both
\(\mW\AC\rfm\) and \(\mQ\AC\rfm\).
\begin{remark}
The random variable \(\cln{}\) is defined only for \(\dout_{1}^{\blx}\)'s 
with a positive \(\der{\mQ}{\rfm}\). 
Thus one can define \(\cln{}\) to be infinite for \(\dout_{1}^{\blx}\)'s
satisfying both  \(\der{\mQ}{\rfm}=0\) and \(\der{\mW}{\rfm}>0\), 
and define the event \(\oev\) to be the event that \(\cln{}\) is greater than or equal 
to \(\EXS{\wma{\rno}{\mQ}}{\cln{}}+\gamma\). 	
\end{remark}
For the event \(\oev\) defined in \eqref{eq:HTBE-C1}, 
as a result of \eqref{eq:CLQ} we have
\begin{align}
\notag
\mQ(\oev)
&=\EXS{\mQ}{\IND{\cln{}\geq \EXS{\wma{\rno}{\mQ}}{\cln{}}+\gamma}}
\\
\notag
&=e^{-\RD{1}{\wma{\rno}{\mQ}}{\mQ}}
\EXS{\wma{\rno}{\mQ}}{\IND{\cln{}\geq \EXS{\wma{\rno}{\mQ}}{\cln{}}+\gamma}e^{-\rno (\cln{}-\EXS{\wma{\rno}{\mQ}}{\cln{}}) }}
\\
\label{eq:HTBE-C2}
&\leq e^{-\RD{1}{\wma{\rno}{\mQ}}{\mQ}-\rno \gamma}
\sum\nolimits_{\knd=0}^{\infty}
\wma{\rno}{\mQ}(\oev_{\knd}) e^{-\rno\knd},
\end{align}
where the event \(\oev_{\knd}\) is defined for each \(\knd\in\integers{}\)
to be
\begin{align}
\notag
\oev_{\knd}
&\DEF\{\dout_{1}^{\blx}:
\gamma+\knd\leq \cln{}-\EXS{\wma{\rno}{\mQ}}{\cln{}}<\gamma+\knd+1\}.
\end{align}
On the other hand, we can bound \(\wma{\rno}{\mQ}(\oev_{\knd})\)
uniformly for all integers \(\knd\)
using the Berry-Esseen theorem, i.e.,Lemma \ref{lem:berryesseen},
as follows
\begin{align}
\notag
\wma{\rno}{\mQ}(\oev_{\knd})
&\leq\GCD{\tfrac{\gamma+\knd+1}{\sqrt{\amn{2}\blx}}}
-\GCD{\tfrac{\gamma+\knd}{\sqrt{\amn{2}\blx}}}
+2\tfrac{0.56}{\sqrt{\blx}} 
\tfrac{\amn{3}}{\amn{2}\sqrt{\amn{2}}}
\\
\label{eq:HTBE-C3}
&\leq \tfrac{1}{\sqrt{\blx}}\left(\tfrac{1}{\sqrt{2\pi \amn{2}}}+2\tfrac{0.56\amn{3}}{\amn{2}\sqrt{\amn{2}}}\right). 
\end{align}
For \(\gamma=\tfrac{1}{\rno}\ln \left[\tfrac{1}{\sqrt{\blx}}\left(\tfrac{1}{\sqrt{2\pi \amn{2}}}+2\tfrac{0.56\amn{3}}{\amn{2}\sqrt{\amn{2}}}\right)
\tfrac{\beta^{-1}}{1-e^{-\rno}}\right]\),
\eqref{eq:lem:HTBE-achievability-q}
follows from \eqref{eq:HTBE-C2}, \eqref{eq:HTBE-C3}, 
and
\(\sum\nolimits_{\knd=0}^{\infty}e^{-\rno\knd}=\tfrac{1}{1-e^{-\rno}}\).
%\begin{align}
%\notag
%\mQ(\oev)
%&\leq 
%e^{-\RD{1}{\wma{\rno}{\mQ}}{\mQ}-\rno \gamma}
%\tfrac{1}{\sqrt{\blx}}\left(\tfrac{1}{\sqrt{2\pi \amn{2}}}+2\tfrac{0.56\amn{3}}{\amn{2}\sqrt{\amn{2}}}\right)
%\tfrac{1}{1-e^{-\tau}}
%\\
%\notag
%&\leq \beta
%e^{-\RD{1}{\wma{\rno}{\mQ}}{\mQ}}.
%\end{align}

\(\mW({\outS_{1}^{\blx}}\setminus\oev)\) is bounded following a similar analysis,
by invoking \eqref{eq:CLW}, instead of \eqref{eq:CLQ}:
\begin{align}
\notag
\mW({\outS_{1}^{\blx}}\setminus\oev)
&=\EXS{\mW}{\IND{\cln{}<\EXS{\wma{\rno}{\mQ}}{\cln{}}+\gamma}}
\\
\notag
&=e^{-\RD{1}{\wma{\rno}{\mQ}}{\mW}}
\EXS{\wma{\rno}{\mQ}}{\IND{\cln{}<\EXS{\wma{\rno}{\mQ}}{\cln{}}+\gamma}e^{(1-\rno) (\cln{}-\EXS{\wma{\rno}{\mQ}}{\cln{}}) }}
\\
\notag
&\leq
e^{-\RD{1}{\wma{\rno}{\mQ}}{\mW}+(1-\rno)\gamma}
\sum\nolimits_{\knd=-1}^{-\infty}
\wma{\rno}{\mQ}(\oev_{\knd}) e^{(\knd+1)(1-\rno)}
\\
\notag
&\leq
e^{-\RD{1}{\wma{\rno}{\mQ}}{\mW}+(1-\rno)\gamma}
\tfrac{1}{1-e^{\rno-1}}
\sup\nolimits_{\knd\in\integers{-}} \wma{\rno}{\mQ}(\oev_{\knd}).
\end{align}
Invoking first \eqref{eq:HTBE-C3} 
and then \(\tfrac{1}{1-e^{\rno-1}}(1-e^{-\rno})^{\frac{\rno-1}{\rno}}\leq 
\tfrac{1}{1-\rno} \rno^{\frac{\rno-1}{\rno}} e^{\frac{1}{2\rno}} \)
we get,
\begin{align}
\notag
\mW({\outS_{1}^{\blx}}\setminus\oev)
&\leq
e^{-\RD{1}{\wma{\rno}{\mQ}}{\mW}}\beta^{\frac{\rno-1}{\rno}}
\tfrac{1}{1-e^{\rno-1}}(1-e^{-\rno})^{\frac{\rno-1}{\rno}}
\left[\tfrac{1}{\sqrt{\blx}}\left(\tfrac{1}{\sqrt{2\pi \amn{2}}}+2\tfrac{0.56\amn{3}}{\amn{2}\sqrt{\amn{2}}}\right)
\right]^{\frac{1}{\rno}}
\\
\notag
&\leq
e^{-\RD{1}{\wma{\rno}{\mQ}}{\mW}}
\tfrac{1}{1-\rno}(\beta\rno)^{\frac{\rno-1}{\rno}}
\left[\tfrac{2\sqrt{e}}{\sqrt{\blx \amn{2}}}\left(\tfrac{1}{\sqrt{8\pi}}+\tfrac{0.56\amn{3}}{\amn{2}}\right)\right]^{\frac{1}{\rno}}.
\end{align}
Then \eqref{eq:lem:HTBE-achievability-w} follows from 
\(\left(\tfrac{1}{\sqrt{8\pi}}+\tfrac{0.56\amn{3}}{\amn{2}}\right)
\leq
%\left(\tfrac{1}{\sqrt{8 \pi  \amn{2}}}\vee 1\right)\tfrac{\ln \htdelta}{\sqrt{8 \pi e}}=
\tfrac{1\vee \sqrt{8 \pi  \amn{2}}}{8 \pi   \sqrt{\amn{2} e}}\ln \htdelta\).
\end{proof}

Lemma \ref{lem:HTBE} characterizes the asymptotic behavior 
of the trade-off between the optimal type I and type II error probabilities
for a hypothesis testing problem with independent samples:
\(\Pe_{II}^{(n)}\) is 
\(\bigtheta{\blx^{-\frac{1}{2\rno}}e^{-\RD{1}{\wma{\rno}{\mQ}}{\mW}}}\)
whenever
\(\Pe_{I}^{(n)}\) is 
\(\bigtheta{e^{-\RD{1}{\wma{\rno}{\mQ}}{\mQ}}}\).
For the stationary case, 
---i.e.,when \(\wmn{\tin}=\wmn{1}\), \(\qmn{\tin}=\qmn{1}\) 
for all \(\tin\)---
\csiszar and Longo \cite{csiszarL71} 
described how \eqref{eq:CLQ} and \eqref{eq:CLW}
can be used  together with an earlier result by Strassen,
\cite[Thm. 1.1]{strassen62},  
to characterize the exact 
asymptotic behavior of \(\Pe_{II}^{(n)}\) 
for the case when 
\(\Pe_{I}^{(n)}=e^{-\RD{1}{\wma{\rno}{\mQ}}{\mQ}}\),
i.e.,
\(\Pe_{II}^{(n)}=\left(K+\smallo{1}\right)
%\left[\ln\tfrac{\rno}{1-\rno}-\frac{\ln (\rno\sqrt{2\pi \amn{2}})}{\rns}+\smallo{1}\right]
\blx^{-\frac{1}{2\rno}}e^{-\RD{1}{\wma{\rno}{\mQ}}{\mW}}\),
with some minor inaccuracies discussed in Remark \ref{rem:CL}.
One does not need to rely on \cite[Thm. 1.1]{strassen62} of Strassen
to characterize this exact asymptotic behavior. 
The Berry-Esseen theorem, however, is not sufficient 
for determining the value of the constant \(K\). 
In order to determine the constant \(K\), one needs to invoke 
either finer characterizations of the asymptotic behavior 
of sums of independent random
variables ---such as the ones in \cite[\S IV.2,\S IV.3]{esseen45}, 
\cite[\S42,\S43]{gnedenkoK54}---
or apply other techniques
---such as the saddle point approximation described in \cite[Prop. 2.3.1]{jensen}.
It is worth noting that both of these approaches
require hypotheses stronger than that of the Berry-Esseen theorem.
The situation is similar for other values of \(\rno\),
but of no interest for our discussion of the sphere
packing bound.

\begin{remark}\label{rem:CL}
	We believe the approach of \cite{csiszarL71} 
	is sound. Its calculations, however, seem to have some mistakes.
	Repeating the calculations as described in \cite{csiszarL71}, we recover
	the second line of \cite[(33)]{csiszarL71} as 
	\(\ln \tfrac{\rns}{1-\rns}-\frac{\ln S_{1}\sqrt{2\pi}}{\rns}
	+\smallo{1}\).
	With this modification \cite[Thm. 2]{csiszarL71} is consistent with 
	the intimately related results about the SPB proved earlier
	\cite[(1.32), (1.33)]{dobrushin62B} and since then
	\cite[(38)]{vazquezFKL18}.
\end{remark}

\section{Augustin's Information Measure and The Sphere Packing Exponent}\label{sec:preliminary}
The ultimate aim of this section is to define the sphere packing exponent
and review the properties of it that will be useful in 
our analysis.  
For that, we first recall the definitions of Augustin's information measures
and review their elementary properties.

\subsection{Augustin's Information Measures}\label{sec:informationmeasures}
Let us start by recalling the definition of the  conditional \renyi divergence.
\begin{definition}\label{def:conditionaldivergence}
For any \(\rno\in\reals{+}\), \(\Wm:\inpS\to\pmea{\outA}\), \(\mQ\in\pmea{\outA}\),
and \(\mP\in\pdis{\inpS}\) \emph{the order \(\rno\) conditional \renyi divergence for 
the input distribution \(\mP\)} is
\begin{align}
\notag
%\label{eq:def:conditionaldivergence}
\CRD{\rno}{\Wm}{\mQ}{\mP}
&\DEF \sum\nolimits_{\dinp\in \inpS}  \mP(\dinp) \RD{\rno}{\Wm(\dinp)}{\mQ}.
\end{align}
\end{definition}

\begin{definition}\label{def:information}
	For any \(\rno\in\reals{+}\), \(\Wm:\inpS\to\pmea{\outA}\), and 
	\(\mP\in\pdis{\inpS}\)  
	\emph{the order \(\rno\) Augustin information for the input distribution \(\mP\)}
	is
	\begin{align}
\notag
%	\label{eq:def:information}
	\RMI{\rno}{\mP}{\Wm}
	&\DEF \inf\nolimits_{\mQ\in \pmea{\outA}} \CRD{\rno}{\Wm}{\mQ}{\mP}.
	\end{align}
\end{definition}
The infimum is achieved by a unique probability 
measure\footnote{We refrain from including the channel symbol \(\Wm\) 
	in the symbol for the Augustin mean
	because the channel will be clear from the context.} \(\qmn{\rno,\mP}\),
called
\emph{the order \(\rno\) Augustin mean for the input distribution \(\mP\)},
by  \cite[{Lemma \ref*{C-lem:information}}]{nakiboglu19C}. 
Furthermore,
\begin{align}
\notag
%\label{eq:augustinslaw}
\RD{1 \vee \rno}{\qmn{\rno,\mP}}{\mQ}
\geq
\CRD{\rno}{\Wm}{\mQ}{\mP}-\RMI{\rno}{\mP}{\Wm}
&\geq
\RD{1 \wedge \rno}{\qmn{\rno,\mP}}{\mQ}
\end{align}
for all \(\mQ\in\pmea{\outA}\) by \cite[{Lemma \ref*{C-lem:information}}]{nakiboglu19C}, as well.

The Augustin information is continuously differentiable in 
its order on \(\reals{+}\),
and its derivative is given by 
\begin{align}
\label{eq:lem:informationO:differentiability-alt}
\pder{}{\rno}\RMI{\rno}{\mP}{\Wm}
&=\begin{cases}
\tfrac{1}{(\rno-1)^2}\CRD{1}{\Wma{\rno}{\qmn{\rno,\mP}}}{\Wm}{\mP}
&\rno\neq 1
\\
\sum\nolimits_{\dinp}\tfrac{\mP(\dinp)}{2}
\VXS{\Wm(\dinp)}{\ln \der{\Wm(\dinp)}{\qmn{1,\mP}}}
&\rno= 1
\end{cases}
\end{align}
by \cite[Lemma {\ref*{C-lem:informationO}-(\ref*{C-informationO:differentiability})}]{nakiboglu19C},
where \(\Wma{\rno}{\qmn{\rno,\mP}}\) is the tilted channel defined as follows.
\begin{definition}\label{def:tiltedchannel}
	For any \(\rno\in\reals{+}\), \(\Wm:\inpS\to\pmea{\outA}\) and \(\mQ\in\pmea{\outA}\),
	\emph{the order \(\rno\) tilted channel \(\Wma{\rno}{\mQ}\)} is a function
	from  \(\{\dinp:\RD{\rno}{\Wm(\dinp)}{\mQ}<\infty\}\) to \(\pmea{\outA}\)
	given by
	\begin{align}
	\label{eq:def:tiltedchannel}
	\der{\Wma{\rno}{\mQ}(\dinp)}{\rfm}
	&\DEF e^{(1-\rno)\RD{\rno}{\Wm(\dinp)}{\mQ}}
	\left(\der{\Wm(\dinp)}{\rfm}\right)^{\rno} 
	\left(\der{\mQ}{\rfm}\right)^{1-\rno}.
	\end{align}
\end{definition}
The tilted channel can be used to express \(\RMI{\rno}{\mP}{\Wm}\)
in terms of the Kullback-Leibler divergences
using\footnote{It is worth noting that \eqref{eq:lem:information:alternative:opt} follows from 
	\eqref{eq:varitional-tilted} and \(\RMI{\rno}{\mP}{\Wm}=\CRD{\rno}{\Wm}{\qmn{\rno,\mP}}{\mP}\)
	for \(\rno\) values in \((0,1)\).}  
\cite[Lemma {\ref*{C-lem:information}}-({\ref*{C-information:alternative}})]{nakiboglu19C}:
\begin{align}
\label{eq:lem:information:alternative:opt}
\RMI{\rno}{\mP}{\Wm}
&=\tfrac{\rno}{1-\rno}\CRD{1}{\Wma{\rno}{\qmn{\rno,\mP}}}{\Wm}{\mP}
+\CRD{1}{\Wma{\rno}{\qmn{\rno,\mP}}}{\qmn{\rno,\mP}}{\mP}.
\end{align}
Furthermore, the Augustin mean satisfies
\begin{align}
\label{eq:augustinfixedpoint}
\sum\nolimits_{\dinp}\mP(\dinp) \Wma{\rno}{\qmn{\rno,\mP}}(\dinp)
&=\qmn{\rno,\mP}
\end{align}
and Augustin mean is the only probability measure 
satisfying both \(\qmn{1,\mP}\AC \mQ\)
and \(\sum\nolimits_{\dinp}\mP(\dinp) \Wma{\rno}{\mQ}(\dinp)=\mQ\)
by \cite[{Lemma \ref*{C-lem:information}}]{nakiboglu19C},
where \(\qmn{1,\mP}=\sum_{\dinp}\mP(\dinp)\Wm(\dinp)\). 
Thus for all \(\rno\in\reals{+}\) we have
\begin{align}
\label{eq:haroutunian}
\CRD{1}{\Wma{\rno}{\qmn{\rno,\mP}}}{\qmn{\rno,\mP}}{\mP}
&=\RMI{1}{\mP}{\Wma{\rno}{\qmn{\rno,\mP}}}.
\end{align}

\begin{definition}\label{def:capacity}
	For any \(\rno\in\reals{+}\), \(\Wm:\inpS\to\pmea{\outA}\), and \(\cset\subset\pdis{\inpS}\),
	\emph{the order \(\rno\) Augustin capacity of \(\Wm\) for the constraint set \(\cset\)} is 
	\begin{align}
	\label{eq:def:capacity}
	\CRC{\rno}{\Wm}{\cset}
	&\DEF \sup\nolimits_{\mP \in \cset}  \RMI{\rno}{\mP}{\Wm}.
	\end{align}
	When the constraint set \(\cset\) is the whole \(\pdis{\inpS}\), we denote the order \(\rno\) 
	Augustin capacity by \(\RC{\rno}{\Wm}\), i.e.,
	\(\RC{\rno}{\Wm}\DEF\CRC{\rno}{\Wm}{\pdis{\inpS}}\).
\end{definition}
Using the definitions of the Augustin information and capacity, 
we obtain  the following expression for the latter
\begin{align}
\label{eq:capacity}
\CRC{\rno}{\Wm}{\cset}
&=\sup\nolimits_{\mP \in \cset}\inf\nolimits_{\mQ\in\pmea{\outA}} \CRD{\rno}{\Wm}{\mQ}{\mP}.
\end{align}
If \(\cset\) is convex, then the order of the supremum and the infimum can be changed 
as a result of \cite[Thm. \ref*{C-thm:minimax}]{nakiboglu19C}:
\begin{align}
\label{eq:thm:minimax}
\sup\nolimits_{\mP \in \cset}\inf\nolimits_{\mQ\in\pmea{\outA}} \CRD{\rno}{\Wm}{\mQ}{\mP}
&=\inf\nolimits_{\mQ\in\pmea{\outA}}\sup\nolimits_{\mP \in \cset} \CRD{\rno}{\Wm}{\mQ}{\mP}.
\end{align}
If in addition \(\CRC{\rno}{\Wm}{\cset}\) is finite, then 
there exists a unique probability measure \(\qmn{\rno,\Wm,\cset}\),
\emph{called the order \(\rno\) Augustin center of \(\Wm\) for the constraint set \(\cset\)}, 
satisfying
\begin{align}
\label{eq:thm:minimaxcenter}
\CRC{\rno}{\Wm}{\cset}
&=\sup\nolimits_{\mP \in \cset} \CRD{\rno}{\Wm}{\qmn{\rno,\Wm,\cset}}{\mP}
\end{align}
by \cite[Thm. \ref*{C-thm:minimax}]{nakiboglu19C}.

We denote the set of all probability mass functions satisfying a cost constraint \(\costc\) by \(\cset(\costc)\), i.e.
\begin{align}
\label{eq:def:costconstraint}
\cset(\costc)
&\DEF \{\mP\in\pdis{\inpS}:\EXS{\mP}{\costf}\leq\costc\}.
\end{align}
With a slight abuse of notation, we denote
the cost-constrained Augustin capacity 
\(\CRC{\rno}{\Wm}{\cset(\costc)}\)
by
\(\CRC{\rno}{\Wm}{\costc}\),
as well.
A more detailed discussion of Augustin's information measures 
can be found in \cite{nakiboglu19C}. 

\subsection{The Sphere Packing Exponent}\label{sec:spherepackingexponent}
\begin{definition}\label{def:spherepackingexponent}
	For any \(\Wm:\inpS\to \pmea{\outA}\), \(\cset\subset\pdis{\inpS}\), 
	and \(\rate\in\reals{+}\), the sphere packing exponent (SPE)  is
	\begin{align}
	\label{eq:def:spherepackingexponent}
	\spe{\rate,\Wm,\cset}
	&\DEF \sup\nolimits_{\rno\in (0,1)} \tfrac{1-\rno}{\rno} \left(\CRC{\rno}{\Wm}{\cset}-\rate\right).
	\end{align}
	When the constraint set \(\cset\) is the whole \(\pdis{\inpS}\), 
	we denote SPE by \(\spe{\rate,\Wm}\), i.e.
	\(\spe{\rate,\Wm}\DEF\spe{\rate,\Wm,\pdis{\inpS}}\).
\end{definition}

With a slight abuse of notation, we denote 
SPE for \(\cset(\costc)\) by \(\spe{\rate,\Wm,\costc}\)
and SPE for \(\cset=\{\mP\}\)  
case by \(\spe{\rate,\Wm,\mP}\).
Note that as a result of definitions of Augustin capacity and 
the SPE we have
\begin{align}
\label{eq:lem:spherepacking:compositionconstrained}
\spe{\rate,\Wm,\cset}
&=\sup\nolimits_{\mP\in\cset} \spe{\rate,\Wm,\mP}.
\end{align}

Furthermore,
since  \(\RMI{\rno}{\mP}{\Wm}\) is continuously differentiable in \(\rno\)
by \cite[Lemma {\ref*{C-lem:informationO}-(\ref*{C-informationO:differentiability})}]{nakiboglu19C},
we can apply the derivative test to find the optimal \(\rno\) in 
\eqref{eq:def:spherepackingexponent} for \(\cset=\{\mP\}\) case:
using 
\eqref{eq:lem:informationO:differentiability-alt}
and
\eqref{eq:lem:information:alternative:opt}
we get
\begin{align}
\label{eq:parametric-derivative}
\pder{}{\rno}\tfrac{1-\rno}{\rno}\left(\RMI{\rno}{\mP}{\Wm}-\rate\right)
&=\tfrac{1}{\rno^{2}}\left(\rate-\CRD{1}{\Wma{\rno}{\qmn{\rno,\mP}}}{\qmn{\rno,\mP}}{\mP}\right).
\end{align}
On the other hand, either 
\(\CRD{1}{\Wma{\rno}{\qmn{\rno,\mP}}}{\qmn{\rno,\mP}}{\mP}=\RMI{1}{\mP}{\Wm}\) 
for all positive orders \(\rno\),
or \(\CRD{1}{\Wma{\rno}{\qmn{\rno,\mP}}}{\qmn{\rno,\mP}}{\mP}\) 
is increasing and continuous in the order \(\rno\) on \(\reals{+}\)
by 
\cite[Lemma {\ref*{C-lem:informationO}-(\ref*{C-informationO:monotonicityofharoutunianinformation})}]{nakiboglu19C}.
Furthermore,
\(\CRD{1}{\Wma{1}{\qmn{1,\mP}}}{\qmn{1,\mP}}{\mP}\) 
is equal to  \(\RMI{1}{\mP}{\Wm}\)
by definition and 
\(\lim\nolimits_{\rno \downarrow 0} \CRD{1}{\Wma{\rno}{\qmn{\rno,\mP}}}{\qmn{\rno,\mP}}{\mP}\)
is equal to \(\lim\nolimits_{\rno \downarrow 0}\RMI{\rno}{\mP}{\Wm}\) 
by \eqref{eq:haroutunian} and
\cite[Lemma {\ref*{C-lem:informationO}-(\ref*{C-informationO:limitofharoutunianinformation})}]{nakiboglu19C}.
Thus
for any rate \(\rate\) in 
\((\lim_{\rno \downarrow 0}\RMI{\rno}{\mP}{\Wm},\RMI{1}{\mP}{\Wm})\)
there exists an order \(\rns\in(0,1)\) satisfying 
\begin{align}
\label{eq:parametric-haroutunianform-rate}
\rate
&=\CRD{1}{\Wma{\rns}{\qmn{\rns,\mP}}}{\qmn{\rns,\mP}}{\mP}
\end{align}
by the intermediate value theorem \cite[4.23]{rudin}.
The order \(\rns\) satisfying \eqref{eq:parametric-haroutunianform-rate}
is unique  because \(\CRD{1}{\Wma{\rno}{\qmn{\rno,\mP}}}{\qmn{\rno,\mP}}{\mP}\)
is increasing in \(\rno\). 
The monotonicity of
\(\CRD{1}{\Wma{\rno}{\qmn{\rno,\mP}}}{\qmn{\rno,\mP}}{\mP}\) in \(\rno\)
and  \eqref{eq:parametric-derivative} also imply 
\(\spe{\rate,\Wm,\mP}=\tfrac{1-\rns}{\rns}\left(\RMI{\rns}{\mP}{\Wm}-\rate\right)\).
Thus as a result of  \eqref{eq:lem:information:alternative:opt}, 
the unique \(\rns\) satisfying \eqref{eq:parametric-haroutunianform-rate} also satisfies
\begin{align}
\label{eq:parametric-haroutunianform-exponent}
\spe{\rate,\Wm,\mP}
&=\CRD{1}{\Wma{\rns}{\qmn{\rns,\mP}}}{\Wm}{\mP}.
\end{align} 
Since \(\CRD{1}{\Wma{\rno}{\qmn{\rno,\mP}}}{\qmn{\rno,\mP}}{\mP}\) 
is continuous and increasing in \(\rno\), 
its inverse is increasing  and continuous, as well. 
Thus the definition of 
SPE given in \eqref{eq:def:spherepackingexponent} 
and the definition of derivative as a limit
imply that for any \(\rate\) in 
\((\lim_{\rno \downarrow 0}\RMI{\rno}{\mP}{\Wm},\RMI{1}{\mP}{\Wm})\)
the unique order \(\rns\)
satisfying \eqref{eq:parametric-haroutunianform-rate}
also satisfies
\begin{align}
\label{eq:parametric-haroutunianform-slope}
\pder{}{\rate}\spe{\rate,\Wm,\mP}
&=\tfrac{\rns-1}{\rns},
\end{align}
as was established in \cite[Lemma \ref*{D-lem:spherepacking-cc}]{nakiboglu18D}.

\section{The Refined Sphere Packing Bound}\label{sec:RSPB}
In this section, we consider the channel coding problem 
for various channel models
and derive lower bounds to the error probability of 
the following form
\begin{align}
\label{eq:rSPB}
\Pem{(\blx)}
&\geq A \blx^{\frac{\dspe{\rate}-1}{2}} e^{-\blx \spe{\rate}}
&
&\forall \blx \geq \blx_{0}.
\end{align}
for constants \(A\) and \(\blx_{0}\) determined by the rate, the channel,  
and the constraints on the codes ---if there exist any. 
Following \cite{altugW11,altugW14A,chengHT19}, 
we call these bounds refined sphere packing bounds (refined SPBs)
because of their resemblance to the standard SPBs,
e.g. \cite{shannonGB67A,haroutunian68,augustin69,
	ebert66,richters67,gallager,omura75,csiszarkorner,wyner88-b,
	burnashevK99,arikan02,lapidoth93,augustin78,nakiboglu19B,nakiboglu19E,nakiboglu17,nakiboglu18D,
	dalai13A,dalaiW17},
establishing\footnote{The approximation error terms in standard SPBs 
	are usually \(\bigo{\sqrt{\blx}}\) 
	or \(\bigo{\ln \blx}\), rather than just \(\smallo{\blx}\).}
\begin{align}
\label{eq:sSPB}
\Pem{(\blx)}
&\geq e^{-\blx\spe{\rate}-\smallo{\blx}}
&
&\forall \blx \geq \blx_{0}.
\end{align}
The refined SPBs that we state and prove in this section are not formally 
particular cases of a general proposition. Nevertheless, they are all consequences 
of Lemma \ref{lem:HTBE} and the properties of Augustin's information measures.

We establish a refined SPB for
the constant composition codes in Subsection \ref{sec:RSPB:constantcomposition},
for codes on (possibly) non-stationary \renyi symmetric channels 
in  Subsection \ref{sec:RSPB:symmetric},  
and 
for codes on additive white Gaussian noise channels with 
quadratic cost functions in  Subsection \ref{sec:RSPB-AWGNC}.

\subsection{Constant Composition Codes}\label{sec:RSPB:constantcomposition}
\begin{theorem}\label{thm:constantcomposition}
	For any \(\Wm:\inpS\to\pmea{\outA}\), 
	\(M,L,\blx\in\integers{+}\),
	\(\mP\in\pdis{\inpS}\) satisfying 
	\(\lim\nolimits_{\rno \downarrow 0}\RMI{\rno}{\mP}{\Wm}
	<\tfrac{1}{\blx}\ln \tfrac{M}{L}<\RMI{1}{\mP}{\Wm} \)
	and
	\(\blx\mP(\dinp)\in\integers{\geq0}\) for all \(\dinp\in\inpS\),
	the order \(\rns\DEF\tfrac{1}{1-\dspe{\frac{1}{\blx}\ln\frac{M}{L},\Wm,\mP}}\) 
	satisfies
	\begin{align}
	\label{eq:thm:constantcomposition-hypothesis}
	\CRD{1}{\Wma{\rns}{\qmn{\rns,\mP}}}{\qmn{\rns,\mP}}{\mP}
	&=\tfrac{1}{\blx}\ln \tfrac{M}{L}.
	\end{align}
	Furthermore, any \((M,L)\) channel code of length \(\blx\)
	whose codewords all have the same composition \(\mP\) satisfies
	\begin{align}
	\label{eq:thm:constantcomposition}
	\Pem{(\blx)}
	&\geq \htdelta^{\rns-1} (4\blx)^{-\frac{1}{2\rns}}
	e^{-\blx \spe{\frac{1}{\blx}\ln\frac{M}{L},\Wm,\mP}}	
	\end{align}	
	provided that 
	\(\sqrt{\amn{2}\blx}-\tfrac{\ln4\blx}{2\rns}\geq \ln \htdelta\)
	where
\begin{align}
%\label{eq:thm:constantcomposition-a2}
\notag
\amn{2}&\DEF
\EXS{\mP\mtimes\Wma{\rns}{\qmn{\rns,\mP}}}{\abs{\ln\der{\Wm}{\qmn{\rns,\mP}}-\EXS{\Wma{\rns}{\qmn{\rns,\mP}}}{\ln\der{\Wm}{\qmn{\rns,\mP}}}}^{2}},
\\
%\label{eq:thm:constantcomposition-a3}
\notag
\amn{3}&\DEF
\EXS{\mP\mtimes\Wma{\rns}{\qmn{\rns,\mP}}}{
	\abs{\ln\der{\Wm}{\qmn{\rns,\mP}}-\EXS{\Wma{\rns}{\qmn{\rns,\mP}}}{\ln\der{\Wm}{\qmn{\rns,\mP}}}
	}^{3}},
\\
%\label{eq:thm:constantcomposition-htdelta}
\notag
\htdelta
&\DEF \exp\left(2\sqrt{2\pi e}\left(0.56\tfrac{\amn{3}}{\amn{2}}+\sqrt{\amn{2}}\right)\right).
\end{align}
\end{theorem}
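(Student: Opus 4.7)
The plan is to split the argument into two parts: first, identify $\rns$ by invoking the parametric description of the sphere packing exponent recalled in Section \ref{sec:spherepackingexponent}, and second, combine a standard hypothesis testing reduction for list-decoding codes with Lemma \ref{lem:HTBE} applied to the product of the channels $\Wm(\enc(\dmes))$ against the product auxiliary measure $\mQ=\qmn{\rns,\mP}^{\otimes \blx}$.

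For equation \eqref{eq:thm:constantcomposition-hypothesis}, the hypothesis $\lim_{\rno\downarrow 0}\RMI{\rno}{\mP}{\Wm} < \tfrac{1}{\blx}\ln\tfrac{M}{L}<\RMI{1}{\mP}{\Wm}$ places the rate in the regime where \eqref{eq:parametric-haroutunianform-rate} applies; the intermediate value theorem then yields a unique $\rno\in(0,1)$ satisfying $\CRD{1}{\Wma{\rno}{\qmn{\rno,\mP}}}{\qmn{\rno,\mP}}{\mP}=\tfrac{1}{\blx}\ln\tfrac{M}{L}$, and by \eqref{eq:parametric-haroutunianform-slope} this order equals $\tfrac{1}{1-\dspe{\frac{1}{\blx}\ln\frac{M}{L},\Wm,\mP}}$, i.e., $\rns$.

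For the main bound, I would proceed with the classical list-decoding reduction. Setting $D_{\dmes}\DEF\{\dout_1^{\blx}:\dmes\in\dec(\dout_1^{\blx})\}$, the list constraint yields $\sum_{\dmes}\mQ(D_{\dmes})\leq L$, so an averaging/Markov argument shows that at least $\tfrac{M}{2}$ messages $\dmes$ satisfy $\mQ(D_{\dmes})\leq \tfrac{2L}{M}$. For any such $\dmes$, I would apply Lemma \ref{lem:HTBE} with $\wmn{\tin}=\Wm(\dinp_\tin)$, $\qmn{\tin}=\qmn{\rns,\mP}$ (where $\enc(\dmes)=(\dinp_1,\ldots,\dinp_{\blx})$), $\rno=\rns$, $\oev=D_{\dmes}$, and $\beta=2$. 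The constant-composition hypothesis is essential here: it makes $\mW=\otimes_\tin\Wm(\dinp_\tin)$ change with $\dmes$ but ensures the single-letter tilting identity $\wma{\rns}{\mQ}=\otimes_\tin \Wma{\rns}{\qmn{\rns,\mP}}(\dinp_\tin)$ produces the codeword-independent quantities
\begin{align*}
\RD{1}{\wma{\rns}{\mQ}}{\mQ}&=\blx\, \CRD{1}{\Wma{\rns}{\qmn{\rns,\mP}}}{\qmn{\rns,\mP}}{\mP}=\ln\tfrac{M}{L},\\
\RD{1}{\wma{\rns}{\mQ}}{\mW}&=\blx\, \CRD{1}{\Wma{\rns}{\qmn{\rns,\mP}}}{\Wm}{\mP}=\blx\,\spe{\tfrac{1}{\blx}\ln\tfrac{M}{L},\Wm,\mP},
\end{align*}
by \eqref{eq:thm:constantcomposition-hypothesis} and \eqref{eq:parametric-haroutunianform-exponent}, while the averages of second and third absolute moments of the log-likelihood ratio under $\wma{\rns}{\mQ}$ collapse to the single-letter expressions for $\amn{2}$ and $\amn{3}$ in the theorem statement. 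The hypothesis $\mQ(D_{\dmes})\leq \tfrac{2L}{M}=\beta e^{-\RD{1}{\wma{\rns}{\mQ}}{\mQ}}$ then holds with $\beta=2$; the range condition $\beta\leq\htdelta^{-\rns}\blx^{-\sfrac{1}{2}}e^{\rns\sqrt{\amn{2}\blx}}$ reduces, after taking logarithms, exactly to $\sqrt{\amn{2}\blx}-\tfrac{\ln 4\blx}{2\rns}\geq\ln\htdelta$, while $\beta\geq\blx^{-\sfrac{1}{2}}e^{-\rns\sqrt{\amn{2}\blx}}$ is automatic.

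The conclusion \eqref{eq:lem:HTBE-converse} then gives $\Pem{\dmes}=\mW(\outS_1^{\blx}\setminus D_{\dmes})\geq \htdelta^{\rns-1}\,2^{\frac{\rns-1}{\rns}}\,\blx^{-\frac{1}{2\rns}}e^{-\blx\spe{\cdot}}$ for at least $\tfrac{M}{2}$ messages, and the algebraic identity $2^{\frac{\rns-1}{\rns}}\blx^{-\frac{1}{2\rns}}=2\,(4\blx)^{-\frac{1}{2\rns}}$ compensates the factor $\tfrac{1}{2}$ from averaging over messages, yielding \eqref{eq:thm:constantcomposition}. The only delicate point I anticipate is the bookkeeping to show that $\amn{2}$ and $\amn{3}$ as defined via $\wma{\rns}{\mQ}$ in Lemma \ref{lem:HTBE} coincide with the single-letter averages in the theorem, and the verification that the identity \eqref{eq:augustinfixedpoint} is not actually needed here (since $\mQ$ is chosen as the Augustin mean at the single-letter level, not reconstructed from a fixed point on $\outS_1^{\blx}$); the remaining algebra is a direct unwinding of constant composition.
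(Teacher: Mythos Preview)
Your proposal is correct and follows essentially the same route as the paper's own proof: identify $\rns$ via \eqref{eq:parametric-haroutunianform-rate}--\eqref{eq:parametric-haroutunianform-slope}, set $\mQ=\qmn{\rns,\mP}^{\otimes\blx}$, use the list constraint plus Markov to isolate at least half the messages with $\mQ(D_{\dmes})\leq 2L/M$, apply Lemma \ref{lem:HTBE} with $\beta=2$, and average. Your explicit checks of the range condition and of the identity $2^{\frac{\rns-1}{\rns}}\blx^{-\frac{1}{2\rns}}=2(4\blx)^{-\frac{1}{2\rns}}$ (which absorbs the factor $\tfrac12$ from averaging) are details the paper leaves implicit, but the argument is the same.
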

Although Theorem \ref{thm:constantcomposition} itself is 
composition-dependent, 
it implies ---via appropriate worst-case assumptions--- 
composition-independent bounds, such as
\cite[Thm 1]{altugW14A}.
Similar composition-dependent \cite[Proposition 13]{chengHT19}
and composition independent \cite[Thm 8]{chengHT19} bounds
have, recently, been derived for classical-quantum channels using 
an approach similar to that of \cite{altugW14A}.
The primary advantages of Theorem \ref{thm:constantcomposition}
over the previous results are the conceptual simplicity and brevity 
of its proof and its definite approximation error terms.

\begin{proof}[Proof of Theorem \ref{thm:constantcomposition}]
	The existence of a unique order \(\rns\) satisfying
	\eqref{eq:thm:constantcomposition-hypothesis} was 
	proved, and its value was determined  in Section \ref{sec:preliminary}, 
	see \eqref{eq:parametric-haroutunianform-rate},
	\eqref{eq:parametric-haroutunianform-exponent},
	and
	\eqref{eq:parametric-haroutunianform-slope}.
	
	Let probability measures \(\wmn{\dmes}\), \(\mQ\),  and \(\vmn{\dmes}\)
	in \(\pmea{\outA_{1}^{\blx}}\) be
	\begin{align}
\notag
\wmn{\dmes} 
&\DEF \bigotimes\nolimits_{\tin=1}^{\blx} \Wm(\enc_{\tin}(\dmes)),
&
\mQ
&\DEF \bigotimes\nolimits_{\tin=1}^{\blx} \qmn{\rns,\mP},
&	
\vmn{\dmes} 
&\DEF \bigotimes\nolimits_{\tin=1}^{\blx} \Wma{\rns}{\qmn{\rns,\mP}}(\enc_{\tin}(\dmes)).
	\end{align}
	Then \(\vmn{\dmes}\) is equal to the order \(\rns\)
	tilted probability measure between \(\wmn{\dmes}\) and \(\mQ\).
	Furthermore, the empirical distribution of 
	the all of the codewords
	---i.e.,all \(\enc(\dmes)\)'s--- are equal to \(\mP\)
	by the hypothesis; thus we have	
	\begin{align}
	\notag
	\RD{1}{\vmn{\dmes}}{\mQ}
	&=\blx\CRD{1}{\Wma{\rns}{\qmn{\rns,\mP}}}{\qmn{\rns,\mP}}{\mP}
	&
	&\dmes\in\mesS, 
	\\
	\notag
	\RD{1}{\vmn{\dmes}}{\wmn{\dmes}}
	&=\blx\CRD{1}{\Wma{\rns}{\qmn{\rns,\mP}}}{\Wm}{\mP}
	&
	&\dmes\in\mesS. 
	\end{align}
On the other hand, \(\sum_{m\in\mesS}\mQ(\dmes\in\dec)\leq L\)
by the definition of list decoding. 
Thus 
at least half of the messages in \(\mesS\)
---at least \(\lfloor \tfrac{M+1}{2} \rfloor\) of the messages in \(\mesS\) to be precise---
 will satisfy \(\mQ(\dmes\in\dec)\leq \tfrac{2 L}{M}\)
as a result of Markov's inequality.
Applying Lemma \ref{lem:HTBE} 
with \(\oev=\{\dout_{1}^{\blx}:\dmes\in\dec(\dout_{1}^{\blx})\}\)
and \(\beta=2\) for the messages satisfying
\(\mQ(\dmes\in\dec)\leq \tfrac{2 L}{M}\), we get
\begin{align}
\notag
\Pem{\dmes}
&\geq\htdelta^{\rns-1}2^{\frac{\rns-1}{\rns}}
\blx^{-\frac{1}{2\rns}}e^{-\blx\CRD{1}{\Wma{\rns}{\qmn{\rns,\mP}}}{\Wm}{\mP}}	
\end{align}
as long as
\(\sqrt{\amn{2}\blx}-\tfrac{\ln4\blx}{2\rns}\geq \ln \htdelta\).
Then \eqref{eq:thm:constantcomposition}
follows from
\eqref{eq:parametric-haroutunianform-rate},
\eqref{eq:parametric-haroutunianform-exponent},
\eqref{eq:thm:constantcomposition-hypothesis},
and the definition error probability as the
average of the conditional error probabilities
of the messages.
\end{proof}
%\begin{remark}
%Note that the lower bound on \(\Pem{\dmes}\) 
%---for \(\dmes\)'s satisfying \(\mQ(\dmes\in\dec)\leq \sfrac{2 L}{M}\)--- 
%depends on the encoding scheme only through the composition 
%of the code and it is independent of the particular choice 
%of the codeword for the message \(\dmes\) as long as 
%\(\mQ(\dmes\in\dec)\leq \sfrac{2 L}{M}\) holds.
%\end{remark}

\subsection{Codes On \renyi Symmetric Channels}\label{sec:RSPB:symmetric}
\begin{definition}\label{def:symmetry}
A channel \(\Wm:\inpS\to\pmea{\outA}\) satisfying  
\(\Wm\AC\rfm\) for some \(\rfm\in\pmea{\outA}\)
is \renyi symmetric iff
for each \(\rno\in\reals{+}\) with finite \(\RC{\rno}{\Wm}\)
there exists a function \(\GX_{\rno}^{\Wm}(\cdot):\reals{}\to[0,1]\)
satisfying 
\begin{align}
\label{eq:def:symmetry}
\Wm\left(\left.
\left\{\der{\Wm(\dinp)}{\rfm}\leq e^{\mS}\der{\qmn{\rno,\Wm}}{\rfm}\right\} \right\vert\dinp\right)
&=\GX_{\rno}^{\Wm}(\mS)
&
&\forall\dinp\in\inpS,\mS\in\reals{}. 
\end{align}
%for all  \(\dinp\in\inpS\) and \(\mS\in\reals{}\).
\end{definition}

\begin{remark}
If \(\Wm\) is \renyi symmetric, then
the identity 
\(\lim\nolimits_{\mS\downarrow-\infty}\GX_{\rno}^{\Wm}(\mS)=0\)
holds 
whenever \(\RC{\rno}{\Wm}\) is finite.
On the other hand, the identity
\(\lim\nolimits_{\mS\uparrow\infty}\GX_{\rno}^{\Wm}(\mS)=1\)
is violated
whenever \(\Wm(\dinp)\NAC \qmn{\rno,\Wm}\),
which can only happen for \(\rno\)'s in \((0,1)\).
Such a \renyi symmetric \(\Wm\) is obtained
by removing \(\wma{}{\ind,\ind}\) from \({\cal W}_{\ind}\) described
in \cite[Example {\ref*{A-eg:singular-countable}}]{nakiboglu19A}
and the resulting \(\GX_{\rno}^{\Wm}\) is given by
\(\GX_{\rno}^{\Wm}(\mS)=(\sfrac{1}{2})\IND{\mS\geq\ln\!\sfrac{1}{2}}\).
\end{remark}

The \renyi symmetry holds 
for all input symmetric channels described in \cite[Definition 3.2]{wiechmanS08}
and 
for all the Gallager symmetric channels described in \cite[p. 94]{gallager}, 
see Appendix \ref{sec:renyi-symmetry}.
Recall that the Gallager symmetry holds for all
strongly symmetric (Dobrushin symmetric) channels, which is described in \cite{dobrushin62B}.
The binary symmetric channel is strongly symmetric.
The binary erasure channel is Gallager symmetric but not strongly symmetric.
The binary input Gaussian channel is input symmetric according 
but not Gallager symmetric.
The Rayleigh fading channel with per coherence interval
power constraint analyzed in \cite[(3)]{lanchoODKV19B} is 
\renyi symmetric, by \cite[(7) and (10)]{lanchoODKV19B},
but not input symmetric, see \cite[(5)]{lanchoODKV19B}.
\begin{remark}
The input symmetry described in \cite[Definition 3.2]{wiechmanS08} can be generalized 
by relying on a compact group with the associated Haar measure,
rather than a finite additive group with the uniform distribution.
The Rayleigh fading channel with per coherence interval
power constraint analyzed in \cite{lanchoODKV19B}
is input symmetric for this more general definition. 
The covariant channels analyzed by Holevo in \cite{holevo02}, 
can be seen as the counterparts of \cite[Definition 3.2]{wiechmanS08}
and its generalization in the framework of Quantum Information Theory.
\end{remark}

The derivation of the refined SPB for the \renyi symmetric channels
is analogous to the derivation of the refined SPB for the constant composition codes.
Lemma \ref{lem:parametric-symmetric}, given in the following,
is used  in lieu of
\eqref{eq:parametric-haroutunianform-rate},
\eqref{eq:parametric-haroutunianform-exponent},
\eqref{eq:parametric-haroutunianform-slope}
in the latter derivation.
\begin{lemma}\label{lem:parametric-symmetric}
For any \renyi symmetric channel \(\Wm:\inpS\to\pmea{\outA}\) 
with finite \(\RC{1}{\Wm}\)
and rate \(\rate\) in \((\lim_{\rno \downarrow 0}\RC{\rno}{\Wm},\RC{1}{\Wm})\)
there exists an order \(\rns\in(0,1)\) such that
	\begin{align}
	\label{eq:lem:parametric-symmetric-rate}
	\rate
	&=\RD{1}{\Wma{\rns}{\qmn{\rns,\Wm}}(\dinp)}{\qmn{\rns,\Wm}}
	&
	&\forall\dinp\in\inpS,
	\\
	\label{eq:lem:parametric-symmetric-exponent}
	\spe{\rate,\Wm}
	&=\RD{1}{\Wma{\rns}{\qmn{\rns,\Wm}}(\dinp)}{\Wm(\dinp)}
	&
	&\forall\dinp\in\inpS.
	\end{align}
Furthermore, if either 
\(\Wma{\rns}{\qmn{\rns,\Wm}}\left(\left.
\left\{\der{\Wm(\dinp)}{\rfm}=\gamma\der{\qmn{\rns,\Wm}}{\rfm}\right\}\right\vert\dinp\right)<1\)
for all \(\gamma\in\reals{+}\)
or \(\qmn{\rno,\Wm}=\mQ\) for all \(\rno\in(0,1]\), then
	\begin{align}
	\label{eq:lem:parametric-symmetric-slope}
	\pder{}{\rate}\spe{\rate,\Wm}
	&=\tfrac{\rns-1}{\rns}.
	\end{align}
\end{lemma}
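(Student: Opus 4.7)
The plan is to mirror the parametric analysis of Section~\ref{sec:preliminary} with the channel-wide Augustin center \(\qmn{\rno,\Wm}\) taking the place of the per-composition Augustin mean \(\qmn{\rno,\mP}\); R\'enyi symmetry supplies the \(\dinp\)-invariance needed to make the input distribution disappear from the formulas. By \eqref{eq:def:symmetry} the law of \(\ln\der{\Wm(\dinp)}{\qmn{\rno,\Wm}}\) under \(\Wm(\dinp)\) is the common function \(\GX_{\rno}^{\Wm}\), independent of \(\dinp\). Since \(\RD{\rno}{\Wm(\dinp)}{\qmn{\rno,\Wm}}\) is a functional of this law, and Definition~\ref{def:tiltedchannel} presents \(\Wma{\rno}{\qmn{\rno,\Wm}}(\dinp)\) as a transformation of \(\Wm(\dinp)\) depending on \(\dinp\) only through the same log-Radon--Nikodym derivative, invoking \eqref{eq:CLQ} and \eqref{eq:CLW} shows that
\(\alpha_{\rno}\DEF\RD{1}{\Wma{\rno}{\qmn{\rno,\Wm}}(\dinp)}{\Wm(\dinp)}\) and
\(\beta_{\rno}\DEF\RD{1}{\Wma{\rno}{\qmn{\rno,\Wm}}(\dinp)}{\qmn{\rno,\Wm}}\)
are both \(\dinp\)-independent. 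The minimax identities \eqref{eq:thm:minimax}--\eqref{eq:thm:minimaxcenter} therefore give \(\RC{\rno}{\Wm}=\RD{\rno}{\Wm(\dinp)}{\qmn{\rno,\Wm}}\), and \eqref{eq:varitional-tilted} reduces to \((1-\rno)\RC{\rno}{\Wm}=\rno\alpha_{\rno}+(1-\rno)\beta_{\rno}\), whence
\begin{align*}
\tfrac{1-\rno}{\rno}\bigl(\RC{\rno}{\Wm}-\rate\bigr)=\alpha_{\rno}+\tfrac{1-\rno}{\rno}\bigl(\beta_{\rno}-\rate\bigr).
\end{align*}

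Next, for each \(\rno\) with finite \(\RC{\rno}{\Wm}\) I would select a capacity-achieving \(\mP^{\star}(\rno)\) whose Augustin mean satisfies \(\qmn{\rno,\mP^{\star}(\rno)}=\qmn{\rno,\Wm}\), using the fixed-point characterization \eqref{eq:augustinfixedpoint} together with the fact that R\'enyi symmetry places \(\qmn{\rno,\Wm}\) in the convex hull of its own tilted outputs. Applied to \(\mP^{\star}(\rno)\), the Section~\ref{sec:preliminary} machinery identifies \(\beta_{\rno}\) with \(\CRD{1}{\Wma{\rno}{\qmn{\rno,\mP^{\star}(\rno)}}}{\qmn{\rno,\mP^{\star}(\rno)}}{\mP^{\star}(\rno)}\), so by \cite[Lemma~{\ref*{C-lem:informationO}-(\ref*{C-informationO:monotonicityofharoutunianinformation})}]{nakiboglu19C} \(\beta_{\rno}\) is continuous and non-decreasing on \((0,1]\) with \(\beta_{1}=\RC{1}{\Wm}\) and \(\lim_{\rno\downarrow 0}\beta_{\rno}=\lim_{\rno\downarrow 0}\RC{\rno}{\Wm}\). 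The intermediate value theorem then produces \(\rns\in(0,1)\) with \(\beta_{\rns}=\rate\), which is \eqref{eq:lem:parametric-symmetric-rate}, and the preceding display at \(\rno=\rns\) reads \(\tfrac{1-\rns}{\rns}(\RC{\rns}{\Wm}-\rate)=\alpha_{\rns}\). To promote this identity to \eqref{eq:lem:parametric-symmetric-exponent} I would apply the derivative test to \(h(\rno)\DEF\tfrac{1-\rno}{\rno}(\RC{\rno}{\Wm}-\rate)\): an envelope argument combining \eqref{eq:lem:informationO:differentiability-alt} with \(\RMI{\rno}{\mP^{\star}(\rno)}{\Wm}=\RC{\rno}{\Wm}\) gives \(\pder{}{\rno}\RC{\rno}{\Wm}=\alpha_{\rno}/(\rno-1)^{2}\), and routine algebra turns \(h'(\rno)=0\) into \(\tfrac{1-\rno}{\rno}(\RC{\rno}{\Wm}-\rate)=\alpha_{\rno}\), which by the first-paragraph display is equivalent to \(\beta_{\rno}=\rate\). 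Thus \(\rns\) is the unique critical point, \(\spe{\rate,\Wm}=h(\rns)=\alpha_{\rns}\), and under the additional non-degeneracy hypothesis \(\beta_{\rno}\) becomes \emph{strictly} increasing, so \eqref{eq:lem:parametric-symmetric-slope} follows by the implicit function theorem as in \cite[Lemma~\ref*{D-lem:spherepacking-cc}]{nakiboglu18D}.

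The principal obstacle will be justifying the envelope step for \(\pder{}{\rno}\RC{\rno}{\Wm}\): since \(\mP^{\star}(\rno)\) generically varies with \(\rno\), the direct use of \eqref{eq:lem:informationO:differentiability-alt} (which is stated for a \emph{fixed} input distribution) must be supplemented by Danskin's theorem or an ad hoc argument that the implicit \(\rno\)-dependence of \(\mP^{\star}\) contributes nothing at the critical point. A subsidiary concern is verifying the limit \(\lim_{\rno\downarrow 0}\beta_{\rno}=\lim_{\rno\downarrow 0}\RC{\rno}{\Wm}\), which requires controlling \(\tfrac{\rno}{1-\rno}\alpha_{\rno}\) as \(\rno\downarrow 0\) and can be delicate when \(\Wm(\dinp)\) has components singular with respect to \(\qmn{\rno,\Wm}\)---a situation that, as the remark following Definition~\ref{def:symmetry} indicates, is not excluded by R\'enyi symmetry.
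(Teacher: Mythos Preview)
Your proposal has a genuine structural gap. The route through a capacity-achieving input distribution \(\mP^{\star}(\rno)\) with \(\qmn{\rno,\mP^{\star}(\rno)}=\qmn{\rno,\Wm}\) is not available in this generality: Definition~\ref{def:symmetry} allows arbitrary (possibly uncountable) input sets \(\inpS\), and the supremum in \eqref{eq:def:capacity} need not be attained by any \(\mP\in\pdis{\inpS}\). Even if one grants existence of such a \(\mP^{\star}(\rno)\), your appeal to \cite[Lemma~{\ref*{C-lem:informationO}-(\ref*{C-informationO:monotonicityofharoutunianinformation})}]{nakiboglu19C} for the monotonicity and continuity of \(\beta_{\rno}\) is invalid: that lemma asserts monotonicity of \(\rno\mapsto\CRD{1}{\Wma{\rno}{\qmn{\rno,\mP}}}{\qmn{\rno,\mP}}{\mP}\) for a \emph{fixed} \(\mP\), and says nothing when \(\mP=\mP^{\star}(\rno)\) itself moves with \(\rno\). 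This is the same difficulty you flag for the envelope step, and it already bites at the IVT stage, not only at the differentiation of \(\RC{\rno}{\Wm}\).

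The paper's proof avoids all of this by never introducing an optimal input distribution. For existence of \(\rns\) it works directly with \(\beta_{\rno}=\RD{1}{\Wma{\rno}{\qmn{\rno,\Wm}}(\dinp)}{\qmn{\rno,\Wm}}\): the bound \(\beta_{\rno}\leq\RC{\rno}{\Wm}\) is immediate from \eqref{eq:varitional-tilted} and handles \(\rno\downarrow 0\); the limit \(\rno\uparrow 1\) comes from Pinsker's inequality (giving \(\Wma{\rno}{\qmn{\rno,\Wm}}(\dinp)\to\Wm(\dinp)\) in total variation) plus total-variation continuity of \(\rno\mapsto\qmn{\rno,\Wm}\) and lower semicontinuity of the KL divergence; continuity of \(\beta_{\rno}\) on \((0,1)\) is supplied by \cite[Lemma~\ref*{B-lem:tilting}]{nakiboglu19B}. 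For \eqref{eq:lem:parametric-symmetric-exponent} and \eqref{eq:lem:parametric-symmetric-slope} the key idea you are missing is to \emph{freeze} the reference measure at \(\qmn{\rns,\Wm}\) and study the one-parameter exponential family \(\rno\mapsto\Wma{\rno}{\qmn{\rns,\Wm}}(\dinp)\) through \(\fX(\rno,\tau)\DEF\tfrac{1-\rno}{\rno}\bigl(\RD{\rno}{\Wm(\dinp)}{\qmn{\rns,\Wm}}-\tau\bigr)\). This object is controlled by \cite[Lemma~\ref*{C-lem:analyticity}]{nakiboglu19C} without any envelope argument; a sandwich \(\fX(\rns,\tau)\leq\spe{\tau,\Wm}\leq\fX(\hX(\tau),\tau)\) then yields the derivative. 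Your concern about the \(\rno\downarrow 0\) limit is thus misplaced (it is the easy side), while the real obstacle---the \(\rno\)-dependence of the maximizer---is exactly what freezing the reference measure eliminates.
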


Lemma \ref{lem:parametric-symmetric} is proved in Appendix \ref{sec:ProofOf:lem:parametric-symmetric}.
Proving the essential assertions of Lemma \ref{lem:parametric-symmetric} 
for input symmetric channels, however, is considerably easier:
for any input symmetric channel \(\Wm\)
and the uniform probability mass function \(\mU\) 
on its input set,
\(\RC{\rno}{\Wm}=\RMI{\rno}{\mU}{\Wm}\)
and \(\qmn{\rno,\Wm}=\qmn{\rno,\mU}\)
for all \(\rno\in\reals{+}\).
Consequently,
the identities given in 
\eqref{eq:lem:parametric-symmetric-rate},
\eqref{eq:lem:parametric-symmetric-exponent},
and
\eqref{eq:lem:parametric-symmetric-slope}
are nothing but 
the identities given in 
\eqref{eq:parametric-haroutunianform-rate},
\eqref{eq:parametric-haroutunianform-exponent},
and
\eqref{eq:parametric-haroutunianform-slope}
for \(\mP=\mU\) case
because the Kullback-Leibler divergences on the right-hand-sides of 
\eqref{eq:lem:parametric-symmetric-rate}
and
\eqref{eq:lem:parametric-symmetric-exponent}
have the same value for all \(\dinp\)
by the symmetry.
Hence, \eqref{eq:lem:parametric-symmetric-slope} holds
for any input symmetric channels
satisfying 
\(\lim_{\rno \downarrow 0}\RC{\rno}{\Wm}<\RC{1}{\Wm}\)
by \eqref{eq:parametric-haroutunianform-slope}, as well.

\begin{remark}\label{rem:inputsymmetric}
If \(\der{\Wm(\dinp)}{\qmn{\rno,\mU}}=\gamma\) holds \(\Wm(\dinp)\)-a.s. for all \(\dinp\)
for a \((\gamma,\rno)\) pair for an input symmetric \(\Wm\!\),
then  \(\qmn{\rno,\mU}=\sum_{\dinp} \mU(\dinp)\Wma{\rnt}{\qmn{\rno,\mU}}(\dinp)\)
for all \(\rnt\).
Thus \(\qmn{\rnt,\mU}=\qmn{\rno,\mU}\) 
and 
\(\RMI{\rnt}{\mU}{\Wm}=\ln\gamma\)
for all \(\rnt\in\reals{+}\)
by \cite[{Lemma \ref*{C-lem:information}}]{nakiboglu19C}. 
Thus such a \((\gamma,\rno)\) pair does not exists 
for input symmetric \(\Wm\)'s satisfying 
\(\lim_{\rno \downarrow 0}\RC{\rno}{\Wm}\neq\RC{1}{\Wm}\).
\end{remark}

\begin{remark}
Lemma \ref{lem:parametric-symmetric} is stated under the finite \(\RC{1}{\Wm}\) hypothesis,
yet it holds under the weaker hypothesis \(\lim\nolimits_{\rno\uparrow}\tfrac{1-\rno}{\rno}\RC{\rno}{\Wm}\), as well.
However, establishing Lemma \ref{lem:parametric-symmetric} under this weaker hypothesis would require 
us to introduce the concepts of power mean, \renyi information, and compactness in the topology of setwise convergence,
see \cite[Lemma  \ref*{A-lem:capacityEXT}-(\ref*{A-capacityEXT-compact-N})]{nakiboglu19A}.
\end{remark}

\begin{theorem}\label{thm:symmetric}
Let \(\Wmn{\tin}:\inpS_{\tin}\to\pmea{\outA_{\tin}}\) be
a \renyi symmetric channel with finite \(\RC{1}{\Wmn{\tin}}\)
for all \(\tin\in\integers{+}\)
and \(\blx,M,L\in\integers{+}\) satisfy
\(\lim_{\rno \downarrow 0}\RC{\rno}{\Wmn{[1,\blx]}}<\ln \tfrac{M}{L}<\RC{1}{\Wmn{[1,\blx]}}\).
Then there exists an order \(\rns\in(0,1)\) satisfying
\begin{align}
\label{eq:thm:symmetric-hypothesis}
\RD{1}{\Umn{\rns}(\dinp_{1}^{\blx})}{\qmn{\rns,\Wmn{[1,\blx]}}}
&=\ln\tfrac{M}{L}
&
&~
&
&\forall\dinp_{1}^{\blx}\in\inpS_{1}^{\blx}
\end{align}
where \(\Umn{\rno}\DEF\{\Wmn{[1,\blx]}\}_{\rno}^{\qmn{\rno,\Wmn{[1,\blx]}}}\)
for all \(\rno\in(0,1)\). 
Furthermore any \((M,L)\)  channel code 
on \(\Wmn{[1,\blx]}\) satisfies
\begin{align}
\label{eq:thm:symmetric}
\Pem{(\blx)}
&\geq \htdelta^{\rns-1} (4\blx)^{-\frac{1}{2\rns}}
e^{-\spe{\ln\frac{M}{L},\Wmn{[1,\blx]}}}	
\end{align}	
provided that \(\sqrt{\amn{2}\blx}-\tfrac{\ln4\blx}{2\rns}\geq \ln \htdelta\) where
\begin{align}
%%\label{eq:thm:symmetric-U}
%%\Umn{\rns}
%%&=\{\Wmn{[1,\blx]}\}_{\rns}^{\qmn{\rns,\Wmn{[1,\blx]}}},
%%\\
%\label{eq:thm:symmetric-a2}
\notag
\amn{2}
&\DEF\tfrac{1}{\blx}\sum\nolimits_{\tin=1}^{\blx}
\EXS{\Umn{\rns}(\dinp_{1}^{\blx})}{\left(\cla{\rns}{\tin}(\dinp_{\tin})-
	\EXS{\Umn{\rns}(\dinp_{1}^{\blx})}{\cla{\rns}{\tin}(\dinp_{\tin})}\right)^{2}},
\\
%\label{eq:thm:symmetric-a3}
\notag
\amn{3}
&\DEF\tfrac{1}{\blx}\sum\nolimits_{\tin=1}^{\blx}
\EXS{\Umn{\rns}(\dinp_{1}^{\blx})}{\abs{\cla{\rns}{\tin}(\dinp_{\tin})-
		\EXS{\Umn{\rns}(\dinp_{1}^{\blx})}{\cla{\rns}{\tin}(\dinp_{\tin})}}^{3}},
\\
%\label{eq:thm:symmetric-htdelta}
\notag
\htdelta
&\DEF\exp\left(2\sqrt{2\pi e}\left(0.56\tfrac{\amn{3}}{\amn{2}}+\sqrt{\amn{2}}\right)\right),
\end{align}
and \(\cla{\rno}{\tin}(\dinp_{\tin})=\ln \der{\Wmn{\tin}(\dinp_{\tin})}{\rfm_{\tin}}-\ln \der{\qmn{\rno,\Wmn{\tin}}}{\rfm_{\tin}}\)
for all \(\rno\in(0,1)\).

Furthermore, if
\(\{\Wmn{\tin}\}_{\rns}^{\qmn{\rns,\Wmn{\tin}}}\left(\left.
\left\{\der{\Wmn{\tin}(\dinp_{\tin})}{\rfm_{\tin}}=\gamma\der{\qmn{\rns,\Wmn{\tin}}}{\rfm_{\tin}}\right\}\right\vert\dinp_{\tin}\right)<1\)
for all \(\gamma\in\reals{+}\) for some \(\tin\)
or \(\qmn{\rno,\Wmn{[1,\blx]}}=\mQ\) for all \(\rno\in(0,1]\),
then \(\rns=\tfrac{1}{1-\dspe{\ln\frac{M}{L},\Wmn{[1,\blx]}}}\).
\end{theorem}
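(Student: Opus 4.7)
The plan is to mirror the proof of Theorem \ref{thm:constantcomposition}, with Lemma \ref{lem:parametric-symmetric} replacing the composition-dependent parametric identities \eqref{eq:parametric-haroutunianform-rate}--\eqref{eq:parametric-haroutunianform-slope}. The necessary setup is to verify that \(\Wmn{[1,\blx]}\) is itself \renyi symmetric with \(\qmn{\rno,\Wmn{[1,\blx]}}=\bigotimes_{\tin=1}^{\blx}\qmn{\rno,\Wmn{\tin}}\). Factorization of the center follows from the additivity of \renyi divergence over product measures (so that a product of single-letter Augustin-information maximizers attains \(\RC{\rno}{\Wmn{[1,\blx]}}\)) together with the uniqueness clause around \eqref{eq:augustinfixedpoint}. \renyi symmetry of the product is then immediate: using the product reference \(\bigotimes\rfm_{\tin}\), the log-likelihood ratio \(\sum_{\tin}\cla{\rno}{\tin}(\dinp_{\tin})\) is a sum of independent summands each with \(\dinp_{\tin}\)-free law under \(\Wmn{\tin}(\dinp_{\tin})\), hence its joint distribution under \(\Wmn{[1,\blx]}(\dinp_{1}^{\blx})\) is the convolution of the single-letter laws and is in particular independent of \(\dinp_{1}^{\blx}\). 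Applying Lemma \ref{lem:parametric-symmetric} to \(\Wmn{[1,\blx]}\) at rate \(\ln(M/L)\) then furnishes the order \(\rns\in(0,1)\) with \eqref{eq:thm:symmetric-hypothesis} and, uniformly in \(\dinp_{1}^{\blx}\), the identity \(\spe{\ln\tfrac{M}{L},\Wmn{[1,\blx]}}=\RD{1}{\Umn{\rns}(\dinp_{1}^{\blx})}{\Wmn{[1,\blx]}(\dinp_{1}^{\blx})}\).

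For the error-probability bound, set \(\wmn{\dmes}\DEF\Wmn{[1,\blx]}(\enc(\dmes))\), \(\mQ\DEF\qmn{\rns,\Wmn{[1,\blx]}}\), and \(\vmn{\dmes}\DEF\Umn{\rns}(\enc(\dmes))\); the product structure of \(\mQ\) together with the distributivity of tilting makes \(\vmn{\dmes}\) the order-\(\rns\) tilt of \(\wmn{\dmes}\) relative to \(\mQ\), and this tilt factors coordinate-wise. By the identities above, \(\RD{1}{\vmn{\dmes}}{\mQ}=\ln(M/L)\) and \(\RD{1}{\vmn{\dmes}}{\wmn{\dmes}}=\spe{\ln\tfrac{M}{L},\Wmn{[1,\blx]}}\) for every \(\dmes\in\mesS\). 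Since \(\sum_{\dmes}\mQ(\dmes\in\dec)\leq L\) by list decoding, Markov's inequality yields a set of at least \(\lceil M/2\rceil\) messages for which \(\mQ(\dmes\in\dec)\leq 2L/M\); invoking Lemma \ref{lem:HTBE} with \(\wmn{\tin}=\Wmn{\tin}(\enc_{\tin}(\dmes))\), \(\qmn{\tin}=\qmn{\rns,\Wmn{\tin}}\), \(\oev=\{\dmes\in\dec\}\), and \(\beta=2\) for each such \(\dmes\) gives \(\Pem{\dmes}\geq\htdelta^{\rns-1}\,2^{(\rns-1)/\rns}\,\blx^{-\sfrac{1}{2\rns}}\,e^{-\spe{\ln\frac{M}{L},\Wmn{[1,\blx]}}}\) under the stated condition on \(\sqrt{\amn{2}\blx}\). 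Averaging over \(\mesS\) introduces another factor of \(\sfrac{1}{2}\), and the combinatorial identity \(\sfrac{1}{2}\cdot 2^{(\rns-1)/\rns}\cdot\blx^{-\sfrac{1}{2\rns}}=(4\blx)^{-\sfrac{1}{2\rns}}\) produces exactly the prefactor of \eqref{eq:thm:symmetric}. For the concluding identification \(\rns=1/(1-\dspe{\ln(M/L),\Wmn{[1,\blx]}})\), I would apply \eqref{eq:lem:parametric-symmetric-slope} to \(\Wmn{[1,\blx]}\); both stated alternatives supply its hypothesis, since the second one is literal and the first propagates from any single non-degenerate factor (a product of laws cannot concentrate on a single value unless each factor does).

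The main obstacle, I think, is the bootstrapping of product \renyi symmetry from that of the factors together with the factorization of the Augustin center; these are conceptually clean but require careful bookkeeping across the reference measures \(\rfm_{\tin}\) and their tensor. Once these are in place, everything else is a direct transcription of the Theorem \ref{thm:constantcomposition} argument, with Lemma \ref{lem:parametric-symmetric} supplying the symmetric replacement of the single-letter parametric calculus.
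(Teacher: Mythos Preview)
Your proposal is correct and follows essentially the same route as the paper: establish that \(\Wmn{[1,\blx]}\) is \renyi symmetric with product Augustin center, invoke Lemma~\ref{lem:parametric-symmetric} to obtain \(\rns\) and the two KL identities, then run the Markov-inequality plus Lemma~\ref{lem:HTBE} argument exactly as in Theorem~\ref{thm:constantcomposition}, concluding with the convolution/Dirac-delta observation for \eqref{eq:lem:parametric-symmetric-slope}. The only cosmetic difference is that the paper cites \cite[Lemma~\ref*{C-lem:capacityproduct}]{nakiboglu19C} directly for \(\RC{\rno}{\Wmn{[1,\blx]}}=\sum_{\tin}\RC{\rno}{\Wmn{\tin}}\) and \(\qmn{\rno,\Wmn{[1,\blx]}}=\bigotimes_{\tin}\qmn{\rno,\Wmn{\tin}}\), whereas you sketch the argument via additivity and uniqueness; both are fine.
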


Note that if any of  the component channels, i.e.,any of the \(\Wmn{\tin}\)'s, is an input symmetric 
channel satisfying 
\(\lim_{\rno \downarrow 0}\RC{\rno}{\Wmn{\tin}}<\RC{1}{\Wmn{\tin}}\),
then 
\(\rns=\tfrac{1}{1-\dspe{\ln\frac{M}{L},\Wmn{[1,\blx]}}}\)
holds as a result of Remark \ref{rem:inputsymmetric}.

Theorem \ref{thm:symmetric} does not assume the channel 
to be stationary,
i.e., it holds even when \(\Wmn{\tin}\)'s are not identical.
To the best of our knowledge, refined sphere packing bounds  
have only been reported for stationary channels before
---even in the case of symmetric channels considered in
\cite[(1.28)]{dobrushin62B}, 
\cite[Thm. 1]{altugW11}, \cite[Thm. 14]{chengHT19},
\cite[Corollary 2]{vazquezFKL18},
\cite[Thm. 4]{lanchoODKV19A},
\cite[(36), (37b)]{lanchoODKV19B},
\cite[Thm. 1]{altugW19}.

For the stationary input symmetric channels Theorem \ref{thm:symmetric}
is tight both in terms of exponent and prefactor for
rates above the critical rate, provided that
channel is not singular.
For the case of the singular stationary input symmetric channels,
\altug and Wagner \cite{altugW19} have recently reported a sharper result, 
which generalizes Elias's result in \cite{elias55B} 
for the binary erasure channels. 
In order to obtain such results, however,
merely plugging in bounds on binary hypothesis testing 
is not enough, see Section \ref{sec:conclusion} for a more detailed discussion.

\begin{remark}\label{rem:feedback}	
Theorem \ref{thm:symmetric} is derived using Lemma \ref{lem:HTBE}, 
which is stated for the product measures. 
Lemma \ref{lem:HTBE}, however, holds for any \(\mW\) and \(\mQ\) 
for which \(\cln{\tin}\)'s
are  independent random variables under \(\wma{\rno}{\mQ}\).
This condition is satisfied by 
the output distributions and 
the Augustin centers on the product channels with feedback,
i.e.,by \(\Wmn{\vec{[1,\blx]}}(\dinp)\) and \(\qmn{\rno,\Wmn{\vec{[1,\blx]}}}\),
provided that 
the component channels are \renyi symmetric.
Thus Theorem \ref{thm:symmetric} holds not just for codes on 
the product channel \(\Wmn{[1,\blx]}\)
but also for codes on the product channels with feedback 
\(\Wmn{\vec{[1,\blx]}}\).
Similar observations have been used to establish 
the SPB on product channels with feedback in
\cite{dobrushin62A,haroutunian77,augustin78,nakiboglu19B,altugW19}.
The formal definition of the product channels with feedback 
and the proof of the SPB on these channels without the symmetry assumptions
can be found in \cite{augustin78,nakiboglu19B,nakiboglu19E}.
\end{remark}

\begin{proof}[Proof of Theorem \ref{thm:symmetric}]
The \renyi symmetry and
\eqref{eq:thm:minimaxcenter} 
imply 
\begin{align}
\label{eq:symmetric-1}
\RC{\rno}{\Wmn{\tin}}
&=\RD{\rno}{\Wmn{\tin}(\dinp_{\tin})}{\qmn{\rno,\Wmn{\tin}}}
&
&
&
&\forall\dinp_{\tin}\in\inpS_{\tin},\tin\in\integers{+}.
\end{align}	
On the other hand, \cite[Lemma \ref*{C-lem:capacityproduct}]{nakiboglu19C} implies
\begin{align}
\label{eq:symmetric-2}
\RC{\rno}{\Wmn{[1,\blx]}}
&=\sum\nolimits_{\tin=1}^{\blx}\RC{\rno}{\Wmn{\tin}},
\\
\label{eq:symmetric-3}
\qmn{\rno,\Wmn{[1,\blx]}}
&=\bigotimes\nolimits_{\tin=1}^{\blx}\qmn{\rno,\Wmn{\tin}}.
\end{align}
Then the product structure 
\(\Wmn{[1,\blx]}(\dinp_{1}^{\blx})=\otimes_{\tin=1}^{\blx}\Wmn{\tin}(\dinp_{\tin})\)
and the \renyi symmetry of \(\Wmn{\tin}\)'s
imply the \renyi symmetry
of \(\Wmn{[1,\blx]}\). In particular
\begin{align}
\label{eq:symmetric-4}
\gX_{\rno}^{\Wmn{[1,\blx]}}
&=\gX_{\rno}^{\Wmn{1}}
\mtimes
\gX_{\rno}^{\Wmn{2}}
\mtimes
\cdots
\mtimes
\gX_{\rno}^{\Wmn{\blx}}
\end{align}
where  \(\mtimes\) denotes the convolution and 
\(\gX\)'s are the density functions of the corresponding \(\GX\)'s,
i.e., 
\(\gX\)'s and \(\GX\)'s are uniquely determined by each other via 
the following relation
\begin{align}
\notag
\GX_{\rno}^{\Wm}(\mS)
&=\int_{-\infty}^{\mS}\gX_{\rno}^{\Wm}(\dsta) \dif{\dsta}.
\end{align}
Since \(\Wmn{[1,\blx]}\) is \renyi symmetric the existence of 
an order \(\rns\in(0,1)\) satisfying \eqref{eq:thm:symmetric-hypothesis}
follows from \eqref{eq:lem:parametric-symmetric-rate}
of Lemma \ref{lem:parametric-symmetric}.

Let probability measures \(\wmn{\dmes}\), \(\mQ\),  and \(\vmn{\dmes}\)
in \(\pmea{\outA_{1}^{\blx}}\) be
\begin{align}
\notag
\wmn{\dmes} 
&\DEF \bigotimes\nolimits_{\tin=1}^{\blx} \Wmn{\tin}(\enc_{\tin}(\dmes)),
&
\mQ
&\DEF \bigotimes\nolimits_{\tin=1}^{\blx} \qmn{\rns,\Wmn{\tin}},
&
\vmn{\dmes} 
&\DEF \bigotimes\nolimits_{\tin=1}^{\blx} \{\Wmn{\tin}\}_{\rns}^{\qmn{\rns,\Wmn{\tin}}}(\enc_{\tin}(\dmes)).
\end{align}
Note that \(\wmn{\dmes}=\Wmn{[1,\blx]}(\enc(\dmes))\) by definition,
\(\mQ=\qmn{\rns,\Wmn{[1,\blx]}}\)
by \eqref{eq:symmetric-3},
and
\(\vmn{\dmes}\) is equal to the order \(\rns\)
tilted probability measure between \(\wmn{\dmes}\) and \(\mQ\), 
which is equal to \(\Umn{\rns}(\enc(\dmes))\),
by construction. 
Then Lemma \ref{lem:parametric-symmetric}
implies
\begin{align}
\notag
\RD{1}{\vmn{\dmes}}{\mQ}
&=\ln \tfrac{M}{L}
&
&\dmes\in\mesS, 
\\
\notag
\RD{1}{\vmn{\dmes}}{\wmn{\dmes}}
%&=\RD{1}{\Umn{\rno}(\dinp_{1}^{\blx})}{\Wmn{[1,\blx]}(\dinp_{1}^{\blx})}
%&
%&\dmes\in\mesS. 
%\\
&=\spe{\ln\tfrac{M}{L},\Wmn{[1,\blx]}}
&
&\dmes\in\mesS. 
\end{align}
On the other hand, \(\sum_{m\in\mesS}\mQ(\dmes\in\dec)\leq L\)
by the definition of list decoding. 
Thus 
at least half of the messages in \(\mesS\) 
---at least \(\lfloor \tfrac{M+1}{2} \rfloor\) of the messages in \(\mesS\) to be precise---
will satisfy \(\mQ(\dmes\in\dec)\leq \tfrac{2 L}{M}\)
as a result of Markov's inequality.
Applying Lemma \ref{lem:HTBE} 
with \(\oev=\{\dout_{1}^{\blx}:\dmes\in\dec(\dout_{1}^{\blx})\}\)
and \(\beta=2\) for the messages satisfying
\(\mQ(\dmes\in\dec)\leq \tfrac{2 L}{M}\), we get
\begin{align}
\notag
\Pem{\dmes}
&\geq\htdelta^{\rns-1}2^{\frac{\rns-1}{\rns}}
\blx^{-\frac{1}{2\rns}}
e^{-\spe{\ln\frac{M}{L},\Wmn{[1,\blx]}}}	
\end{align}
provided that \(\sqrt{\amn{2}\blx}-\tfrac{\ln4\blx}{2\rns}\geq \ln \htdelta\).
Then \eqref{eq:thm:symmetric} follows from
the definition \(\Pe\)  as the
average of \(\Pem{\dmes}\).

Note that as a result of \eqref{eq:symmetric-4},
\(\gX_{\rns}^{\Wmn{[1,\blx]}}\) is a Dirac delta function 
iff all \(\gX_{\rns}^{\Wmn{\tin}}\)'s are.
This observation together with Lemma \ref{lem:parametric-symmetric}
implies the sufficient condition for \(\rns\) to be
\(\tfrac{1}{1-\dspe{\ln\frac{M}{L},\Wmn{[1,\blx]}}}\).
\end{proof}

\subsection{Codes On Additive White Gaussian Noise Channels}\label{sec:RSPB-AWGNC}
The additive white  Gaussian noise channel with noise variance \(\sigma^2\) 
is described via the following transition probability  
\begin{align}
\label{eq:GaussianChannel}
\Wm(\oev|\dinp)
&=\int_{\oev} \GausDen{\sigma^{2}}(\dout-\dinp) \dif{\dout}
&
&\forall \oev\in\rborel{\reals{}}
\end{align}
where \(\GausDen{\sigma^{2}}\) is the zero-mean Gaussian probability density function of variance \(\sigma^{2}\), i.e.,
\begin{align}
\notag
%\label{eq:GaussianDensity}
\GausDen{\sigma^{2}}(\dsta)
&=\tfrac{1}{\sqrt{2 \pi} \sigma} e^{-\frac{\dsta^{2}}{2\sigma^{2}}}
&
&\forall \dsta\in\reals{}.
\end{align}
With a slight abuse of notation,
we denote the corresponding probability measure
---i.e., zero-mean Gaussian probability measure of variance \(\sigma^{2}\)---
by \(\GausDen{\sigma^{2}}\), as well.

If the cost function is the quadratic one, 
then the zero-mean Gaussian distribution is the maximizer for 
the Augustin information among all input distributions
satisfying the cost constraint 
---for any positive order--- 
see \cite[Example \ref*{C-eg:SGauss}]{nakiboglu19C}, i.e.
\begin{align}
\notag
\CRC{\rno}{\Wm}{\costc}
&=\RMI{\rno}{\GausDen{\costc}}{\Wm}
\\
\notag
&=\sup\nolimits_{\mP:\EXS{\mP}{\dinp^{2}}\leq \costc} \RMI{\rno}{\mP}{\Wm}
&
&\forall \rno\in\reals{+}.
\end{align}
Furthermore, the order \(\rno\) Augustin center 
of this channel is a zero-mean Gaussian probability measure.
The closed-form expression for 
the Augustin  capacity and Augustin center 
are derived in 
\cite[(\ref*{C-eq:eg:SGauss-capacity}), (\ref*{C-eq:eg:SGauss-center}), (\ref*{C-eq:eg:SGauss-center-variance})]{nakiboglu19C}:
\begin{align}
%\notag
\label{eq:gaussian:capacity}
\CRC{\rno}{\Wm}{\costc}
&=\begin{cases}
\tfrac{\rno \costc}{2(\rno \theta_{\rno}+(1-\rno)\sigma^{2})}
+
\frac{1}{\rno-1}
\ln\tfrac{\theta_{\rno}^{\sfrac{\rno}{2}}\sigma^{1-\rno}}{\sqrt{\rno\theta_{\rno}+(1-\rno)\sigma^{2}}}
&\rno\neq1
\\
\tfrac{1}{2}\ln \left(1+\tfrac{\costc}{\sigma^{2}}\right)
&\rno=1
\end{cases},
\\
%\notag
\label{eq:gaussian:center}
\qmn{\rno,\Wm,\costc}
&=\GausDen{\theta_{\rno}},
\\
%\notag
\label{eq:gaussian:center-variance}
\theta_{\rno}
&\DEF\sigma^{2}+\tfrac{\costc}{2}-\tfrac{\sigma^{2}}{2\rno}+\sqrt{(\tfrac{\costc}{2}-\tfrac{\sigma^{2}}{2\rno})^{2}+ \costc\sigma^2}.
\end{align}	
One can confirm the following identity by substitution 
\begin{align}
\notag
\rno\costc\theta_{\rno}
&=\rno(\theta_{\rno})^{2}- \left(2\rno-1\right)\theta_{\rno}\sigma^{2}+\left(\rno-1\right)\sigma^{4}
\\
\label{eq:gaussian:center-variance-necessarycondition}
&=\left(\theta_{\rno}-\sigma^{2}\right) 
\left(\rno\theta_{\rno}+(1-\rno)\sigma^{2}\right).
\end{align}
In fact, \(\theta_{\rno}\) is a root of the equality given above
because of a fixed point property similar to the one described in 
\eqref{eq:augustinfixedpoint}, 
see \cite[(\ref*{C-eq:eg:SGauss-Augustinoperator}), (\ref*{C-eq:eg:SGauss-necessarycondition})]{nakiboglu19C}
and the ensuing discussion.

The sphere packing exponent expression resulting from 
\eqref{eq:gaussian:capacity},
\eqref{eq:gaussian:center},
\eqref{eq:gaussian:center-variance},
and \eqref{eq:gaussian:center-variance-necessarycondition},
is derived in \cite[Example \ref{D-eg:SGauss}]{nakiboglu18D}. 
It is given by the following parametric form 
in  \cite[(\ref{D-eq:eg:SGauss-parametric-rate}), (\ref{D-eq:eg:SGauss-parametric-spe})]{nakiboglu18D}:
\begin{align}
\label{eq:gaussian:parametric-rate}
\rate
&=\tfrac{1}{2}\ln \tfrac{\rno \theta_{\rno}+(1-\rno) \sigma^{2}}{\sigma^{2}},
\\
\label{eq:gaussian:parametric-spe}
\hspace{-.15cm}
\spe{\rate,\Wm,\costc}
&=\tfrac{(1-\rno)\costc}{2(\rno\theta_{\rno}+(1-\rno)\sigma^{2})}
+\tfrac{1}{2}\ln \tfrac{\rno \theta_{\rno}+(1-\rno) \sigma^{2}}{\theta_{\rno}}.
\end{align}
Using \eqref{eq:gaussian:center-variance} and \eqref{eq:gaussian:parametric-rate}, 
one can express the unique \(\rns\) whose rate is \(\rate\) as a function of \(\rate\), as well:
\begin{align}
\label{eq:gaussian:parametric-rno}
\rns\DEF\tfrac{e^{2\rate}-1}{2}\left(\sqrt{1+\tfrac{4\sigma^2}{\costc}\tfrac{e^{2\rate}}{e^{2\rate}-1}}-1\right).
\end{align}
The equivalence of the parametric form given in 
\eqref{eq:gaussian:parametric-rate} and \eqref{eq:gaussian:parametric-spe}
to the expression  given by Gallager \cite[(7.4.33)]{gallager}  
can be confirmed by substitution using \eqref{eq:gaussian:parametric-rno}.
One can also confirm using
\eqref{eq:gaussian:center-variance} and \eqref{eq:gaussian:center-variance-necessarycondition}
in
\eqref{eq:gaussian:parametric-rate} and \eqref{eq:gaussian:parametric-spe} that
\begin{align}
\label{eq:gaussian:SPE-derivative}
\pder{}{\rate}\spe{\rate,\Wm,\costc}
&=\tfrac{\rns-1}{\rns}.
\end{align}
Furthermore, codes on additive white Gaussian noise channels satisfy 
both \eqref{eq:nonsingular} and \eqref{eq:rSPB}
as a result of Shannon's \cite[(3)]{shannon59};
this is established in Appendix \ref{sec:ShannonsExpressions}, 
for completeness. 
Theorem \ref{thm:gaussian}, which is establishing the refined SPB given in
\eqref{eq:rSPB}, is proved using principles and analysis similar 
to those used in the proofs of Theorems \ref{thm:constantcomposition} and \ref{thm:symmetric},
which are quite different from the ones employed in \cite{shannon59}.
\begin{theorem}\label{thm:gaussian}
Let \(\sigma\) and \(\costc\) be positive reals, 
\(\blx\), \(M\), and \(L\) be positive integers satisfying 
\(\rate\in
\left(0,\tfrac{1}{2}\ln\tfrac{\sigma^{2}+\costc}{\sigma^{2}}\right)\)
for \(\rate=\tfrac{1}{\blx}\ln\tfrac{M}{L}\),
and \(\Wmn{\tin}\)  be an additive white  Gaussian noise channel with 
noise variance \(\sigma^{2}\), say  \(\Wm\), for all \(\tin\).
Then any \((M,L)\) channel code \((\enc,\dec)\)
on \(\Wmn{[1,\blx]}\) satisfying
\(\sum_{\tin=1}^{\blx} (\enc_{\tin}(\dmes))^{2}=\blx\costc\)
for all messages \(\dmes\) in the message set \(\mesS\) 
satisfies
\begin{align}
\label{eq:thm:gaussian}
\Pem{(\blx)}
&\geq
\htdelta^{\rns-1} (4\blx)^{-\frac{1}{2\rns}}
e^{-\blx \spe{\rate,\Wm,\costc}}	
\end{align}	
for \(\rns\) given in \eqref{eq:gaussian:parametric-rno}
provided that 
\(\sqrt{\amn{2}\blx}-\tfrac{\ln \blx}{2\rns}\geq \htdelta\),
where 
\begin{align}
%\label{eq:thm:gaussian-alpha}
%\notag
%\rns
%&\DEF\tfrac{(\sfrac{M}{L})^{\sfrac{2}{\blx}}-1}{2}\left(\sqrt{1+\tfrac{4\sigma^2}{\costc}\tfrac{(\sfrac{M}{L})^{\sfrac{2}{\blx}}}{(\sfrac{M}{L})^{\sfrac{2}{\blx}}-1}}-1\right),
%\\
%\label{eq:thm:gaussian-a2}
\notag
\amn{2}
&\DEF\tfrac{(\theta_{\rns}-\sigma^{2})^{2}}{2(\rns\theta_{\rns}+(1-\rns)\sigma^{2})^{2}}
+\tfrac{\sigma^{2}\theta_{\rns}\costc}{(\rns\theta_{\rns}+(1-\rns)\sigma^{2})^{3}},
\\
%\label{eq:thm:gaussian-a3}
\notag
\amn{3}
&\DEF
\tfrac{18(\theta_{\rns}-\sigma^{2})^{3}}{(\rns\theta_{\rns}+(1-\rns)\sigma^{2})^{3}}
+\tfrac{18(\sqrt{\sfrac{2}{\pi}})\sigma^{3}\theta_{\rns}^{\sfrac{3}{2}}\costc^{\sfrac{3}{2}}}{(\rns\theta_{\rns}+(1-\rns)\sigma^{2})^{4.5}},
\\
%\label{eq:thm:gaussian-htdelta}
\notag
\htdelta
&\DEF\exp\left(2\sqrt{2\pi e}\left(0.56\tfrac{\amn{3}}{\amn{2}}+\sqrt{\amn{2}}\right)\right).
\end{align}
\end{theorem}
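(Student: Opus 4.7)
The plan is to adapt the strategy of the proofs of Theorems \ref{thm:constantcomposition} and \ref{thm:symmetric} to the Gaussian setting. Define the product measures
$\wmn{\dmes}\DEF\bigotimes_{\tin=1}^{\blx}\Wm(\enc_{\tin}(\dmes))$,
$\mQ\DEF\GausDen{\theta_{\rns}}^{\otimes\blx}$, and
$\vmn{\dmes}\DEF\bigotimes_{\tin=1}^{\blx}\Wma{\rns}{\GausDen{\theta_{\rns}}}(\enc_{\tin}(\dmes))$.
By \eqref{eq:gaussian:center}, $\mQ$ equals the $\blx$-fold Augustin center $\qmn{\rns,\Wmn{[1,\blx]},\blx\costc}$, and a direct computation shows that each tilted marginal $\Wma{\rns}{\GausDen{\theta_{\rns}}}(\dinp)$ is itself Gaussian with mean $\alpha\dinp$ and variance $\tilde{\sigma}^{2}$, where $\alpha=\rns\theta_{\rns}/(\rns\theta_{\rns}+(1-\rns)\sigma^{2})$ and $\tilde{\sigma}^{2}=\sigma^{2}\theta_{\rns}/(\rns\theta_{\rns}+(1-\rns)\sigma^{2})$. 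Consequently $\vmn{\dmes}$ is the order-$\rns$ tilt of $\wmn{\dmes}$ against $\mQ$.

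The essential structural fact is that the Kullback--Leibler divergences $\RD{1}{\vmn{\dmes}}{\mQ}$ and $\RD{1}{\vmn{\dmes}}{\wmn{\dmes}}$ depend on the codeword $\enc(\dmes)$ only through its total energy, which is pinned to $\blx\costc$ by hypothesis. A coordinate-wise Gaussian KL computation, combined with the fixed-point identity \eqref{eq:gaussian:center-variance-necessarycondition} and the parametric forms \eqref{eq:gaussian:parametric-rate} and \eqref{eq:gaussian:parametric-spe}, then evaluates these two divergences to $\blx\rate$ and $\blx\spe{\rate,\Wm,\costc}$ respectively, for every message $\dmes$. With these identities in place, the argument concludes as before: since $\sum_{\dmes\in\mesS}\mQ(\dmes\in\dec)\leq L$, Markov's inequality yields at least $\lfloor(M+1)/2\rfloor$ messages with $\mQ(\dmes\in\dec)\leq 2L/M$, and applying Lemma \ref{lem:HTBE} with $\oev=\{\dmes\in\dec(\out_{1}^{\blx})\}$ and $\beta=2$ to each such message produces $\Pem{\dmes}\geq \htdelta^{\rns-1}2^{(\rns-1)/\rns}\blx^{-1/(2\rns)}e^{-\blx\spe{\rate,\Wm,\costc}}$; averaging over $\dmes$ then gives \eqref{eq:thm:gaussian}.

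The substantial technical work is the verification of the codeword-independent formulas for the Berry--Esseen moments $\amn{2}$ and $\amn{3}$. Writing $\dout_{\tin}=\alpha\dinp_{\tin}+\tilde{\sigma}Z_{\tin}$ with $Z_{\tin}$ i.i.d.\ standard normal under $\vmn{\dmes}$, the centered log-likelihood ratio in coordinate $\tin$ takes the explicit form $A_{\tin}Z_{\tin}+B(Z_{\tin}^{2}-1)$, with $A_{\tin}^{2}=\sigma^{2}\theta_{\rns}\dinp_{\tin}^{2}/(\rns\theta_{\rns}+(1-\rns)\sigma^{2})^{3}$ depending on $\dinp_{\tin}$ only through $\dinp_{\tin}^{2}$, and $B=\tfrac{\theta_{\rns}-\sigma^{2}}{2(\rns\theta_{\rns}+(1-\rns)\sigma^{2})}$ independent of the codeword. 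The per-coordinate variance $A_{\tin}^{2}+2B^{2}$ then averages, under $\sum_{\tin}\dinp_{\tin}^{2}=\blx\costc$, to exactly the stated $\amn{2}$. The per-coordinate third absolute moment is estimated by a triangle-type inequality $|AZ+B(Z^{2}-1)|^{3}\leq 4|A|^{3}|Z|^{3}+4|B|^{3}|Z^{2}-1|^{3}$ together with $\EX{|Z|^{3}}=2\sqrt{2/\pi}$ and a closed-form bound on $\EX{|Z^{2}-1|^{3}}$, which decomposes $\amn{3}$ into a $B$-only piece and an $A$-dependent piece proportional to $\costc^{3/2}$, with the universal constants $18$ and $18\sqrt{2/\pi}$ absorbing the numerical Gaussian moments.

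The main obstacle is the $A$-dependent part of $\amn{3}$: the straightforward bound produces $\frac{1}{\blx}\sum_{\tin}|\dinp_{\tin}|^{3}$, which is not controlled by the power-sphere constraint $\sum_{\tin}\dinp_{\tin}^{2}=\blx\costc$ alone (power-mean inequality goes the wrong way). Avoiding this requires organizing the third-moment estimate so that every $\dinp_{\tin}$-dependent contribution enters through $\dinp_{\tin}^{2}$, via a Jensen or Cauchy--Schwarz manipulation applied before averaging across $\tin$; this trades a slightly larger numerical constant for a cleaner scaling and is precisely what yields the $\costc^{3/2}$ form in the second summand of $\amn{3}$. Once that estimate is in hand, the condition $\sqrt{\amn{2}\blx}-\tfrac{\ln\blx}{2\rns}\geq\ln\htdelta$ in the theorem statement is simply the translation of the hypothesis $\beta\leq\htdelta^{-\rns}\blx^{-1/2}e^{\rns\sqrt{\amn{2}\blx}}$ from Lemma \ref{lem:HTBE} for $\beta=2$, and the proof is complete.
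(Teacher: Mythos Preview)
Your overall structure matches the paper's: same product measures \(\wmn{\dmes}\), \(\mQ\), \(\vmn{\dmes}\), same KL identities reducing to \(\blx\rate\) and \(\blx\spe{\rate,\Wm,\costc}\) via the energy constraint, same Markov step isolating half the messages, and the same invocation of Lemma~\ref{lem:HTBE} with \(\beta=2\). You also correctly pinpoint the one genuine difficulty, namely that the raw per-coordinate third-moment bound produces \(\tfrac{1}{\blx}\sum_{\tin}|\enc_{\tin}(\dmes)|^{3}\), which the power-sphere constraint does not control.

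However, your proposed fix---``a Jensen or Cauchy--Schwarz manipulation applied before averaging across \(\tin\)'' so that the \(\dinp_{\tin}\)-dependence enters only through \(\dinp_{\tin}^{2}\)---does not work. Any bound on \(\EX{|A_{\tin}Z+B(Z^{2}-1)|^{3}}\) that respects the product structure of Lemma~\ref{lem:HTBE} scales like \(|A_{\tin}|^{3}\propto|\dinp_{\tin}|^{3}\) in its leading term; you can check that even \(\EX{|\xi|^{3}}\leq(\EX{\xi^{2}})^{1/2}(\EX{\xi^{4}})^{1/2}\) still produces \(|A_{\tin}|^{3}\) after the square root. And \(\tfrac{1}{\blx}\sum_{\tin}|\dinp_{\tin}|^{3}\) can be as large as \(\blx^{1/2}\costc^{3/2}\) on the power sphere (concentrate all energy in one coordinate), so no codeword-independent \(\amn{3}\) of the form stated in the theorem is available through moment inequalities alone.

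The paper closes this gap with a different device: for each message \(\dmes\) it applies a rotation \(\alg{S}_{\dmes}\) of \(\reals{}^{\blx}\) sending \(\enc(\dmes)\) to the constant vector \((\sqrt{\costc},\ldots,\sqrt{\costc})\). Because both the white Gaussian noise law and \(\mQ=\GausDen{\theta_{\rns}}^{\otimes\blx}\) are rotation-invariant, one has \(\wmn{\dmes}(\dmes\notin\dec)=\wmn{*}(\dmes\notin\alg{S}_{\dmes}\dec)\) and \(\mQ(\dmes\in\dec)=\mQ(\dmes\in\alg{S}_{\dmes}\dec)\), where \(\wmn{*}=\bigotimes_{\tin}\Wm(\sqrt{\costc})\). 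Lemma~\ref{lem:HTBE} is then applied to the pair \((\wmn{*},\mQ)\) with the rotated decision region; now every coordinate has \(\dinp_{\tin}=\sqrt{\costc}\), so the per-coordinate third moment is identical across \(\tin\) and equals the bound in \eqref{eq:gaussian-16} at \(\dinp=\sqrt{\costc}\), giving exactly the stated \(\amn{3}\). This rotation is the missing idea in your proposal.
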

Before proving of Theorem \ref{thm:gaussian}, let us briefly
discuss its implications.
Theorem \ref{thm:gaussian} bounds the performance of codes satisfying 
an equality cost constraint, but it can also be used to bound the performance 
of codes satisfying an inequality cost constraint.
In particular, Shannon has observed in \cite[(83)]{shannon59} that
\begin{align}
\label{eq:gaussian:Shannon}
\widetilde{\Pe}(\blx,\costc)
\geq 
\Pe(\blx,\costc)
\geq 
\widetilde{\Pe}(\blx+1,\costc),
\end{align}
where \(\Pe(\blx,\costc)\) is the infimum of 
error probabilities of \((M,L)\) channel codes 
satisfying \(\sum_{\tin=1}^{\blx}(\enc_{\tin}(\dmes))^{2}\leq\blx\costc\)
and \(\widetilde{\Pe}(\blx,\costc)\) is the analogous quantity for 
the constraint \(\sum_{\tin=1}^{\blx} (\enc_{\tin}(\dmes))^{2}=\blx\costc\).
The first inequality of \eqref{eq:gaussian:Shannon} holds 
because any code satisfying 
the equality constraint also satisfies the inequality constraint. 
The second inequality of \eqref{eq:gaussian:Shannon} 
is confirmed by considering 
an extension of codewords by one additional symbol, \(\enc_{\blx+1}(\dmes)\), 
so as to satisfy 
\(\sum_{\tin=1}^{\blx+1}(\enc_{\tin}(\dmes))^{2}=(\blx+1)\costc\).
Recently, Vazquez-Vilar have improved \eqref{eq:gaussian:Shannon}
in \cite[Proposition 1]{vazquez19B}
by observing that the same extension can be constructed for the constraint 
\(\sum_{\tin=1}^{\blx+1}(\enc_{\tin}(\dmes))^{2}=\blx \costc\), as well. 
Thus, we have
\begin{align}
\label{eq:gaussian:Vazquez-vilar}
\widetilde{\Pe}(\blx,\costc)
\geq 
\Pe(\blx,\costc)
\geq 
\widetilde{\Pe}\left(\blx+1,\tfrac{\blx}{\blx+1}\costc\right).
\end{align}
One can use Theorem \ref{thm:gaussian} together with either \eqref{eq:gaussian:Shannon} or \eqref{eq:gaussian:Vazquez-vilar}
to determine prefactor for codes satisfying the cost constraint with an inequality.
For that let us first note that \(\spe{\rate,\Wm,\costc}\) is convex in the rate \(\rate\)
as a result of \eqref{eq:gaussian:parametric-rno} and \eqref{eq:gaussian:SPE-derivative}
because \(\rns\) is increasing monotonically with \(\rate\) on \([0,\CRC{1}{\Wm}{\costc}]\).
Then \(\spe{\rate,\Wm,\costc}\) lies above its tangent at any point and 
\eqref{eq:gaussian:SPE-derivative} implies
\begin{align}
\label{eq:gaussian:convexity-expansion}
\spe{\rate_{1},\Wm,\costc}
&\geq\spe{\rate_{0},\Wm,\costc}+\tfrac{\rns(\rate_{0})-1}{\rns(\rate_{0})}(\rate_{1}-\rate_{0}). 
\end{align}
Applying Theorem \ref{thm:gaussian} at \((\blx+1)\) ---rather than \(\blx\)---
together with \eqref{eq:gaussian:Shannon} and
invoking \eqref{eq:gaussian:convexity-expansion}
for \(\rate_{0}=\tfrac{1}{\blx+1}\ln\tfrac{M}{L}\) and \(\rate_{1}=\rate\),
we get the following bound 
for \((M,L)\) channel codes \((\enc,\dec)\) satisfying
\(\sum_{\tin=1}^{\blx} (\enc_{\tin}(\dmes))^{2}\leq\blx\costc\)
\begin{align}
%\label{eq:cor:gaussian}
\notag
\Pem{(\blx)}
&\geq
\tfrac{\htdelta^{\rns_{0}-1}}{(4(\blx+1))^{\sfrac{1}{2\rns_{0}}}}
e^{-(\blx+1) \spe{\rate_{0},\Wm,\costc}}	
\\
\notag
&\geq
\tfrac{\htdelta^{\rns_{0}-1} e^{-\spe{\rate,\Wm,\costc}}}{(8\blx)^{\sfrac{1}{2\rns_{0}}}}
e^{-\blx\spe{\rate,\Wm,\costc}}e^{\frac{\rns_{0}-1}{\rns_{0}}\frac{\blx \rate}{\blx+1}}	
\\
\notag
&=\tfrac{\htdelta^{\rns_{0}-1}e^{\CRC{\rns_{0}}{\Wm}{\costc}\sfrac{(1-\rns_{0})}{\rns_{0}}}}{8^{\sfrac{1}{2\rns_{0}}}
	\blx^{\sfrac{(\rns-\rns_{0})}{2\rns\rns_{0}}}
	}
\blx^{-\frac{1}{2\rns}}
e^{-\blx\spe{\rate,\Wm,\costc}}
\end{align}	
where \(\rns_{0}=\rns(\rate_{0})\) and \(\amn{2}\), \(\amn{3}\), and \(\htdelta\) are 
calculated at \(\rns_{0}\), rather than \(\rns\),
provided that 
\(\sqrt{\amn{2}(\blx+1)}-\tfrac{\ln(\blx+1)}{2\rns_{0}}\geq \htdelta\).
Note that 
\(\abs{\rate-\rate_{0}}=\tfrac{\rate}{\blx+1}\)
and the function
\(\rns\) given in \eqref{eq:gaussian:parametric-rno}
is analytical in the rate \(\rate\);
hence \(\abs{\rns_{0}-\rns}\) is \(\bigo{\blx^{-1}}\).
Thus 
\begin{align}
\notag
\tfrac{\htdelta^{\rns_{0}-1}e^{\CRC{\rns_{0}}{\Wm}{\costc}\sfrac{(1-\rns_{0})}{\rns_{0}}}}{8^{\sfrac{1}{2\rns_{0}}}
	\blx^{\sfrac{(\rns-\rns_{0})}{2\rns\rns_{0}}}}
\sim A
\end{align}
for some constant \(A\in\reals{+}\) and
\eqref{eq:rSPB},
holds not only for codes satisfying 
\(\sum_{\tin=1}^{\blx} (\enc_{\tin}(\dmes))^{2}=\blx\costc\)
but also for codes satisfying 
\(\sum_{\tin=1}^{\blx} (\enc_{\tin}(\dmes))^{2}\leq\blx\costc\),
for the order \(\rns\) given in \eqref{eq:gaussian:parametric-rno}, 
for some constant \(A\in\reals{+}\) for \(\blx\) large enough.

\begin{proof}[Proof of Theorem \ref{thm:gaussian}]
	The following expressions for the order one \renyi divergence and 
	the order \(\rno\) tilted channel for the zero-mean Gaussian distribution 
	of variance \(\theta\) can be confirmed by substitution 
	\begin{align}
	\label{eq:gaussian-1}
	\RD{1}{\Wm(\dinp)}{\GausDen{\theta}}
	&=\tfrac{\sigma^{2}+\dinp^{2}-\theta}{2\theta}+\tfrac{1}{2}\ln \tfrac{\theta}{\sigma^{2}}
	\\
	\label{eq:gaussian-2}
	\Wma{\rno}{\GausDen{\theta}}(\oev|\dinp)
	&=\int_{\oev} \GausDen{\frac{\sigma^{2}\theta}{\rno\theta+(1-\rno)\sigma^{2}}}
	(\dout-\tfrac{\rno \theta}{\rno\theta+(1-\rno)\sigma^{2}}\dinp)\dif{\dout}
	\end{align}
	for all
	\(\dinp\in\reals{}\),
	\(\theta\in\reals{+}\),
	and  \(\oev\in\rborel{\reals{}}\).
Then as a result of  
\eqref{eq:gaussian:center-variance-necessarycondition},
\eqref{eq:gaussian:parametric-rate},
\eqref{eq:gaussian:parametric-spe},
and  
\eqref{eq:gaussian:parametric-rno},
we have
\begin{align}
\label{eq:gaussian-3}
\hspace{-.35cm}\CRD{1}{\Wma{\rns}{\GausDen{\theta_{\rns}}}}{\GausDen{\theta_{\rns}}}{\mP}
&=\tfrac{{\rns}^{2}\theta_{\rns}(\EXS{\mP}{\dinp^{2}}-\costc)}{2(\rns \theta_{\rns}+(1-\rns)\sigma^{2})^{2}}+\rate,
\\
\label{eq:gaussian-4}
\hspace{-.35cm}
\CRD{1}{\Wma{\rns}{\GausDen{\theta_{\rns}}}}{\Wm}{\mP}
&=\tfrac{(1-\rns)^{2}\sigma^{2}(\EXS{\mP}{\dinp^{2}}-\costc)}{2(\rns \theta_{\rns}+(1-\rns)\sigma^{2})^{2}}
+\spe{\rate,\Wm,\costc},
\end{align} 
where \(\rate=\tfrac{1}{\blx}\ln\tfrac{M}{L}\).

Let probability measures \(\wmn{\dmes}\), \(\mQ\),  and \(\vmn{\dmes}\)
in \(\pmea{\outA_{1}^{\blx}}\) be
\begin{align}
\notag
\wmn{\dmes} 
&\DEF \bigotimes\nolimits_{\tin=1}^{\blx} \Wm(\enc_{\tin}(\dmes)),
&
\mQ
&\DEF \bigotimes\nolimits_{\tin=1}^{\blx} \GausDen{\theta_{\rns}},
&\vmn{\dmes} 
&\DEF \bigotimes\nolimits_{\tin=1}^{\blx} \Wma{\rns}{\GausDen{\theta_{\rns}}}(\enc_{\tin}(\dmes)).
\end{align}
Then \(\vmn{\dmes}\) is the order \(\rns\)
tilted probability measure between \(\wmn{\dmes}\) and \(\mQ\);
using the hypothesis 
\(\sum_{\tin=1}^{\blx} (\enc_{\tin}(\dmes))^{2}=\blx\costc\)
together with \eqref{eq:gaussian-3} and \eqref{eq:gaussian-4},
we get
\begin{align}
\label{eq:gaussian-8}
%\notag
\RD{1}{\vmn{\dmes}}{\mQ}
&=\ln \tfrac{M}{L}
&
&\dmes\in\mesS, 
\\
\label{eq:gaussian-9}
%\notag
\RD{1}{\vmn{\dmes}}{\wmn{\dmes}}
&=\blx \spe{\tfrac{1}{\blx}\ln\tfrac{M}{L},\Wm,\costc}
&
&\dmes\in\mesS. 
\end{align}
In order to obtain \eqref{eq:thm:gaussian}, we will apply Lemma \ref{lem:HTBE}
to the probability measure pairs \((\wmn{\dmes},\mQ)\)  satisfying 
\(\mQ(\dmes\in\dec)\leq \tfrac{2 L}{M}\) for \(\beta=2\)
together with a rotation on \(\reals{}^{\blx}\) 
that minimizes the approximation error terms arising from the absolute 
third order moments.

For any \(\dinp\in\reals{}\), let the random variable \(\cln{\dinp}\) be 
\begin{align}
%\label{eq:gaussian-10}
\notag
\cln{\dinp}
&\DEF \ln\der{\Wm(\dinp)}{\GausDen{\theta_{\rns}}}
-\EXS{\Wma{\rns}{\GausDen{\theta_{\rns}}}(\dinp)}{\ln\der{\Wm(\dinp)}{\GausDen{\theta_{\rns}}}}.
\end{align}
Then using \eqref{eq:gaussian-2} we get
\begin{align}
\notag
\cln{\dinp}
&=\tfrac{1}{2}
\tfrac{\dout^{2}}{\theta_{\rns}}
-\tfrac{(\dout -\dinp)^{2}}{2\sigma^{2}} 
+\tfrac{1}{2}\ln \tfrac{\theta_{\rns}}{\sigma^{2}}
-\tfrac{1}{2}\EXS{\Wma{\rns}{\GausDen{\theta_{\rns}}}(\dinp)}{\tfrac{\dout^{2}}{\theta_{\rns}}-\tfrac{(\dout -\dinp)^{2}}{\sigma^{2}}}
-\tfrac{1}{2}\ln \tfrac{\theta_{\rns}}{\sigma^{2}}
\\
\label{eq:gaussian-11}
&=\tfrac{\sigma^{2}-\theta_{\rns}}{2\sigma^{2}\theta_{\rns}}
\left[(\dout-\tfrac{\rns \theta_{\rns} \dinp}{\rns\theta_{\rns}+(1-\rns)\sigma^{2}})^{2}
-\tfrac{\sigma^{2} \theta_{\rns}}{\rns\theta_{\rns}+(1-\rns)\sigma^{2}}
\right]
+\tfrac{\dinp}{\rns\theta_{\rns}+(1-\rns)\sigma^{2}}
(\dout-\tfrac{\rns \theta_{\rns} \dinp}{\rns\theta_{\rns}+(1-\rns)\sigma^{2}}).
\end{align}
On the other hand, moments and absolute moments of a zero-mean Gaussian random variable \(\sta\) 
with variance \(\sigma^{2}\) satisfy the following identities
\begin{align}
\notag
\EX{\sta^{\knd}}
&=\IND{\frac{\knd}{2}\in\integers{+}}\sigma^{\knd} (\knd-1)!!,
\\
\notag
\EX{\abs{\sta}^{\knd}}
&=\left[\IND{\frac{\knd}{2}\notin\integers{+}}\sqrt{\tfrac{2}{\pi}}
+\IND{\frac{\knd}{2}\in\integers{+}}
\right]
\sigma^{\knd} (\knd-1)!!,
\end{align}
where \(\knd!!=\prod_{\ind=0}^{\lceil\frac{\knd}{2} \rceil-1}(\knd-2\ind)\).
Furthermore, for any three random variables 
\(\sta_{1}\), \(\sta_{2}\), and \(\sta_{3}\), 
we have\footnote{The inequality given in \eqref{eq:bound-on-the-third-moment}
	follows from the H\"{o}lder's inequality via the observation that
	the geometric mean is less than the arithmetic mean. 
	A proof is presented in Appendix \ref{sec:ProofOfeq:bound-on-the-third-moment}
	for completeness.}
\begin{align}
\label{eq:bound-on-the-third-moment}
\EX{\abs{\sta_{1}+\sta_{2}+\sta_{3}}^{3}}
&\leq9\left(
\EX{\abs{\sta_{1}}^{3}}+\EX{\abs{\sta_{2}}^{3}}+\EX{\abs{\sta_{3}}^{3}}
\right).
\end{align}
Then using \eqref{eq:gaussian-11}, one can confirm by substitution that
\begin{align}
\label{eq:gaussian-15}
\EXS{\Wma{\rns}{\GausDen{\theta_{\rns}}}(\dinp)}{(\cln{\dinp})^{2}}
&=\tfrac{(\theta_{\rns}-\sigma^{2})^{2}}{2(\rns\theta_{\rns}+(1-\rns)\sigma^{2})^{2}}
+\tfrac{\sigma^{2}\theta_{\rns}\dinp^{2}}{(\rns\theta_{\rns}+(1-\rns)\sigma^{2})^{3}},
\\
\label{eq:gaussian-16}
\EXS{\Wma{\rns}{\GausDen{\theta_{\rns}}}(\dinp)}{\abs{\cln{\dinp}}^{3}}
&\leq
\tfrac{18(\theta_{\rns}-\sigma^{2})^{3}}{(\rns\theta_{\rns}+(1-\rns)\sigma^{2})^{3}}
+\tfrac{18(\sqrt{\sfrac{2}{\pi}})\sigma^{3}\theta_{\rns}^{\sfrac{3}{2}}\dinp^{3}}{(\rns\theta_{\rns}+(1-\rns)\sigma^{2})^{4.5}}.
\end{align}
The hypothesis \(\sum_{\tin=1}^{\blx} (\enc_{\tin}(\dmes))^{2}=\blx\costc\)
implies the \(\amn{2}\) of Lemma \ref{lem:HTBE}  to be equal to 
the \(\amn{2}\) of Theorem \ref{thm:gaussian}.
One, however, cannot assert the analogous relation for \(\amn{3}\)'s.
Nevertheless, there exists a rotation in \(\reals{}^{\blx}\), say \(\alg{S}_{\dmes}\)
such that \(\alg{S}_{\dmes}\enc(\dmes)\) is equal to vector whose all entries are \(\sqrt{\costc}\), 
i.e.
\begin{align}
\notag
\alg{S}_{\dmes}(\enc_{1}(\dmes),\ldots,\enc_{\blx}(\dmes))
&=(\sqrt{\costc},\ldots,\sqrt{\costc}).
\end{align}
Note that for  \(\wmn{*}\DEF\bigotimes\nolimits_{\tin=1}^{\blx} \Wm(\sqrt{\costc})\),
we have
\begin{align}
\notag
\wmn{\dmes}(\dmes\notin \dec)
&=\wmn{*}(\dmes\notin \alg{S}_{\dmes}\dec),
\\
\notag
\mQ(\dmes\in \dec)
&=\mQ(\dmes\in \alg{S}_{\dmes}\dec).
\end{align}
Thus one can apply Lemma \ref{lem:HTBE} to the pair \((\wmn{*},\mQ)\) in
order to bound \(\Pem{\dmes}\). For the pair \((\wmn{*},\mQ)\), 
however, \(\amn{3}\) of Lemma \ref{lem:HTBE} 
is equal to the \(\amn{3}\) of of Theorem \ref{thm:gaussian}.

As it was the case for the proofs of Theorems \ref{thm:constantcomposition} and
\ref{thm:symmetric}, \(\sum_{m\in\mesS}\mQ(\dmes\in\dec)\leq L\)
by the definition of list decoding. 
Thus at least half of the messages in \(\mesS\) 
will satisfy \(\mQ(\dmes\in\dec)\leq \tfrac{2 L}{M}\)
as a result of Markov's inequality.
Applying Lemma \ref{lem:HTBE} 
with \(\oev=\{\dout_{1}^{\blx}:\dmes\in\dec(\dout_{1}^{\blx})\}\)
and \(\beta=2\) for the messages satisfying
\(\mQ(\dmes\in\dec)\leq \tfrac{2 L}{M}\)
and using \eqref{eq:gaussian-8} and \eqref{eq:gaussian-9}, we get
\begin{align}
\notag
\Pem{\dmes}
&\geq\htdelta^{\rns-1}2^{\frac{\rns-1}{\rns}}
\blx^{-\frac{1}{2\rns}}e^{-\blx \spe{\frac{1}{\blx}\ln\frac{M}{L},\Wm,\costc}}	
\end{align}
as long as
\(\sqrt{\amn{2}\blx}-\tfrac{\ln4\blx}{2\rns}\geq \ln \htdelta\).
Then \eqref{eq:thm:gaussian}
follows from the definition of \(\Pe\) as the
average of the conditional error probabilities.
\end{proof}

\section{Discussion}\label{sec:conclusion}
Theorems \ref{thm:constantcomposition}, \ref{thm:symmetric}, \ref{thm:gaussian} 
establish refined sphere packing bounds, i.e.,bounds of the form \eqref{eq:rSPB}, 
for fixed composition codes on stationary memoryless channels,
codes on (possibly) non-stationary \renyi symmetric channels,
and cost-constrained codes on additive white Gaussian noise channels with the quadratic cost function.
Derivations of Theorems \ref{thm:constantcomposition}, \ref{thm:symmetric}, \ref{thm:gaussian} 
rely on the properties of Augustin's information measures
and the application of Berry-Esseen theorem to the hypothesis testing problem
summarized in Lemma \ref{lem:HTBE}.
For certain cases including 
the additive white Gaussian channels \cite{shannon59}
and 
the strongly symmetric channels \cite{dobrushin62B},
these bounds are known to be tight in the sense that 
they can be matched by achievability results asserting 
the existence of codes satisfying
\begin{align}
\label{eq:rRC}
\Pem{(\blx)}
&\leq \widetilde{A} \blx^{\frac{\dspe{\rate}-1}{2}} e^{-\blx \spe{\rate}}
&
&\forall \blx \geq \blx_{0}
\end{align}
for rates between the critical rate and the capacity of the channel.
Recently, \altug and Wagner \cite[Theorem 1]{altugW19} 
have generalized the results of 
\cite{elias55B} and \cite{dobrushin62B} and
established \eqref{eq:rSPB} and \eqref{eq:rRC} for all
non-singular Gallager symmetric channels.

At least since \cite{elias55B}, it is also known 
that for the binary erasure channel 
the polynomial prefactor of \eqref{eq:rSPB} can be improved 
from \(\blx^{\sfrac{(\dspe{\rate}-1)}{2}}\) to 
\(\blx^{\sfrac{-1}{2}}\).
Recently, \altug and Wagner proved 
this result for all singular Gallager symmetric channels,
\cite[Theorem 2]{altugW19}.
Both \cite{elias55B} and \cite{altugW19},
however, have refrained from relying on bounds on 
the performance of the binary hypothesis testing problem 
with independent samples. 
This is not surprising because 
Lemma \ref{lem:HTBE} characterizes the 
prefactor for the binary hypothesis testing problem 
with independent samples, exactly.
Thus the refined SPBs of the form \eqref{eq:rSPB}
are the best possible bounds
for derivations of the SPB relying on the asymptotic
behavior of sums of independent random variables,
notwithstanding their suboptimality for 
singular Gallager symmetric channels.

As pointed out in Section \ref{sec:HypothesisTesting},
one can improve Lemma \ref{lem:HTBE} and determine not only
the prefactor  but also the asymptotic constant in 
the tradeoff between the probabilities of type I and type II errors,
either by invoking finer characterizations of the asymptotic behavior 
of sums of independent random
variables  or by applying a saddle point approximation.
Although such results, e.g. \cite{csiszarL71,vazquezFKL18}, 
require stronger hypothesis and are more 
nuanced,\footnote{Even the statement of these results are more nuanced 
	because	they need to distinguish the lattice and non-lattice cases 
	for the random variables involved in the analysis.}
they are important in the context of binary hypothesis testing.
From the standpoint of the channel coding problem, however, 
it is rather hard to justify the extra effort such an analysis requires.
First of all, the corresponding achievability results will have different 
constants, even when the prefactors match,
as observed in \cite{elias55B,shannon59,dobrushin62B}.
More importantly, such refined results on binary hypothesis testing
will also suffer from the subtlety discussed in the previous 
paragraph for the case of the singular channels,
i.e.,their prefactor will be suboptimal for the singular 
channels, such as the binary erasure channel.

The principal novelty of this manuscript is the use of 
the Berry-Essen theorem via suitable Augustin information measures 
to bound the optimal error probability in the channel coding problem.
In this manuscript, our primary focus was the rates below 
the channel capacity; 
thus, we have derived refined sphere packing bounds. 
The same idea can be used to strengthen the strong converse bounds 
under similar symmetry hypothesis, as it has recently been demonstrated
in \cite{chengN20A}.
The essential technical challenge in this line of work is the 
derivation of the refined SPBs and the refined strong converses 
without any symmetry assumptions on the channel or on the codes.
\appendices
%!TEX root=../main-F.tex
\section{\renyi Symmetry is implied by input symmetry and Gallager Symmetry}\label{sec:renyi-symmetry}
In the following, we will explain briefly why the \renyi symmetry holds both 
for all input symmetric channels described in \cite[Definition 3.2]{wiechmanS08}
and for all Gallager symmetric channels described in \cite[p. 94]{gallager}.
Let us start with the input symmetric channels.
Let \(\mU\) be the uniform distribution 
on the input set of the input symmetric \(\Wm\)
and let \(\qmn{\rno}\) be
\begin{align}
\label{eq:def:sym:center}
\qmn{\rno}
&\DEF\tfrac{\mmn{\rno}}{\lon{\mmn{\rno}}}
&&&
&\mbox{~where~}
&&&
\der{\mmn{\rno}}{\rfm}
&\DEF
\left[\sum\nolimits_{\dinp} \mU(\dinp)\left(\der{\Wm(\dinp)}{\rfm}\right)^{\rno}\right]^{\frac{1}{\rno}}.
\end{align}
Then the using the input symmetry one can confirm that
\begin{align}
\label{eq:def:sym:divergence}
\RD{\rno}{\Wm(\dinp)}{\qmn{\rno}}
&=\tfrac{\rno}{\rno-1}\ln \lon{\mmn{\rno}}
&
&\forall \dinp\in\inpS.
\end{align}
The definitions of the tilted channel, \eqref{eq:def:sym:center},
and \eqref{eq:def:sym:divergence} imply 
\(\qmn{\rno}=\sum_{\dinp} \mU(\dinp)\Wma{\rno}{\qmn{\rno}}(\dinp)\)
and  \(\qmn{1,\mU}\AC\qmn{\rno}\). 
Thus \(\qmn{\rno}\) is the order \(\rno\) Augustin mean for the uniform 
input distribution 
and 
\(\RMI{\rno}{\mU}{\Wm}=\CRD{\rno}{\Wm}{\qmn{\rno}}{\mU}\)
by \cite[{Lemma \ref*{C-lem:information}}]{nakiboglu19C}. 
Hence
\begin{align}
\label{eq:def:sym:information}
~~~~\RMI{\rno}{\mU}{\Wm}
&=\tfrac{\rno}{\rno-1}\ln \lon{\mmn{\rno}}.
&
&\end{align}
Since \(\CRD{\rno}{\Wm}{\qmn{\rno}}{\mP}=\RMI{\rno}{\mU}{\Wm}\)
for all \(\mP\in\pdis{\inpS}\) by \eqref{eq:def:sym:divergence}
and \eqref{eq:def:sym:information},
the probability measure \(\qmn{\rno}\) is not only the order \(\rno\)
Augustin mean for the input distribution \(\mU\) 
but also the order \(\rno\) Augustin center of \(\Wm\) by 
\cite[Thm. \ref*{C-thm:minimax}]{nakiboglu19C}.
Then the constraint for the \renyi symmetry follows from  
the definition of \(\qmn{\rno}\) and the input symmetry. 
Thus every input symmetric channel \(\Wm\) is also a
\renyi symmetric channel.

Now let us considers a Gallager symmetric channel \(\Wm\).
Let \(\set{S}_{1},\ldots,\set{S}_{m}\) be the partition of the output set \(\outS\) 
assumed in the definition of Gallager symmetry, e.g., \cite[p. 94]{gallager},
\(\mU\) be the uniform distribution on the input set \(\inpS\) of \(\Wm\), 
and
\(\mmn{\rno}\) and \(\qmn{\rno}\) be the measures defined in \eqref{eq:def:sym:center}. 
Gallager symmetry  implies not only that \(\mmn{\rno}\) and \(\qmn{\rno}\)
are probability mass functions but also that they satisfy the following identities:
\begin{align}
\label{eq:def:gsym:center}
\mmn{\rno}(\dout)
&=\sum\nolimits_{\ind}\IND{\dout\in\set{S}_{\ind}}\abs{\set{S}_{\ind}}^{-\frac{1}{\rno}}\left(\sum\nolimits_{\mS\in\set{S}_{\ind}} 
[\Wm(\mS|\dinp)]^{\rno} \right)^{\frac{1}{\rno}} 
&
&\forall \dinp\in\inpS,\dout\in\outS.
\end{align}
Note that \(\mmn{\rno}(\dout)=\mmn{\rno}(\dsta)\), and hence \(\qmn{\rno}(\dout)=\qmn{\rno}(\dsta)\), 
whenever \(\dout\) and \(\dsta\) are in the same \(\set{S}_{\ind}\)
as a result of \eqref{eq:def:gsym:center}.
Using this fact together with the Gallager symmetry, 
one can confirm both \eqref{eq:def:sym:divergence}
and \(\qmn{\rno}=\sum_{\dinp} \mU(\dinp)\Wma{\rno}{\qmn{\rno}}(\dinp)\).
On the other hand, \(\qmn{1,\mU}\AC\qmn{\rno}\).
Thus \(\qmn{\rno}\) is the order \(\rno\) Augustin mean for the uniform 
input distribution 
and 
\(\RMI{\rno}{\mU}{\Wm}=\CRD{\rno}{\Wm}{\qmn{\rno}}{\mU}\)
by \cite[{Lemma \ref*{C-lem:information}}]{nakiboglu19C}. 
Hence \eqref{eq:def:sym:information} holds for Gallager symmetric \(\Wm\!\),
as well.
Thus \(\CRD{\rno}{\Wm}{\qmn{\rno}}{\mP}=\RMI{\rno}{\mU}{\Wm}\)
for all \(\mP\in\pdis{\inpS}\) 
and  \(\qmn{\rno}\) is not only the order \(\rno\)
Augustin mean for the input distribution \(\mU\) 
but also the order \(\rno\) Augustin center of \(\Wm\) by 
\cite[Thm. \ref*{C-thm:minimax}]{nakiboglu19C}.
Then the constraint for the \renyi symmetry follows from  
the definition of \(\qmn{\rno}\) given in \eqref{eq:def:sym:center}, 
the Gallager symmetry and \eqref{eq:def:gsym:center}. 
Thus every Gallager symmetric channel \(\Wm\) is 
also a \renyi symmetric channel.
\begin{comment}
\begin{align}
\notag
\RD{\rno}{\Wm(\dinp)}{\qmn{\rno}}
%&=\tfrac{1}{\rno-1}\ln  \sum\nolimits_{\dout} [\Wm(\dout|\dinp)]^{\rno} \tfrac{[\mmn{\rno}(\dout)]^{1-\rno}}{\lon{\mmn{\rno}}^{1-\rno}} 
%\\
%\notag
&=\tfrac{1}{\rno-1}\ln \lon{\mmn{\rno}}^{\rno-1} \sum\nolimits_{\ind}
\sum\nolimits_{\dout\in\set{S}_{\ind}}
[\Wm(\dout|\dinp)]^{\rno} [\mmn{\rno}(\dout)]^{1-\rno}  
\\
\notag
&=\tfrac{1}{\rno-1}\ln \lon{\mmn{\rno}}^{\rno-1}  
\sum\nolimits_{\ind}\abs{\set{S}_{\ind}}^{\frac{\rno-1}{\rno}}\left(\sum\nolimits_{\mS\in\set{S}_{\ind}} 
[\Wm(\mS|\dinp)]^{\rno} \right)^{\frac{1-\rno}{\rno}} 
\sum\nolimits_{\dout\in\set{S}_{\ind}}
 [\Wm(\dout|\dinp)]^{\rno} 
\\
\notag
&=\tfrac{1}{\rno-1}\ln \lon{\mmn{\rno}}^{\rno-1}  
\sum\nolimits_{\ind}\abs{\set{S}_{\ind}}^{\frac{\rno-1}{\rno}}\left(\sum\nolimits_{\mS\in\set{S}_{\ind}} 
[\Wm(\mS|\dinp)]^{\rno} \right)^{\frac{1}{\rno}} 
\\
&=\tfrac{\rno}{\rno-1}\ln \lon{\mmn{\rno}}
&
&\forall \dinp\in\inpS.
\end{align}
%\end{comment}
 
\section{Proof of Lemma \ref{lem:parametric-symmetric}}\label{sec:ProofOf:lem:parametric-symmetric}
%\begin{proof}[Proof of Lemma \ref{lem:parametric-symmetric}]
We first prove the existence of an order \(\rns\) in \((0,1)\) satisfying 
\eqref{eq:lem:parametric-symmetric-rate} using the intermediate value theorem.
The \renyi symmetry of \(\Wm\) implies
\begin{align}
\notag
\RD{\rno}{\Wm(\dinp)}{\qmn{\rnt,\Wm}}
&=\sup\nolimits_{\dsta\in\inpS} \RD{\rno}{\Wm(\dsta)}{\qmn{\rnt,\Wm}}.
\end{align}
for all \(\dinp\in\inpS\), \(\rnt\in(0,1]\), and 
\(\rno\in\reals{+}\).
Then \eqref{eq:capacity}, \eqref{eq:thm:minimax}, 
and	\eqref{eq:thm:minimaxcenter} imply 
\begin{align}
\label{eq:parametric-symmetric-1}
\RC{\rno}{\Wm}
&=\inf\nolimits_{\rnt\in(0,1]} \RD{\rno}{\Wm(\dinp)}{\qmn{\rnt,\Wm}}
\\
\label{eq:parametric-symmetric-2}
&=\RD{\rno}{\Wm(\dinp)}{\qmn{\rno,\Wm}}
\end{align}	
for all \(\dinp\in\inpS\)
and \(\rno\in(0,1]\).
Thus the non-negativity of the \renyi divergence 
and \eqref{eq:varitional-tilted} imply
\begin{align}
\label{eq:parametric-symmetric-3-q}
\RD{1}{\Wma{\rno}{\qmn{\rno,\Wm}}(\dinp)}{\qmn{\rno,\Wm}}
&\leq  \RC{\rno}{\Wm}
\\
\label{eq:parametric-symmetric-3-w}
\RD{1}{\Wma{\rno}{\qmn{\rno,\Wm}}(\dinp)}{\Wm(\dinp)}
&\leq
\tfrac{1-\rno}{\rno}\RC{\rno}{\Wm}
\end{align}
for all \(\dinp\in\inpS\) and \(\rno\in(0,1)\).
On the other hand the Pinsker's inequality
imply for all \(\dinp\in\inpS\) and \(\rno\in(0,1)\)
\begin{align}
\notag
\lon{\Wma{\rno}{\qmn{\rno,\Wm}}(\dinp)-\Wm(\dinp)}
&\leq \sqrt{2\RD{1}{\Wma{\rno}{\qmn{\rno,\Wm}}(\dinp)}{\Wm(\dinp)}}.
\end{align}
Thus \(\lim\nolimits_{\rno \uparrow 1} \lon{\Wma{\rno}{\qmn{\rno,\Wm}}(\dinp)-\Wm(\dinp)}=0\)
for all \(\dinp\in\inpS\) as a result of \eqref{eq:parametric-symmetric-3-w}.
On the other hand, the Augustin center \(\qmn{\rno,\Wm}\)
is continuous in \(\rno\) on \((0,1]\) for the total 
variation topology on \(\pmea{\outA}\) by 
\cite[Lemmas \ref*{C-lem:capacityO}-(\ref*{C-capacityO-continuity})
and \ref*{C-lem:centercontinuity}]{nakiboglu19C}.
Then the lower semicontinuity of the \renyi divergence 
in its arguments  for the topology of setwise convergence,
i.e.,\cite[Thm. 15]{ervenH14}, implies
\begin{align}
\notag
\liminf\nolimits_{\rno \uparrow 1} 
\RD{1}{\Wma{\rno}{\qmn{\rno,\Wm}}(\dinp)}{\qmn{\rno,\Wm}}
&\geq \RD{1}{\Wm(\dinp)}{\qmn{1,\Wm}}.
\end{align}
Then using  \eqref{eq:parametric-symmetric-2} we get
\begin{align}
\label{eq:parametric-symmetric:limitatone}
\liminf\nolimits_{\rno \uparrow 1} \RD{1}{\Wma{\rno}{\qmn{\rno,\Wm}}(\dinp)}{\qmn{\rno,\Wm}}
&\geq \RC{1}{\Wm}.
\end{align}
Furthermore, \(\RD{1}{\Wma{\rno}{\qmn{\rno,\Wm}}(\dinp)}{\qmn{\rno,\Wm}}\)
is continuous in \(\rno\) on \((0,1)\)
by \cite[Lemma \ref*{B-lem:tilting}]{nakiboglu19B}
because \(\qmn{\rno,\Wm}\)
is continuous in \(\rno\) on \((0,1]\) for the total 
variation topology.
Then \eqref{eq:parametric-symmetric-3-q} implies 
\begin{align}
\label{eq:parametric-symmetric:limitatzero}
\limsup\nolimits_{\rno \downarrow 0} \RD{1}{\Wma{\rno}{\qmn{\rno,\Wm}}(\dinp)}{\qmn{\rno,\Wm}}
&\leq
\lim\nolimits_{\rno \downarrow 0} \RC{\rno}{\Wm}.
\end{align}
The existence of an \(\rns\in(0,1)\) satisfying 
\eqref{eq:lem:parametric-symmetric-rate} follows from
\eqref{eq:parametric-symmetric:limitatone}, \eqref{eq:parametric-symmetric:limitatzero}, 
and the continuity of \(\RD{1}{\Wma{\rno}{\qmn{\rno,\Wm}}(\dinp)}{\qmn{\rno,\Wm}}\)
in \(\rno\) on \((0,1)\)
by the intermediate value theorem \cite[4.23]{rudin}.

We proceed with showing that any order \(\rns\) satisfying 
\eqref{eq:lem:parametric-symmetric-rate} 
also satisfies \eqref{eq:lem:parametric-symmetric-exponent}.
The definition of the SPE given 
in \eqref{eq:def:spherepackingexponent}
and
the consequence of the \renyi symmetry given in  \eqref{eq:parametric-symmetric-1},
imply that
\begin{align}
\notag
\spe{\rate,\Wm}
&=
\sup\nolimits_{\rno\in(0,1)} \inf\nolimits_{\rnt\in(0,1]}\tfrac{1-\rno}{\rno}
\left[\RD{\rno}{\Wm(\dinp)}{\qmn{\rnt,\Wm}}-\rate\right]
\\
\label{eq:parametric-symmetric-4}
&\leq \sup\nolimits_{\rno\in(0,1)} \fX(\rno,\rate),
\end{align}
where \(\fX(\rno,\tau)\DEF\tfrac{(1-\rno)}{\rno}[\RD{\rno}{\Wm(\dinp)}{\qmn{\rns,\Wm}} -\tau]\).
We show in the following that
\begin{align}
\label{eq:parametric-symmetric-5}
\sup\nolimits_{\rno\in(0,1)}\fX(\rno,\rate)
&=\RD{1}{\Wma{\rns}{\qmn{\rns,\Wm}}(\dinp)}{\Wm(\dinp)}.
\end{align}
Then for any order \(\rns\) satisfying \eqref{eq:lem:parametric-symmetric-rate},
equations \eqref{eq:varitional-tilted}, \eqref{eq:parametric-symmetric-2}
and the definition of SPE given in \eqref{eq:def:spherepackingexponent} imply
\begin{align}
\notag
\sup\nolimits_{\rno\in(0,1)}\fX(\rno,\rate)
&=\tfrac{(1-\rns)}{\rns}[\RC{\rns}{\Wm}-\rate]
\\
\label{eq:parametric-symmetric-6}
&\leq \spe{\rate,\Wm}.
\end{align}
Thus the inequalities given in 
\eqref{eq:parametric-symmetric-4} and \eqref{eq:parametric-symmetric-6}
hold as equalities and \(\rns\) satisfying \eqref{eq:lem:parametric-symmetric-rate}
also satisfies \eqref{eq:lem:parametric-symmetric-exponent} by
\eqref{eq:parametric-symmetric-5}.

Now we prove the identity given in \eqref{eq:parametric-symmetric-5},
which we have assumed in the preceding.
The \renyi divergence is non-decreasing in its order
by \cite[Thm. 3]{ervenH14}
and \(\RD{\rno}{\Wm(\dinp)}{\qmn{\rns,\Wm}}=\tfrac{\rno}{1-\rno}\RD{1-\rno}{\qmn{\rns,\Wm}}{\Wm(\dinp)}\) 
for all \(\rno\) in \((0,1)\) by definition. 
Then \(\RD{\rno}{\Wm(\dinp)}{\qmn{\rns,\Wm}}\) is finite for all 
\(\rno\) in \((0,1)\).
Thus both
\(\RD{\rno}{\Wm(\dinp)}{\qmn{\rns,\Wm}}\) 
and
\(\RD{1}{\Wma{\rno}{\qmn{\rns,\Wm}}(\dinp)}{\qmn{\rns,\Wm}}\)
%\(\RD{1}{\Wma{\rno}{\qmn{\rns,\Wm}}(\dinp)}{\Wm(\dinp)}\) 
are continuously differentiable in \(\rno\) on \((0,1)\) 
by  \cite[Lemma \ref{C-lem:analyticity}]{nakiboglu19C}
and their derivatives on \((0,1)\) are 
\begin{align}
\label{eq:parametric-symmetric-7}
\der{}{\rno}\RD{\rno}{\Wm(\dinp)}{\qmn{\rns,\Wm}}
&=\tfrac{1}{(\rno-1)^{2}}\RD{1}{\Wma{\rno}{\qmn{\rns,\Wm}}(\dinp)}{\Wm(\dinp)},
\\
%	\label{eq:parametric-symmetric-8}
%\der{}{\rno}\RD{1}{\Wma{\rno}{\qmn{\rns,\Wm}}(\dinp)}{\Wm(\dinp)}
%&=(\rno-1) \VXS{\Wma{\rno}{\qmn{\rns,\Wm}}(\dinp)}{\ln\der{\Wm(\dinp)}{\qmn{\rns,\Wm}}},
%\\
\label{eq:parametric-symmetric-9}
\der{}{\rno}\RD{1}{\Wma{\rno}{\qmn{\rns,\Wm}}(\dinp)}{\qmn{\rns,\Wm}}
&=\rno \VXS{\Wma{\rno}{\qmn{\rns,\Wm}}(\dinp)}{\cln{\dinp}},
\end{align}
where \(\cln{\dinp}=\ln\der{\Wm(\dinp)}{\rfm}-\ln\der{\qmn{\rns,\Wm}}{\rfm}\). Then
\begin{align}
\label{eq:parametric-symmetric-10}
\pder{}{\rno}\fX(\rno,\tau)
&=\tfrac{1}{\rno^{2}}\left(\tau-\RD{1}{\Wma{\rno}{\qmn{\rns,\Wm}}(\dinp)}{\qmn{\rns,\Wm}}\right).
\end{align}
Consequently, 
\eqref{eq:lem:parametric-symmetric-rate} implies \eqref{eq:parametric-symmetric-5}
by the derivative test
provided that 
\(\cln{\dinp}=\gamma\) does not hold \(\Wma{\rno}{\qmn{\rns,\Wm}}(\dinp)\)-a.s.
for any \(\gamma\in\reals{+}\) and \(\rno\in(0,1)\).
If \(\Wma{\rno}{\qmn{\rns,\Wm}}(\{\cln{\dinp}=\gamma\}|\dinp)=1\)
for some \(\gamma\in\reals{+}\) and \(\rno\in(0,1)\), on the other hand, then
the identities \eqref{eq:parametric-symmetric-11}, \eqref{eq:parametric-symmetric-12},
and \eqref{eq:parametric-symmetric-13} hold  for all \(\rno\in(0,1)\)
and one can confirm \eqref{eq:parametric-symmetric-5} by substitution.
\begin{align}
\label{eq:parametric-symmetric-11}
\RD{\rno}{\Wm(\dinp)}{\qmn{\rns,\Wm}}
&=\gamma+\tfrac{1}{1-\rno}
\ln\tfrac{1}{\Wm(\{\cln{\dinp}=\ln\gamma\}|\dinp)},
\\
\label{eq:parametric-symmetric-12}
\RD{1}{\Wma{\rno}{\qmn{\rns,\Wm}}(\dinp)}{\qmn{\rns,\Wm}}
&=\gamma+\ln\tfrac{1}{\Wm(\{\cln{\dinp}=\ln\gamma\}|\dinp)},
\\
\label{eq:parametric-symmetric-13}
\RD{1}{\Wma{\rno}{\qmn{\rns,\Wm}}(\dinp)}{\Wm(\dinp)}
&=\ln\tfrac{1}{\Wm(\{\cln{\dinp}=\ln\gamma\}|\dinp)}.
\end{align}

Now we are left with establishing \eqref{eq:lem:parametric-symmetric-slope} with 
either of the additional hypotheses. 
Let us first assume that there does not exists a \(\gamma\)	satisfying
\(\Wma{\rns}{\qmn{\rns,\Wm}}\left(\left.
\left\{\der{\Wm(\dinp)}{\rfm}=\gamma\der{\qmn{\rns,\Wm}}{\rfm}\right\}\right\vert\dinp\right)=1\).
Then \(\VXS{\Wma{\rno}{\qmn{\rns,\Wm}}(\dinp)}{\cln{\dinp}}>0\)
for all \(\rno\) in \((0,1)\)
%	---hence \(\der{}{\rno}\RD{1}{\Wma{\rno}{\qmn{\rno,\Wm}}(\dinp)}{\qmn{\rno,\Wm}}>0\)--- 
and thus \(\RD{1}{\Wma{\rno}{\qmn{\rns,\Wm}}(\dinp)}{\qmn{\rns,\Wm}}\)
is increasing in \(\rno\) on \((0,1)\).
Since \(\RD{1}{\Wma{\rno}{\qmn{\rns,\Wm}}(\dinp)}{\qmn{\rns,\Wm}}\) 
is also continuous in \(\rno\) on \((0,1)\), 
it has a continuous and increasing inverse function.
Then as a result of \eqref{eq:parametric-symmetric-10}
there exists an \(\epsilon>0\) and an increasing continuous function 
\(\hX:(\rate-\epsilon,\rate+\epsilon)\to(0,1)\) 
satisfying
\begin{align}
\notag
\sup\nolimits_{\rno\in(0,1)} \fX(\rno,\tau)
&=\fX(\hX(\tau),\tau)
&
&\forall \tau\in(\rate-\epsilon,\rate+\epsilon).
\end{align}
Then \eqref{eq:parametric-symmetric-4} implies
\begin{align}
\label{eq:parametric-symmetric-14}
\spe{\tau,\Wm}
&\leq \fX(\hX(\tau),\tau)
&
&\forall \tau\in(\rate-\epsilon,\rate+\epsilon).
\end{align}
On the other hand,
\eqref{eq:def:spherepackingexponent} and \eqref{eq:parametric-symmetric-2} 
imply
\begin{align}
\label{eq:parametric-symmetric-15}
\spe{\tau,\Wm}
&\geq \fX(\rns,\tau)
&
&\forall \tau\in\reals{+}.
\end{align}
Furthermore, \eqref{eq:parametric-symmetric-6} implies
\begin{align}
\label{eq:parametric-symmetric-16}
\spe{\rate,\Wm}
&\geq \fX(\rno,\rate)
&
&\forall \rno\in(0,1).
\end{align}
Then \eqref{eq:lem:parametric-symmetric-slope} follows from
\eqref{eq:parametric-symmetric-14}, \eqref{eq:parametric-symmetric-15}, \eqref{eq:parametric-symmetric-16},
the identity \(\hX(\rate)=\rns\),
the definition of the derivative as a limit,
and the continuity of \(\hX(\cdot)\),  which is defined as the inverse function of 
\(\RD{1}{\Wma{\rno}{\qmn{\rns,\Wm}}(\dinp)}{\qmn{\rns,\Wm}}\)
as a function of \(\rno\).

Now we establish \eqref{eq:lem:parametric-symmetric-slope} assuming
the existence of a \(\mQ\) satisfying 
\(\qmn{\rno,\Wm}=\mQ\)  for all \(\rno\) in \((0,1)\).
If there does not exists a \(\gamma\) satisfying
\(\Wma{\rns}{\mQ}\left(\left.
\left\{\der{\Wm(\dinp)}{\rfm}=\gamma\der{\mQ}{\rfm}\right\}\right\vert\dinp\right)=1\),
then \eqref{eq:lem:parametric-symmetric-slope} holds by the preceding discussion.
If there exists such a \(\gamma\) , then as a result of 
\eqref{eq:parametric-symmetric-2} and \eqref{eq:parametric-symmetric-11} we have
\begin{align}
\label{eq:parametric-symmetric-17}
\RC{\rno}{\Wm}
&=\gamma+\tfrac{1}{1-\rno}
\ln\tfrac{1}{\Wm(\{\cln{\dinp}=\ln\gamma\}|\dinp)}
&
&\forall \rno\in(0,1)
\end{align}
Then such a \(\gamma\) does not exist by
the hypotheses of the lemma because 
\eqref{eq:parametric-symmetric-17} 
and \(\RC{1}{\Wm}=\lim_{\rno \uparrow 1}\RC{\rno}{\Wm}\)
imply \(\RC{1}{\Wm}=\infty\) for 
\(\Wm(\{\cln{\dinp}=\ln\gamma\}|\dinp)<1\) case and 
\(\lim_{\rno \downarrow 0}\RC{\rno}{\Wm}=\RC{1}{\Wm}\)
for \(\Wm(\{\cln{\dinp}=\ln\gamma\}|\dinp)=1\)
case.
%\end{proof}	

\section{Shannon's Bounds For AWGN Channels and The Sphere Packing Exponent}\label{sec:ShannonsExpressions}
Shannon, \cite[(3)]{shannon59}, bounded the error probability of length \(\blx\) block codes 
described in Theorem \ref{thm:gaussian} as
\begin{align}
\label{eq:ShannonsBounds}
Q(\theta)
&\leq \Pem{(\blx)}
\leq
Q(\theta)-\int_{0}^{\theta} \tfrac{\Omega(\xi)}{\Omega(\theta)}
\dif{Q(\xi)}
\end{align}
where
\(\Omega(\cdot):[0,\pi]\to[0,\tfrac{2\pi^{\blx/2}}{\Gamma(\blx/2)}]\)
is the function mapping the cone angle 
to the corresponding solid angle in \(\reals{}^{\blx}\),
\(\theta\) is the cone angle satisfying
\(\Omega(\theta)=\tfrac{\Omega(\pi)}{M}\),
and
\(Q(\xi)\) is the probability that a point \(X\) in 
\(\reals{}^{\blx}\)
at a distance \(\sqrt{\blx\costc}\)
from the origin \(O\) being moved outside a circular cone of half-angle
\(\xi\) with the vertex at the origin \(O\) and the axis at \(OX\)
by a Gaussian noise of variance \(\sigma^{2}\).

Shannon, \cite[(4) and (5)]{shannon59}, derived 
the exact asymptotic expressions for both 
the upper bound and the lower bound given in \eqref{eq:ShannonsBounds}
in terms of functions \(f(\cdot)\) and \(g(\cdot)\) that do not depend on
the block length \(\blx\):
\begin{align}
\label{eq:ShannonLowerBound-EA}
Q(\theta)
&\sim\tfrac{f(\theta)}{\sqrt{\blx}} e^{-\blx \SGEX(\theta)}
&
&\forall \theta>\theta_{c}, 
\\
\label{eq:ShannonUpperBound-EA}
Q(\theta)-\int_{0}^{\theta}\tfrac{\Omega(\xi)}{\Omega(\theta)}
\dif{Q(\xi)}
&\sim\tfrac{g(\theta)}{\sqrt{\blx}} e^{-\blx \SGEX(\theta)}
&
&\forall \theta\in(\theta_{c},\theta_{cr}),
\end{align}
where \(\theta_{c}\) and \(\theta_{cr}\) ---the cone angles corresponding to 
the channel capacity and the critical rate---
are given by 
\begin{align}
\label{eq:ShannonCap}
\theta_{c}
&\DEF\arcsin\left[ \left(1+\tfrac{\costc}{\sigma^{2}}\right)^{-\sfrac{1}{2}}\right],
\\
\label{eq:ShannonCR}
\theta_{cr}
&\DEF\arcsin\left[\left(\tfrac{1}{2}+\tfrac{\costc}{4\sigma^{2}}+\sqrt{\tfrac{1}{4}+(\tfrac{\costc}{4\sigma^{2}})^{2}}\right)^{-\sfrac{1}{2}}\right],
\end{align}
and \(\SGEX(\cdot)\) ---the fixed cone angle exponent--- is defined via the function \(\GX(\cdot)\) as follows
\begin{align}
\label{eq:ShannonExp}
\SGEX(\theta)
&\DEF\tfrac{\costc}{2\sigma^{2}}
-\tfrac{\sqrt{\costc}}{2\sigma}G(\theta)\cos\theta-\ln\left(G(\theta)\sin\theta\right),
\\
\label{eq:ShannonGF}
\GX(\theta)
&\DEF\tfrac{1}{2}\left(\tfrac{\sqrt{\costc}}{\sigma}\cos\theta+\sqrt{\tfrac{\costc}{\sigma^{2}}\cos^{2}\theta+4} \right).
\end{align}
\begin{remark}
	Shannon's notation in \cite{shannon59} is slightly different from ours; Shannon works with 
	the signal to noise ``amplitude'' ratio \(A\DEF\tfrac{\sqrt{\costc}}{\sigma}\), 
	rather than cost constraint \(\costc\) and the noise power \(\sigma^{2}\).
	Furthermore, Shannon specifies the critical cone angle \(\theta_{cr}\) 
	as the solution of the equation given in \eqref{eq:ShannonCR-equation}.
	Nevertheless, one obtains the closed-form expression given in \eqref{eq:ShannonCR},
	by plugging in the definition of \(\GX(\cdot)\) ---given in \eqref{eq:ShannonGF}---
	in \eqref{eq:ShannonCR-equation} and solving the resulting quadratic 
	equation for \(\sin^{2}\theta_{cr}\). 
	\begin{align}
	\label{eq:ShannonCR-equation}
	2\cos\theta_{cr}-\tfrac{\sqrt{\costc}}{\sigma}\GX(\theta_{cr})\sin^{2}\theta_{cr}=0.
	\end{align} 
\end{remark}
\begin{comment}
\begin{align}
\notag
\GX(\theta)
&=\tfrac{1}{2}\left(A\cos\theta+\sqrt{A^{2}\cos^{2}\theta+4} \right)
\\
\notag
E(\theta)
&=\tfrac{A^{2}}{2}-\tfrac{A}{2}G(\theta)\cos\theta-\ln(G(\theta)\sin\theta)
\\
\notag
\cos\theta-\tfrac{A}{2}\sin^{2}\theta\GX(\theta)
&=0 
\\
\notag
\cos\theta\left(1-\tfrac{A^{2}}{4}\sin^{2}\theta\right)
&=\tfrac{A}{2}\sin^{2}\theta\sqrt{\tfrac{A^{2}}{4}\cos^{2}\theta+1} 
\\
\notag
\cos^{2}\theta
-2\cos^{2}\theta\tfrac{A^{2}}{4}\sin^{2}\theta
+\cos^{2}\theta (\tfrac{A^{2}}{4})^{2}\sin^{4}\theta
&=(\tfrac{A^{2}}{4})^{2}\sin^{4}\theta\cos^{2}\theta
+\tfrac{A^{2}}{4}\sin^{4}\theta
\\
\notag
\tfrac{A^{2}}{4}\sin^{4}\theta
-2\left[\tfrac{A^{2}}{4}+\tfrac{1}{2}\right]\sin^{2}\theta+1
&=0
\\
\notag
\sin^{2}\theta
&=\tfrac{4}{A^{2}}\left[\tfrac{A^{2}}{4}+\tfrac{1}{2}\pm\sqrt{(\tfrac{A^{2}}{4})^{2}+\tfrac{1}{4}}\right]
\\
\notag
\sin\theta
&=\sqrt{1+\tfrac{2}{A^{2}}-\sqrt{1+(\tfrac{2}{A^{2}})^{2}}}
\\
\notag
\tfrac{1}{\sin\theta}
&=\tfrac{A}{2}\sqrt{1+\tfrac{2}{A^{2}}+\sqrt{1+(\tfrac{2}{A^{2}})^{2}}}
\end{align}
%\end{comment}
Shannon presented the exact asymptotic expression for the rate \(\rate\) in terms of 
the cone angle \(\theta\) in \cite[(11)]{shannon59}, as well:
\begin{align}
\label{eq:ShannonRate}
e^{\blx\rate}
&\sim\tfrac{\sqrt{2\pi \blx} \cos\theta}{\sin^{\blx-1} \theta}. 
\end{align}
We obtain the fixed-rate asymptotic expression 
corresponding to the fixed cone angle asymptotic expressions 
\eqref{eq:ShannonLowerBound-EA}, \eqref{eq:ShannonUpperBound-EA},
and \eqref{eq:ShannonExp}, 
by first deriving the asymptotic expression for the cone angle 
\(\theta\) for a fixed-rate \(\rate\) using \eqref{eq:ShannonRate}.
If \(\abs{\delta_{\blx}}\ll 1\) and
\begin{align}
\label{eq:smallangle}
\theta_{\blx}=\xi+\delta_{\blx},
\end{align}
then using  the small angle approximation for  the trigonometric functions
we get
\begin{align}
\notag
\tfrac{\sqrt{2\pi \blx}  \cos\theta_{\blx}}{\sin^{\blx-1} \theta_{\blx}}
&=\tfrac{\sqrt{2\pi \blx}\cos\xi}{\sin^{\blx-1}\xi} 
\tfrac{1-\delta_{\blx}\tan\xi+\bigo{\delta_{\blx}^{2}}}{\left(1+\delta_{\blx}\cot\xi+\bigo{\delta_{\blx}^{2}}\right)^{\blx-1}}.
\end{align}
Invoking \(\ln(1+\epsilon)=\epsilon+\bigo{\epsilon^{2}}\), we get
\begin{align}
\notag
\ln\tfrac{\sqrt{2\pi \blx} \cos\theta_{\blx}}{\sin^{\blx-1} \theta_{\blx}}
&=\blx\ln\tfrac{1}{\sin\xi}+
\ln \left(\sqrt{\tfrac{\pi \blx}{2}}\sin 2\xi \right)
-\left[(\blx-1)\cot\xi+\tan\xi\right]\delta_{\blx}
+\blx\bigo{\delta_{\blx}^{2}}
\end{align}
Consequently, if 
\begin{align}
\label{eq:phi}
\xi
&=\arcsin(e^{-\rate}),
\\
\label{eq:delta}
\delta_{\blx}
&=\tfrac{1}{\blx\cot\xi}\ln \left(\sqrt{\tfrac{\pi \blx}{2}}\sin 2\xi\right),
%&=\tfrac{\ln \left(\sqrt{2\pi \blx}\sin \xi \cos\xi\right)}{(\blx-1)\cot\xi+\tan\xi},
\end{align}
then the rate corresponding to the cone angle \(\theta_{\blx}\) 
at the block length \(\blx\) is \(\rate+\bigo{\tfrac{\ln^{2}\blx}{\blx^{2}}}\).
In other words, we get a fixed-rate by changing the cone angle by an additive factor proportional 
to \(\tfrac{\ln \blx}{\blx}\).
In order to obtain the exact asymptotic expressions for the 
upper and lower bounds to the error probability given in \eqref{eq:ShannonsBounds}
via \eqref{eq:ShannonLowerBound-EA} and \eqref{eq:ShannonUpperBound-EA} at 
a fixed-rate \(\rate\), we will apply Taylor's expansion. 
To that end, we first calculate the derivatives of \(\GX\) and
\(\SGEX\).
As a result of  \eqref{eq:ShannonGF}, we have
\begin{align}
\notag
\der{}{\theta}\GX
&=\tfrac{1}{2}\left(-\tfrac{\sqrt{\costc}}{\sigma}\sin\theta-\tfrac{(\sfrac{\costc}{\sigma^{2}})\sin\theta \cos\theta}{\sqrt{(\sfrac{\costc}{\sigma^{2}})\cos^{2}\theta+4}}
\right)
\\
\notag
&=-\tfrac{(\sfrac{\sqrt{\costc}}{\sigma})\sin\theta }{\sqrt{(\sfrac{\costc}{\sigma^{2}})\cos^{2}\theta+4}}\GX.
\end{align}
Then using \eqref{eq:ShannonExp}, we get
\begin{align}
\notag
\der{}{\theta}\SGEX
&=\tfrac{\sqrt{\costc}}{2\sigma}\left[
-\cos\theta\der{}{\theta}\GX+\sin\theta \GX
\right]
-\tfrac{1}{\GX}\der{}{\theta}\GX-\cot\theta
%\\
%\notag
%&=\tfrac{\sqrt{\costc}}{2\sigma}\sin\theta
%(\tfrac{(\sfrac{\sqrt{\costc}}{\sigma})\cos\theta}{\sqrt{(\sfrac{\costc}{\sigma^{2}})\cos^{2}\theta+4}}
%+1)\GX
%+\tfrac{(\sfrac{\sqrt{\costc}}{\sigma})\sin\theta }{\sqrt{(\sfrac{\costc}{\sigma^{2}})\cos^{2}\theta+4}}
%-\cot\theta
\\
\notag
&=\tfrac{(\sfrac{\sqrt{\costc}}{\sigma})\sin\theta}{\sqrt{(\sfrac{\costc}{\sigma^{2}})\cos^{2}\theta+4}}\GX^{2}
+\tfrac{(\sfrac{\sqrt{\costc}}{\sigma})\sin\theta}{\sqrt{(\sfrac{\costc}{\sigma^{2}})\cos^{2}\theta+4}}
-\cot\theta
\\
\notag
&=\tfrac{\sqrt{\costc}}{\sigma}\sin\theta \GX -\cot\theta.
\end{align}
Thus \eqref{eq:smallangle} and the Taylor's expansion imply
\begin{align}
\notag
\SGEX(\theta_{\blx})
%&=\SGEX(\xi)+\left.\der{}{\theta}\SGEX(\theta)\right\vert_{\theta=\xi}\delta_{\blx}+\bigo{\delta_{\blx}^{2}},
%\\
%\notag
&=\SGEX(\xi)+\left(\tfrac{\sqrt{\costc}}{\sigma}\sin\xi \GX(\xi) -\cot\xi\right)\delta_{\blx}+\bigo{\delta_{\blx}^{2}}.
\end{align}
Invoking  \eqref{eq:delta}, we get the following asymptotic expression 
\begin{align}
\label{eq:ShannonBound-EA-1}
\tfrac{e^{-\blx \SGEX(\theta_{\blx})}}{\sqrt{\blx}}
&\sim\tfrac{e^{-\blx \SGEX(\xi)}}{\sqrt{\blx}}
\left(\sqrt{\tfrac{\pi}{2} \blx}\sin 2\xi \right)^{1-\frac{\sqrt{\costc}}{\sigma}\frac{\sin\xi \GX(\xi)}{\cot\xi}}.
%\\
%
%&=\tfrac{e^{-\blx \SGEX(\xi)}}{\blx^{\frac{1}{2}\frac{\sqrt{\costc}}{\sigma}\frac{\sin\xi \GX(\xi)}{\cot\xi} }}
%\left(\sqrt{\tfrac{\pi}{2}}\sin 2\xi \right)^{1-\frac{\sqrt{\costc}}{\sigma}\frac{\sin\xi \GX(\xi)}{\cot\xi}}.
\end{align}
On the other hand, invoking \eqref{eq:ShannonGF} in \eqref{eq:ShannonExp},
we get
\begin{align}
\notag
\SGEX(\xi)
&=\tfrac{\costc}{4\sigma^{2}}
\left[2-\cos^{2}\xi+\cos^{2}\xi\sqrt{1+\tfrac{4\sigma^2}{\costc}\tfrac{1}{\cos^{2}\xi}}\right]
-\ln\sin\xi\tfrac{\sqrt{\costc}}{2\sigma}\left(\cos\xi
+\sqrt{\cos^{2}\xi+\tfrac{4\sigma^{2}}{\costc}} \right)
\\
\notag
&=\tfrac{\costc}{4\sigma^{2}}
\left[1+\sin^{2}\xi+(1-\sin^{2}\xi)\sqrt{1+\tfrac{4\sigma^2}{\costc}\tfrac{1}{1-\sin^{2}\xi}}\right]
+\ln\tfrac{1}{\sin\xi} \tfrac{\sqrt{\costc}}{2\sigma}
\left(\sqrt{\cos^{2}\xi+\tfrac{4\sigma^{2}}{\costc}}-\cos\xi\right)
\\
\notag
&=\tfrac{\costc}{4\sigma^{2}}
\left[1+\sin^{2}\xi+(1-\sin^{2}\xi)\sqrt{1+\tfrac{4\sigma^2}{\costc}\tfrac{1}{1-\sin^{2}\xi}}\right]
+\tfrac{1}{2}\ln\tfrac{1}{\sin^{2}\xi} 
\left(
1+\tfrac{\costc\cos^{2}\xi}{2\sigma^{2}}
-\tfrac{\costc\cos^{2}\xi}{2\sigma^{2}}
\sqrt{1+\tfrac{4\sigma^{2}}{\costc \cos^{2}\xi}}\right).
\end{align}
Then using \eqref{eq:phi}, we get the closed-form expression for 
the sphere packing exponent given in
\cite[{(\ref*{D-eq:eg:SGauss-spe})}]{nakiboglu18D}:
\begin{align}
\notag
\SGEX(\xi)
&=\tfrac{\costc}{4\sigma^{2}}
\left[1+\tfrac{1}{e^{2\rate}}+(1-\tfrac{1}{e^{2\rate}})\sqrt{1+\tfrac{4\sigma^2}{\costc}\tfrac{e^{2\rate}}{e^{2\rate}-1}}\right]
+\tfrac{1}{2}\ln\left[e^{2\rate}-\tfrac{\costc (e^{2\rate}-1)}{2\sigma^{2}}
\left(\sqrt{1+\tfrac{4\sigma^2e^{2\rate}}{\costc (e^{2\rate}-1)}}-1\right)\right]
\\
\label{eq:ShannonBound-EA-2}
&=\spe{\rate,\Wm,\costc}.
\end{align}
On the other hand, \eqref{eq:ShannonGF} implies
\begin{align}
\notag
\tfrac{\sqrt{\costc}}{\sigma}\tfrac{\sin\xi \GX(\xi)}{\cot\xi} 
&=\tfrac{\costc\sin^{2}\xi}{2\sigma^{2}}\left(1+\sqrt{1+\tfrac{4\sigma^{2}}{\costc}\tfrac{1}{1-\sin^{2}\xi}}\right)
\\
\notag
&=\tfrac{2\sin^{2}\xi}{1-\sin^{2}\xi}\left(\sqrt{1+\tfrac{4\sigma^{2}}{\costc}\tfrac{1}{1-\sin^{2}\xi}}-1\right)^{-1}.
\end{align}
Invoking first \eqref{eq:phi}, then 
\eqref{eq:gaussian:parametric-rno}, and finally
\eqref{eq:gaussian:SPE-derivative},
we get
\begin{align}
\notag
\tfrac{\sqrt{\costc}}{\sigma}\tfrac{\sin\xi \GX(\xi)}{\cot\xi} 
&=\tfrac{2}{e^{2\rate}-1}
\left(\sqrt{1+\tfrac{4\sigma^{2}}{\costc}\tfrac{e^{2\rate}}{e^{2\rate}-1}}-1\right)^{-1}
\\
\notag
&=\tfrac{1}{\rns}
\\
\notag
&=1-\left.\pder{}{\mS}\spe{\mS,\Wm,\costc}\right\vert_{\mS=\rate}.
\end{align}
Then
\eqref{eq:ShannonBound-EA-1} and 
\eqref{eq:ShannonBound-EA-2} imply
\begin{align}
\notag
%\label{eq:ShannonBound-EA-4}
\tfrac{e^{-\blx \SGEX(\theta_{\blx})}}{\sqrt{\blx}}
&\sim   
\left(\sqrt{\tfrac{\pi}{2}}\sin 2\xi \right)^{\dspe{\rate}}
\blx^{\frac{\dspe{\rate}-1}{2}}e^{-\blx \spe{\rate,\Wm,\costc}}
\end{align}
where  \(\dspe{\rate}=\left.\pder{}{\mS}\spe{\mS,\Wm,\costc}\right\vert_{\mS=\rate}\).
Then \cite[(3)]{shannon59}, i.e.,\eqref{eq:ShannonsBounds},
imply both \eqref{eq:nonsingular} and \eqref{eq:rSPB}, because
\eqref{eq:ShannonsBounds} 
is nothing but \eqref{eq:nonsingular} and \eqref{eq:rSPB}
for certain multiplicative constants.

\section{Proof of \eqref{eq:bound-on-the-third-moment}}\label{sec:ProofOfeq:bound-on-the-third-moment} 
%\begin{proof}[Proof of \eqref{eq:bound-on-the-third-moment}]
Note that for any three random variables 
\(\sta_{1}\), \(\sta_{2}\), and \(\sta_{3}\), 
\begin{align}
\notag
\EX{\abs{\sum\nolimits_{\ind=1}^{3}\sta_{\ind}}^{3}}
&\leq\sum\nolimits_{\ind=1}^{3}\EX{\abs{\sta_{\ind}}^3}
+6\EX{\abs{\sta_{1}\sta_{2}\sta_{3}}}
+3\sum\nolimits_{\ind,\jnd:\ind\neq \jnd}
\EX{\abs{\sta_{\ind}\sta_{\jnd}^{2}}}.
\end{align}
On the other hand, as a result of H\"{o}lder's inequality we have
\begin{align}
\notag
\EX{\abs{\sta_{1}\sta_{2}\sta_{3}}}
&\leq
\EX{\abs{\sta_{1}}^{3}}^{\frac{1}{3}}
\EX{\abs{\sta_{2}}^{3}}^{\frac{1}{3}}
\EX{\abs{\sta_{3}}^{3}}^{\frac{1}{3}},
\\
\notag
\EX{\abs{\sta_{\ind}\sta_{\jnd}^{2}}}
&\leq \EX{\abs{\sta_{\ind}}^{3}}^{\frac{1}{3}}
\EX{\abs{\sta_{\jnd}}^{3}}^{\frac{2}{3}}.
\end{align}
Furthermore, since the geometric mean is upper bounded by 
the arithmetic mean, we also have
\begin{align}
\notag
\EX{\abs{\sta_{1}}^{3}}^{\frac{1}{3}}
\EX{\abs{\sta_{2}}^{3}}^{\frac{1}{3}}
\EX{\abs{\sta_{3}}^{3}}^{\frac{1}{3}}
&\leq \tfrac{1}{3}\sum\nolimits_{\ind=1}^{3}\EX{\abs{\sta_{\ind}}^3},
\\
\notag
\EX{\abs{\sta_{\ind}}^{3}}^{\frac{1}{3}}
\EX{\abs{\sta_{\jnd}}^{3}}^{\frac{2}{3}}
&\leq
\tfrac{1}{3}
\EX{\abs{\sta_{\ind}}^{3}}
+\tfrac{2}{3}
\EX{\abs{\sta_{\jnd}}^{3}}.
\end{align}
Thus
\begin{align}
\notag
\EX{\abs{\sum\nolimits_{\ind=1}^{3}\sta_{\ind}}^{3}}
&\leq\sum\nolimits_{\ind=1}^{3}\EX{\abs{\sta_{\ind}}^3}
+2\sum\nolimits_{\ind=1}^{3}\EX{\abs{\sta_{\ind}}^3}
+\sum\nolimits_{\ind,\jnd:\ind\neq \jnd}
\left(\EX{\abs{\sta_{\ind}}^{3}}
+2 \EX{\abs{\sta_{\jnd}}^{3}} \right)
\\
\notag
&\leq 9 \sum\nolimits_{\ind=1}^{3}\EX{\abs{\sta_{\ind}}^3}.
\end{align}
%\end{proof}
\section*{Acknowledgment}
The author would like to thank Hao-Chung Cheng both 
for numerous inspiring discussions on the sphere packing bound and its refinements, 
which helped the author to simplify and improve the statement of Lemma \ref{lem:HTBE}, 
and for his feedback on the manuscript.
The author would also like to thank the reviewer for pointing out \cite{holevo02}
and for his feedback on the manuscript.

%\IEEEtriggeratref{23}
\bibliographystyle{unsrt} 
%\bibliographystyle{plain} 
%\bibliographystyle{ieeetr} 
%\bibliography{main}
\newcommand{\noopsort}[1]{} \newcommand{\printfirst}[2]{#1}
  \newcommand{\singleletter}[1]{#1} \newcommand{\switchargs}[2]{#2#1}

%\footnotesize
%\setcounter{tocdepth}{3}	
%{\hypersetup{hidelinks=true}\tableofcontents}
\end{document}